\documentclass[submission, Phys]{SciPost}
\pdfsuppresswarningpagegroup=1
\pdfminorversion=6
\usepackage{verbatim}
\usepackage{hyperref}
\def\doi{http://dx.doi.org/}
\usepackage{amsmath,braket,mathdots,mathtools,amssymb,xcolor,calligra,color}
\usepackage{amsthm} % JJ
\usepackage{bbm}  
\usepackage{fancyhdr}
\usepackage{float}
\usepackage{bm}
\usepackage{authblk}
\usepackage{color}
\usepackage{authblk}
\usepackage{cite} % JJ
\usepackage{graphicx}
%\usepackage{qtree}

%tikz
\usepackage{tikz}
\usetikzlibrary{spy}
\usetikzlibrary{patterns}
\usetikzlibrary{calc,arrows,positioning}
\usetikzlibrary{arrows,shadows}
\usetikzlibrary{decorations.pathmorphing}	% For Feynman 
\usetikzlibrary{decorations.markings}
\usepackage{pgfplots}

\usepackage{pgfplotstable}
\usepackage{filecontents}
\usepackage[utf8]{inputenc}    

\newcommand{\be}{\begin{equation}}
\newcommand{\ee}{\end{equation}}
\newcommand{\bea}{\begin{eqnarray}}
\newcommand{\eea}{\end{eqnarray}}

\def\tr{\text{tr}\,}

\def\XXint#1#2#3{{\setbox0=\hbox{$#1{#2#3}{\int}$}
     \vcenter{\hbox{$#2#3$}}\kern-.5\wd0}}

\def\nn{\nonumber\\}

\def\fr#1{(\ref{#1})}
\def\ontop#1#2{\genfrac{}{}{0pt}{}{#1}{#2}}

\let\OLDthebibliography\thebibliography
\renewcommand\thebibliography[1]{
  \OLDthebibliography{#1}
  \setlength{\parskip}{0pt}
  \setlength{\itemsep}{0pt plus 0.3ex}
}

\lhead{}
\rhead{}

\newcommand{\sign}{\,\text{sgn}\,}

\newcommand{\pf}{\,{\rm pf}}
\newcommand{\Pf}{\,{\rm Pf}}
\newcommand{\Det}{\,{\rm Det}}

\newcommand{\1}{\,\pmb{1}}
\newcommand{\D}[1]{\text{d}#1}

\newtheorem{theorem}{Theorem}

\newtheorem{property}{Lemma}

\hypersetup{
     colorlinks=true,
     linkcolor=blue,
     filecolor=blue,
     citecolor = green,      
     urlcolor=cyan,
     }

\begin{document}
%%%%%%%%%%%%%%%%%%%%%%%%%%%%%%%%%%%%%%%
\begin{center}
{\Large\bf Out-of-equilibrium dynamics of the XY spin chain\\ from form factor expansion}
\end{center}
%%%%%%%%%%%%%%%%%%%%%%%%%%%%%%%%%%%%%%%
\begin{center}
Etienne Granet\textsuperscript{1$\star$}, Henrik Dreyer\textsuperscript{1} and Fabian H.L. Essler\textsuperscript{1},
\end{center}
\begin{center}
{\bf 1} The Rudolf Peierls Centre for Theoretical Physics, Oxford
University, Oxford OX1 3PU, UK\\
${}^\star$ {\small \sf etienne.granet@physics.ox.ac.uk}
\end{center}
\date{\today}

\section*{Abstract}
{\bf We consider the XY spin chain with arbitrary time-dependent
  magnetic field and anisotropy. We argue that a certain subclass of
  Gaussian states, called Coherent Ensemble (CE) following \cite{DBG},
  provides a natural and unified framework for out-of-equilibrium
  physics in this model. We show  that \textit{all} correlation
  functions in the CE can be computed using form factor expansion
  and expressed in terms of Fredholm determinants. In particular, we
  present exact out-of-equilibrium expressions in the thermodynamic 
  limit for the previously unknown order parameter $1$-point function,
  dynamical $2$-point function and equal-time $3$-point function. 
}

\tableofcontents

%
%\vspace{10pt}
%\noindent\rule{\textwidth}{1pt}
%\tableofcontents\thispagestyle{fancy}
%\noindent\rule{\textwidth}{1pt}
%\vspace{10pt}

\renewcommand\Affilfont{\fontsize{9}{10.8}\itshape}

\section{Introduction}
Quantum integrable models are special models of many-body quantum
physics with both a rich phenomenology and an exact Bethe-ansatz
solution. But despite their ``exact solvability", obtaining
closed-form expressions in the thermodynamic limit for correlations of
local observables in and out-of-equilibrium remains a formidable
challenge. The standard approach to these problems \cite{vladbook}
consists in expressing such correlation functions as
form factor sums over the full Hilbert space. In interacting
models, this task has been achieved only in certain parameter
regimes, such as ground state correlations at late times and large
  distances \cite{kitanineetalg,kitanineetcformfactor,kozlowskimaillet,kozlowski4},
equal-time finite temperature correlations at short or large distances
\cite{suzuki85,klumper92,patuklumper,kozlowskimailletslvanov,kozlowskimailletslvanov2,Damerau07,Boos08,Trippe10,Dugave13,Dugave14},
full correlations in systematic strong coupling expansions
  \cite{GE20,GE21} or expansions in low densities of excitations \cite{EK09,PT10,G21}, and also in some 
particularly simple interacting models \cite{pozsgay,pozsgay2,solvable}. 
A number of numerical, approximate, field theory and other approaches
aimed at facilitating form factor summations have been developed
over the last decade and a half \cite{CC06,cauxcalabreseslavnov,EK08,CE13,PC14,deNP15,deNP16,CADY16,BCDF16,Doyon18,DNP18,panfil20,cortescuberopanfil1,cortescuberopanfil2,kozlowski1,CCYS21,NBMP21}.  

A subclass of quantum integrable models has arguably been of
particular importance, namely theories that can be formulated in
terms of free fermions. Examples include the Lieb-Liniger model
at infinite coupling \cite{LiebLiniger63a,girardeau} and the XY model
in a field \cite{xymodel1,BMD70,xymodel2,BM71,xymodel3,xymodel4}. They
constitute the point of departure and testbed of any field theory or
exact method applying to the interacting case. But despite their free
fermion formulation, the problem of obtaining analytic expressions for
general in- and out-of-equilibrium correlations in the thermodynamic
limit is still unsolved for some of these models. In fact, the
computation of in- and out-of-equilibrium correlations can be said to
be ``fully'' solved only for the models with a $U(1)$ symmetry
such as the Lieb-Liniger model at infinite coupling and the XX
chain. In this case, there exist integral, Pfaffian or Fredholm
determinant representations for all static and dynamical correlations
in arbitrary eigenstates
\cite{KS90,slavnov90,izergin87,Its93,kojima97,niemeijer,colomo,xxcorr1,leclair1996,GKS18},
as well as for the full out-of-equilibrium time evolution of
correlations after quantum quenches \cite{niemeijer,denardis15}. The 
exact tractability of the form factor expansion in these cases
originates from the Cauchy determinant structure of the form factors
\cite{KS90}. 

However, there are still unknown correlation functions in the
thermodynamic limit of free fermionic models
without $U(1)$ symmetry such as the Transverse Field Ising Model and
more generally the XY model in a field, despite a vast literature on
the subject, see e.g.
\cite{xymodel1,xymodel2,xymodel3,Perk80,Perk84,Perk09,MPS83a,MPS83b,izergin2000,doyongamsa,Kapitonov03,SPS:04,IJK05,Cherng06,cd-07,RSMS09,RSMS10,CEF1,CEF2,CEF3,FE13,SE12,essler12,Foini11,Foini12,BRI12,e-13,ia-13,klich14,Schuricht16,Groha18,Steinberg,GFE20,GLC20,GIZ20,Hodsagi20,ares}. While
the quantities that are local in the underlying fermions (such as any
correlation of the transverse magnetization, and the static $2n$-point
functions of the order parameter) can be computed efficiently with
Pfaffian representations arising from Wick's theorem\cite{xymodel2,xymodel3},
there are no known exact representations for expectation values in
general Gaussian states in the thermodynamic limit for the quantities
that are non-local in the underlying fermions (such as static
$(2n+1)$-point functions of the order parameter, or any dynamical
correlation of the order parameter). What makes the form factor
expansion difficult to compute in these cases despite the model being
free is that the form factors of the order parameter are not of Cauchy
form. As a consequence, in terms of difficulty of the calculation
these free models without $U(1)$ symmetry can be considered as in
certain ways intermediate cases between free $U(1)$-symmetric and
interacting models.

In contrast to the situation in the thermodynamic limit 
Pfaffian representations are readily available in finite
systems with open boundary conditions, see
e.g. \cite{derzkho}. Similarly, for finite systems with periodic
boundary conditions one can invoke clustering properties
\cite{xymodel3,xymodel4,CEF2,Foini12,FE13} to obtain approximate
representations. However, these representations typically scale with
system size and as a far as we are aware their thermodynamic limits
are generally not known. 

%A Fredhom determinant expression for the order parameter in the TFIM following a sudden change of magnetic field was found only very recently \cite{DBG}, and no exact expression is known for the order parameter static and dynamical correlations out-of-equilibrium, and neither for dynamical correlations in equilibrium. 

In this work we show how to perform the form factor expansion for
expectation values of \textit{arbitrary} operators out of equilibrium. In particular, we derive the full time evolution 
of the order parameter one-point function, dynamical two-point
function and static three-point function under arbitrary
time-dependent ramps of the magnetic field and the
anisotropy. We also derive alternative Fredholm determinant
expressions for the full counting statistics of the transverse
magnetization and order parameter two-point function using form 
factor expansions rather than Wick's theorem, hence with a method that is
more generalizable to interacting models.   
%As a by-product we obtain alternative Fredholm determinant expressions for the Generalized Gibbs Ensemble (GGE) averages of the static correlations of the order parameter, as well as of the full counting statistics of the transverse magnetization. 
This puts the XY model in a field on the same footing as the models
with $U(1)$ symmetry with regards to out-of-equilibrium physics. 

The technique we use to obtain these results is as follows. We define
the \emph{Coherent Ensemble} (CE) as the expectation value of operators
within coherent states, which are superpositions of all
  zero-momentum pair states, weighted by amplitudes which are parameters of the CE. The
crucial property of these coherent states is that they retain
  their structure when expressed in terms of eigenstates of
  the XY Hamiltonian with different values of magnetic field
$h$ and anisotropy $\gamma$, as observed in \cite{DBG} for the
Ising model. This has two consequences: (i) The time
evolution of the initial state with \textit{any} variation of magnetic
field $h(t)$ and anisotropy $\gamma(t)$ can be written as a coherent
state with a certain amplitude; (ii)
%{\color{red}This shows that the CE can be considered as the
%analog of Generalized Gibbs Ensembles for homogeneous
%out-of-equilibrium physics.{\color{green} I don't understand this
%  remark -- I would just drop it.}} 
Any correlation function in the CE
can be recast as a correlation function in an elementary (classical)
Hamiltonian such as $-\sum_j \sigma^x_j\sigma^x_{j+1}$ for
$h=0,\gamma=1$ or $-\sum_j \sigma^z_j$ for $h=\infty$. At these values
of parameters, the form factors of the order parameter are exactly
Cauchy determinants, which enables one to use the techniques
developed for $U(1)$ symmetric models and obtain Fredholm
determinant expressions in the thermodynamic limit.   

We note that these coherent states appeared more or less explicitly in
different papers in the literature
\cite{izergin2000,Kapitonov03,Dziarmaga05,RSMS09,CEF2}. Most notably in
\cite{izergin2000,Kapitonov03} they were used to obtain a Fredholm determinant
expression for the two-point function of the order parameter at
equilibrium at finite temperature. But to the best of our knowledge
their utility in deriving Fredholm determinant representations
for generic out-of-equilibrium correlators has not been realized
prior to \cite{DBG} and the present work.

The paper is organized as follows. We start by introducing coherent
states in Section \ref{coherent}, and explain why out-of-equilibrium
physics can be written as a CE. Then in Section \ref{sec:ce} we show
that arbitrary expectation values and correlation functions can be
computed within the CE. Their derivation relies on a number of Lemmas
for form factors and summation formulas that are gathered and proven
in Appendix \ref{lemmas}. Finally, in Section \ref{app} we apply
our results to a number of examples including the Kibble-Zurek
mechanism, Floquet physics and quantum quench physics.

\section{Coherent Ensemble in the XY model\label{coherent}}
\subsection{The XY model in a field}
The Hamiltonian of the XY model on a system of size $L$, in a magnetic field $h$ and with anisotropy $\gamma$ is \cite{xymodel1}
\begin{equation}\label{hamiltonian}
H(h,\gamma)=-\sum_{j=1}^L \frac{1+\gamma}{2}\sigma^x_j \sigma^x_{j+1}+\frac{1-\gamma}{2}\sigma^y_j \sigma^y_{j+1}+h\sigma^z_j\,.
\end{equation}
We impose periodic boundary conditions $L+1\equiv 1$. The
diagonalisation of $H(h,\gamma)$ is reviewed in Appendix
\ref{xy}. The Hamiltonian splits into two sectors $H(h,\gamma)=H^{\rm
  NS}(h,\gamma)\oplus H^{\rm R}(h,\gamma)$ called Neveu-Schwarz (NS)
and Ramond (R) sector respectively
\begin{equation}
H^{\rm NS,R}(h,\gamma)=\sum_{k\in{\rm NS,R}} \varepsilon_{h\gamma}(k)
\left(\alpha_{h\gamma;k}^\dagger
\alpha_{h\gamma;k}-\frac{1}{2}\right)\ ,
\end{equation}
where the fermions $\alpha_{h\gamma;k}$ satisfy canonical anti-commutation relations $\{\alpha_{h\gamma;k},\alpha_{h\gamma;p}^\dagger\}=\delta_{k,p}$. Here, NS and R denote the sets
\begin{equation}
\begin{aligned}
{\rm NS}&=\left\{\frac{2\pi(n+1/2)}{L}, n=-L/2,...,L/2-1 \right\}\\
{\rm R}&=\left\{\frac{2\pi n}{L}, n=-L/2,...,L/2-1 \right\}\,,
\label{NSR}
\end{aligned}
\end{equation}
and $\varepsilon_{h\gamma}(k)$ denotes the energy of mode $k$
\begin{equation}
\varepsilon_{h\gamma}(k)=\begin{cases}
2\sqrt{(h-\cos k)^2+\gamma^2 \sin^2 k} & \text{if }k\neq 0\\
-2(1-h)& \text{if }k= 0
\end{cases}\,.
\end{equation}
In these conventions, $\sigma^z$ (resp. $\sigma^x$) is local (resp. non-local) in the underlying Jordan-Wigner fermions\footnote{We note that compared to the previous paper in Ising \cite{DBG} the notations for $\sigma^x,\sigma^z$ have been switched, to match usual conventions in the quantum quench literature.}, see Appendix \ref{xy}. 

Denoting by $|0\rangle_{h\gamma}^{\rm NS,R}$ the respective vacuum states
annihilated by the $\alpha_{h\gamma;k}$'s in the NS and R
sectors, the eigenstates of the model are then 
\begin{equation}\label{eqeqeq}
\begin{aligned}
|\pmb{k}\rangle_{h,\gamma}&=\alpha_{h\gamma;k_1}^\dagger...\alpha_{h\gamma;k_N}^\dagger|0\rangle_{h\gamma}^{\rm NS}\,,\qquad \pmb{k}\subset {\rm NS}\,,\qquad N\text{ even}\\
|\pmb{k}\rangle_{h,\gamma}&=\alpha_{h\gamma;k_1}^\dagger...\alpha_{h\gamma;k_N}^\dagger|0\rangle_{h\gamma}^{\rm R}\,,\qquad \pmb{k}\subset {\rm R}\,,\qquad N\text{ odd}\,.
\end{aligned}
\end{equation}
In these definitions we choose an ordering such that $k_i<k_j$ if $i<j$ and
$k_i\neq 0,k_j\neq 0$. If $0\in \pmb{k}$ then we choose $k_N=0$. 

For $h>1$, the ground state is $|0\rangle_{h\gamma}^{\rm NS}$. For $0<
h<1$ the two lowest energy states are $|0\rangle_{h\gamma}^{\rm NS}$
and $\alpha_{h\gamma;0}^\dagger|0\rangle_{h\gamma}^{\rm R}$. Their
energy levels are exponentially close in $L$, and in finite size the
true ground state is $|0\rangle_{h\gamma}^{\rm NS}$. The model has two
critical lines $|h|=1,\gamma\neq 0$, and for $\gamma=0, |h|<1$ \cite{xymodel2}. 
The energies of $|0\rangle_{h\gamma}^{\rm NS}$ and
$\alpha_{h\gamma;0}^\dagger|0\rangle_{h\gamma}^{\rm R}$ are given by 
\begin{equation}\label{ensr}
\begin{aligned}
\mathfrak{E}^{\rm NS}_{h\gamma}&=-\sum_{k\in {\rm NS}}\sqrt{(h-\cos k)^2+\gamma^2 \sin^2 k}\\
\mathfrak{E}^{\rm R}_{h\gamma}&=-\sum_{k\in {\rm R}}\sqrt{(h-\cos k)^2+\gamma^2 \sin^2 k}+2|1-h|\1_{h>1}\,.
\end{aligned}
\end{equation}
Here we have defined
\be
\1_{h>1}=\begin{cases}
1 & \text{if } h>1\ ,\\
0 & \text{else}\ .
\end{cases}
\ee

\subsection{Coherent states}
We define ${\rm NS}_+,{\rm R}_+$ as the subsets of NS and R
defined in \fr{NSR} with strictly positive elements. Given
$\pmb{k}\subset {\rm NS}_+$, we define pair states in the NS sector as
the Fock states
\begin{equation}
|\pmb{\bar{k}}\rangle_{h\gamma}=|\pmb{k}\cup(-\pmb{k})\rangle_{h\gamma}\,,
\end{equation}
and given $\pmb{k}\subset {\rm R}_+$, pair states in the R sector as
\begin{equation}
|\pmb{\bar{\bar{k}}}\rangle_{h\gamma}=|\pmb{k}\cup(-\pmb{k})\cup \{0\}\rangle_{h\gamma}\,.
\end{equation}
Following \cite{DBG}, for a complex number $A$ called ``phase" and a
function $f$ called ``amplitude", we introduce coherent states by
\begin{equation}\label{coherent}
\begin{aligned}
\Psi_{h\gamma}^{\rm NS}(A,f)&\equiv A\sum_{\pmb{k}\subset {\rm NS}_+} \left[\prod_{k\in\pmb{k}}f(k)\right]|\pmb{\bar{k}}\rangle=A\prod_{k\in {\rm NS}_+}\left[1+f(k)\alpha^\dagger_{h\gamma;-k}\alpha^\dagger_{h\gamma;k}\right]|0\rangle_{h\gamma}^{\rm NS}\\
\Psi_{h\gamma}^{\rm R}(A,f)&=A\sum_{\pmb{k}\subset {\rm R}_+} \left[\prod_{k\in\pmb{k}}f(k)\right]|\pmb{\bar{\bar{k}}}\rangle=A\prod_{k\in {\rm R}_+}\left[1+f(k)\alpha^\dagger_{h\gamma;-k}\alpha^\dagger_{h\gamma;k}\right]\alpha^\dagger_{h\gamma;0}|0\rangle_{h\gamma}^{\rm R}\,.
\end{aligned}
\end{equation}
In these definitions the amplitude $f$ needs to be defined only on
$[0,\pi]$. However we will consider it as an odd function defined on
$[-\pi,\pi]$. 

The key observation made in \cite{DBG} for the transverse field Ising chain
is the following relation between coherent states at different
at different parameter values $(h,\gamma)$ and $(\tilde{h},\tilde{\gamma})$:
\begin{theorem}
Let $h,\tilde{h}$ and $\gamma,\tilde{\gamma}$ be arbitrary magnetic
fields and anisotropies respectively. Then we have\label{thm} 
\begin{equation}
\Psi^{\rm NS,R}_{h\gamma}(A,f)=\Psi^{\rm NS,R}_{\tilde{h}\tilde{\gamma}}(\tilde{A},\tilde{f})\,,
\end{equation}
where
\begin{align}
\tilde{A}&=A \prod_{k\in{\rm NS}_+,{\rm
    R}_+}\frac{1+iK_{\tilde{h}\tilde{\gamma};h\gamma}(k)f(k)}{\sqrt{1+K_{\tilde{h}\tilde{\gamma};h\gamma}^2(k)}}\ ,\nn
\tilde{f}(k)&=\frac{iK_{\tilde{h}\tilde{\gamma};h\gamma}(k)+f(k)}{1+iK_{\tilde{h}\tilde{\gamma};h\gamma}(k)f(k)}\ .
\end{align}
Here we have defined
\begin{equation}
K_{\tilde{h}\tilde{\gamma};h\gamma}(k)=\tan
\frac{\theta_k^{\tilde{h}\tilde{\gamma}}-\theta_k^{h\gamma}}{2}\ ,\qquad
e^{i\theta_k^{h\gamma}}=\frac{h-\cos k-i\gamma\sin k}{\sqrt{(h-\cos k)^2+\gamma^2\sin^2 k}}\,.
\end{equation}

\end{theorem}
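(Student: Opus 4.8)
The plan is to reduce the identity to an elementary linear-algebra computation inside each pair of modes $\{k,-k\}$, $k>0$, by means of the Bogoliubov transformation relating the two sets of fermions. The fermions $\alpha_{h\gamma;k}$ and $\alpha_{\tilde h\tilde\gamma;k}$ are both obtained from the Jordan--Wigner fermions $c_k$ (see Appendix~\ref{xy}) by a rotation in the plane $(c_k,c^\dagger_{-k})$ through the angles $\theta^{h\gamma}_k/2$ and $\theta^{\tilde h\tilde\gamma}_k/2$; composing them, $\alpha_{\tilde h\tilde\gamma;k}$ is obtained from $\alpha_{h\gamma;k}$ by a rotation through $(\theta^{\tilde h\tilde\gamma}_k-\theta^{h\gamma}_k)/2$, so that up to the phase conventions of Appendix~\ref{xy}
\begin{equation}
\alpha_{\tilde h\tilde\gamma;\pm k}=\frac{1}{\sqrt{1+K_k^2}}\Big(\alpha_{h\gamma;\pm k}\mp iK_k\,\alpha^\dagger_{h\gamma;\mp k}\Big)\,,\qquad K_k\equiv K_{\tilde h\tilde\gamma;h\gamma}(k)\,,
\end{equation}
where one uses $\theta^{h\gamma}_{-k}=-\theta^{h\gamma}_k$, hence $K_{\tilde h\tilde\gamma;h\gamma}(-k)=-K_k$. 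Since this transformation mixes $k$ only with $-k$, the products over $k$ in \fr{coherent} factorise and it suffices to work inside the two-dimensional even-parity subspace of each pair, spanned by $|0\rangle$ and $\alpha^\dagger_{-k}\alpha^\dagger_{k}|0\rangle$.

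Next I would make the change of basis explicit in this subspace. From the Bogoliubov relation one gets $|0\rangle^{\rm NS}_{h\gamma}=\prod_{k\in{\rm NS}_+}(1+K_k^2)^{-1/2}\big(1+iK_k\,\alpha^\dagger_{\tilde h\tilde\gamma;-k}\alpha^\dagger_{\tilde h\tilde\gamma;k}\big)|0\rangle^{\rm NS}_{\tilde h\tilde\gamma}$, verified by checking that it is annihilated by every $\alpha_{h\gamma;k}$; equivalently, in the even-parity basis $\{|0\rangle,\alpha^\dagger_{\tilde h\tilde\gamma;-k}\alpha^\dagger_{\tilde h\tilde\gamma;k}|0\rangle\}$ of pair $k$ the pair-$k$ component of $|0\rangle^{\rm NS}_{h\gamma}$ is $(1+K_k^2)^{-1/2}(1,iK_k)^{\rm T}$, and applying $\alpha^\dagger_{h\gamma;-k}\alpha^\dagger_{h\gamma;k}$ and using the relations above, $\alpha^\dagger_{h\gamma;-k}\alpha^\dagger_{h\gamma;k}|0\rangle^{\rm NS}_{h\gamma}\mapsto(1+K_k^2)^{-1/2}(iK_k,1)^{\rm T}$. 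Hence the pair-$k$ factor of $\Psi^{\rm NS}_{h\gamma}(A,f)$ is, as a column vector, $(1+K_k^2)^{-1/2}\big[(1,iK_k)^{\rm T}+f(k)(iK_k,1)^{\rm T}\big]$, which equals $(1+K_k^2)^{-1/2}\,(1+iK_kf(k))\,\big(1,\tilde f(k)\big)^{\rm T}$ with $\tilde f(k)=\frac{iK_k+f(k)}{1+iK_kf(k)}$. Multiplying over all $k\in{\rm NS}_+$ (the operators commute) and absorbing the scalars into $\tilde A=A\prod_{k\in{\rm NS}_+}\frac{1+iK_kf(k)}{\sqrt{1+K_k^2}}$ reproduces exactly $\Psi^{\rm NS}_{\tilde h\tilde\gamma}(\tilde A,\tilde f)$. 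The R sector is identical for the paired modes in ${\rm R}_+$, and one checks that the oddly-extended $\tilde f$ is consistent since $K_{-k}=-K_k$ and $f(-k)=-f(k)$ give $\tilde f(-k)=-\tilde f(k)$.

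What remains, and is the part requiring care, is the isolated zero mode of the R sector carried by $\alpha^\dagger_{h\gamma;0}$ (and, for bookkeeping, the self-conjugate mode at $k=-\pi$, which never appears in the pair states of \fr{coherent}). Since $\sin k=0$ there, $\theta^{h\gamma}_0\in\{0,\pi\}$, and with the conventions of \fr{ensr}---in which $\varepsilon_{h\gamma}(0)$ is simply allowed to change sign rather than the zero mode being re-filled into the vacuum---one has $\alpha_{h\gamma;0}=\alpha_{\tilde h\tilde\gamma;0}$, so $|0\rangle^{\rm R}_{h\gamma}$ and $|0\rangle^{\rm R}_{\tilde h\tilde\gamma}$ agree on that mode and $\alpha^\dagger_{h\gamma;0}|0\rangle^{\rm R}_{h\gamma}=\alpha^\dagger_{\tilde h\tilde\gamma;0}|0\rangle^{\rm R}_{\tilde h\tilde\gamma}$ once the paired part has been matched; no extra factor appears in $\tilde A$. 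The only genuine obstacle is therefore not conceptual but the careful tracking of fermionic orderings and of the phases in the Bogoliubov transformation, so that the scalar prefactor comes out as exactly $\prod_k\frac{1+iK_kf(k)}{\sqrt{1+K_k^2}}$ and the signs in $\tilde f$ are correct; everything else is the elementary $2\times2$ algebra above.
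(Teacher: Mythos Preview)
Your proof is correct and somewhat more direct than the paper's. The paper proceeds by inserting a complete set of $(\tilde h,\tilde\gamma)$-eigenstates into $\Psi^{\rm NS}_{h\gamma}(A,f)$, invoking the overlap formula of Lemma~\ref{overlap} for ${}^{\rm NS}_{\tilde h\tilde\gamma}\langle\pmb{q}|\pmb{\bar r}\rangle^{\rm NS}_{h\gamma}$, and then performing the combinatorial sum over $\pmb{r}\subset{\rm NS}_+$, which factorises into the product $\prod_{k}(1+iK_kf(k))$ and yields $\tilde A,\tilde f$. You instead exploit from the outset that the Bogoliubov rotation only couples $k$ with $-k$, so the coherent state factorises over pairs and the whole statement reduces to a $2\times2$ computation in each pair subspace. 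The ingredients are the same---the vacuum relation \fr{vaccu} and the Bogoliubov rotation---but your organisation makes the M\"obius structure $f\mapsto\tilde f$ transparent and bypasses the need to quote the full overlap Lemma~\ref{overlap} and then re-factorise. Your handling of the zero mode in the R sector and of the odd extension $\tilde f(-k)=-\tilde f(k)$ is also correct.
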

\begin{proof}
The proof is similar to that in \cite{DBG} for the Ising model. Expanding the coherent state in a basis of energy
  eigenstates gives
\begin{equation}
\Psi^{\rm NS}_{h\gamma}(A,f)=A\sum_{\pmb{q}\subset {\rm
    NS}}\sum_{\pmb{r}\subset {\rm NS}_+}\left[\prod_{r\in
    \pmb{r}}f(r)\right]|\pmb{q}\rangle_{\tilde{h}\tilde{\gamma}}^{\rm
  NS}\ \ {}^{\rm NS}_{\tilde{h}\tilde{\gamma}}\langle \pmb{q}|\pmb{\bar{r}}\rangle^{\rm NS}_{h\gamma}\,.
\end{equation}
The overlaps ${}^{\rm NS}_{\tilde{h}\tilde{\gamma}}\langle
\pmb{q}|\pmb{\bar{r}}\rangle^{\rm NS}_{h\gamma}$ between eigenstates
of $H(h,\gamma)$ at different magnetic fields and anisotropies is given in
Lemma \ref{overlap} in Appendix \ref{lemmas}. Introducing the
short-hand notation $K(k)\equiv K_{\tilde{h}\tilde{\gamma};h\gamma}(k)$ we have
\begin{align}
\Psi^{\rm NS}_{h\gamma}(A,f)&=\frac{A}{\displaystyle\prod_{k\in {\rm
      NS}_+}\sqrt{1+K^2(k)}}\sum_{\pmb{q}\subset {\rm
    NS}_+}\left[\prod_{q\in
    \pmb{q}}\left[iK(q)\right]\sum_{\pmb{r}\subset {\rm NS}_+}
  \prod_{r\in \pmb{r}}{\cal F}(r,\pmb{q})\right]
|\pmb{\bar{q}}\rangle_{\tilde{h}\tilde{\gamma}}^{\rm NS}\ ,\nn
{\cal F}(r,\pmb{q})&=\begin{cases}
\frac{f(r)}{iK(r)} & \text{ if }r\in \pmb{q}\ ,\\
iK(r)f(r) \qquad & \text{ if }r\notin \pmb{q}\ .
\end{cases}
\end{align}
%\begin{equation}
%\begin{aligned}
%&\Psi^{\rm NS}_{h\gamma}(A,f)\\
%&=\frac{A}{\displaystyle\prod_{k\in {\rm NS}_+}\sqrt{1+K^2(k)}}\sum_{\pmb{q}\subset {\rm NS}_+}\left[\pr%od_{q\in \pmb{q}}\left[iK(q)\right]\sum_{\pmb{r}\subset {\rm NS}_+} \prod_{r\in \pmb{r}}\begin{cases}
%\frac{f(r)}{iK(r)} \qquad \text{if }r\in \pmb{q}\\
%iK(r)f(r) \qquad \text{if }r\notin \pmb{q}
%\end{cases}\right] |\pmb{\bar{q}}\rangle_{\tilde{h}\tilde{\gamma}}^{\rm NS}\,.
%\end{aligned}
%\end{equation}

The sum over $\pmb{r}$ is
\begin{equation}
\begin{aligned}
\sum_{\pmb{r}\subset {\rm NS}_+} \prod_{r\in \pmb{r}}
%\begin{cases}
%\frac{f(r)}{iK(r)} \qquad \text{if }r\in \pmb{q}\\
%iK(r)f(r) \qquad \text{if }r\notin \pmb{q}
%\end{cases}
{\cal F}(r,\pmb{q})
&=\prod_{q\in \pmb{q}}\left(1+\frac{f(q)}{iK(q)}\right) \prod_{\substack{k\in {\rm NS}_+\\k\notin \pmb{q}}}(1+iK(k)f(k))\\
&=\prod_{q\in \pmb{q}}\frac{1+\frac{f(q)}{iK(q)}}{1+iK(q)f(q)} \prod_{\substack{k\in {\rm NS}_+}}(1+iK(k)f(k))\,.
\end{aligned}
\end{equation}
Then
\begin{equation}
\begin{aligned}
\Psi^{\rm NS}_{h\gamma}(A,f)=\tilde{A}\sum_{\pmb{q}\subset{\rm NS}_+} \left[\prod_{q\in \pmb{q}}\tilde{f}(q)\right]|\pmb{\bar{q}}\rangle_{\tilde{h}\tilde{\gamma}}^{\rm NS}\,,
\end{aligned}
\end{equation}
with $\tilde{A},\tilde{f}$ defined in the Theorem. 
\end{proof}

\subsection{Coherent Ensemble\label{sec:ce1}}
The purpose of this section is to introduce the Coherent Ensemble
%, in which continuous quantum quenches can be formulated. 
which is convenient for formulating general time-dependent
  Hamiltonian dynamics.
%, and the Asymptotic Coherent Ensemble, that is equivalent to the (Generalized) Gibbs Ensemble in the thermodynamic limit.

\subsubsection{Generalized Gibbs and Gaussian ensembles}
We recall that the Generalized Gibbs Ensemble (GGE) parametrized by generalized temperatures $\beta_1,\beta_2,...$ is defined by the following expectation values of an operator $\mathcal{O}$
\begin{equation}
\langle \mathcal{O}\rangle^{\rm GGE[h\gamma]}_{\pmb{\beta}}=\frac{\tr[\mathcal{O} e^{-\sum_n \beta_n H_n}]}{\tr[e^{-\sum_n \beta_n H_n}]}\,,
\end{equation}
where $H_n$ are the conserved quantities of the model, and where $\tr$
denotes a trace over the full Hilbert space. These ensembles describe
equilibrium physics in the XY model, be it finite-temperature
equilibrium or steady states reached after a quantum quench. In the thermodynamic limit, they are equivalently parametrized by a particle density $\rho(\lambda)$ \cite{EF16}.

The Gaussian Ensemble (GE) parametrized by the $2\times 2$ block $L\times L$ correlation matrix $\Gamma$ is defined by the fact that the expectation values satisfy Wick's theorem when expressed in terms of the Jordan-Wigner fermions $c_j$, see Appendix \ref{xy}, the elementary $2$-point functions being given by
\begin{equation}
\Gamma_{ij}=\left(\begin{matrix}
\langle c^\dagger_ic_j\rangle^{\rm GE[h\gamma]}_{\Gamma}& \langle c^\dagger_ic^\dagger_j\rangle^{\rm GE[h\gamma]}_{\Gamma}\\
\langle c_ic_j\rangle^{\rm GE[h\gamma]}_{\Gamma}& \langle c_ic^\dagger_j\rangle^{\rm GE[h\gamma]}_{\Gamma}
\end{matrix} \right)\,.
\end{equation}
GGE's are particular cases of GE's for the XY Hamiltonian
\eqref{hamiltonian}. 

\subsubsection{Definition of the Coherent Ensemble}
An operator $\mathcal{O}$ is called even (resp. odd) if its matrix elements between eigenstates with different (resp. same) fermion parity vanish. We define the Coherent Ensemble (CE) parametrized by an amplitude
$f(k)$ by the following expectation values for even local operators
$\mathcal{O}$ 
\begin{equation}\label{ce}
\langle \mathcal{O}\rangle_f^{\rm CE[h\gamma]}=\Psi^{\rm NS}_{h\gamma}(A^{\rm NS},f)^\dagger \mathcal{O} \Psi^{\rm NS}_{h\gamma}(A^{\rm NS},f)\,,
\end{equation}
and for odd operators
\begin{equation}
\begin{aligned}\label{ceodd}
\langle \mathcal{O}\rangle_f^{\rm CE[h\gamma]}&=\frac{1}{2}\left[\Psi^{\rm R}_{h\gamma}(A^{\rm R},f)^\dagger \mathcal{O} \Psi^{\rm NS}_{h\gamma}(A^{\rm NS},f)+\Psi^{\rm NS}_{h\gamma}(A^{\rm NS},f)^\dagger \mathcal{O} \Psi^{\rm R}_{h\gamma}(A^{\rm R},f)\right]\\
&=\Re \left[\Psi^{\rm R}_{h\gamma}(A^{\rm R},f)^\dagger \mathcal{O} \Psi^{\rm NS}_{h\gamma}(A^{\rm NS},f)\right]\,,
\end{aligned}
\end{equation}
where
\be
A^{\rm NS,R}=\prod_{k\in{\rm NS}_+,{\rm
    R}_+}(1+|f(k)|^2)^{-\frac{1}{2}}\ .
\label{ARNS}
\ee
Replacing NS by R in \eqref{ce} incurs only negligible
corrections in system size. 
  
We note that the expression \eqref{ceodd} naturally arises when the
expectation value of an odd operator $\mathcal{O}$ is computed in a
state that is a superposition of NS and R sector states
\begin{equation}
|\Psi\rangle=\frac{|\Psi^{\rm NS}\rangle+|\Psi^{\rm R}\rangle}{\sqrt{2}}\,.
\end{equation}
%can be expressed in a similar way as in \eqref{ceodd}
%\begin{equation}
%\langle \Psi|\mathcal{O}|\Psi\rangle=\Re \langle \Psi^{\rm NS}|\mathcal{O}|\Psi^{\rm R}\rangle\,.
%\end{equation}

\subsubsection{CE as a particular case of a GE}
Let us show that each CE corresponds to a particular GE. To that
end, we consider the following expectation value of Jordan-Wigner
fermions in momentum space 
\begin{equation}\label{obj}
\langle c(k_1)^\dagger...c(k_n)^\dagger c(q_1)...c(q_m)\rangle_f^{\rm CE[h\gamma]}\,,
\end{equation}
with for example $k_1,...,k_n,q_1,...,q_m\in{\rm NS}$, and would like to show that it can be computed using Wick contractions. Using Theorem \ref{thm}, we can write it in the $(\infty,\gamma)$ basis with another amplitude $f'$. In this basis we have
\begin{equation}
\begin{aligned}
&\langle c(k_1)^\dagger...c(k_n)^\dagger c(q_1)...c(q_m)\rangle_f^{\rm CE[h\gamma]}=\langle \alpha_{k_1}^\dagger...\alpha_{k_n}^\dagger \alpha_{q_1}...\alpha_{q_m}\rangle_{f'}^{\rm CE[\infty\gamma]}\,,
\end{aligned}
\end{equation}
where the $\alpha$'s are implicitly written in the $(\infty,\gamma)$
basis for notational lightness. Next, we observe that for the
expectation value $\langle \alpha_{k_1}^\dagger...\alpha_{k_m}^\dagger
\alpha_{q_1}...\alpha_{q_m}\rangle_{f'}^{\rm CE[\infty\gamma]}$ to be
non-zero, for each $k_i$ there has to be either another $k_j$ with
$k_j=-k_i$, or a $q_j$ with $q_j=k_i$. The same holds true
interchanging the $k$'s and the $q$'s. Hence we are led to
evaluating expectation values of the form 
\begin{equation}
\langle \prod_{i}\alpha_{k_i}^\dagger \alpha_{k_i}\prod_{i}\alpha_{-q_i}^\dagger \alpha_{q_i}^\dagger\prod_i\alpha_{r_i} \alpha_{-r_i}\prod_i \alpha_{-s_i}^\dagger \alpha_{s_i}^\dagger\alpha_{s_i} \alpha_{-s_i} \rangle_{f'}^{\rm CE[\infty\gamma]}\,,
\end{equation}
where the $k$'s, $q$'s, $r$'s and $s$'s are all distinct. Using the
definition of the CE, we obtain
\begin{equation}\label{wick}
\begin{aligned}
&\langle \prod_{i}\alpha_{k_i}^\dagger \alpha_{k_i}\prod_{i}\alpha_{-q_i}^\dagger \alpha_{q_i}^\dagger\prod_i\alpha_{r_i} \alpha_{-r_i}\prod_i \alpha_{-s_i}^\dagger \alpha_{s_i}^\dagger\alpha_{s_i} \alpha_{-s_i} \rangle_{f'}^{\rm CE[\infty\gamma]}\\
&=\prod_i \frac{|f'(k_i)|^2}{1+|f'(k_i)|^2}\prod_i \frac{f^{\prime *}(q_i)}{1+|f'(q_i)|^2}\prod_i \frac{f'(r_i)}{1+|f'(r_i)|^2}\prod_i \frac{|f'(s_i)|^2}{1+|f'(s_i)|^2}\,.
\end{aligned}
\end{equation}
We now observe that the non-zero elementary two-point functions satisfy
\begin{equation}
\langle\alpha_{k}^\dagger \alpha_{k} \rangle_{f'}^{\rm CE[\infty\gamma]}=\frac{|f'(k)|^2}{1+|f'(k)|^2}\,,\qquad \langle\alpha_{k}\alpha_{-k} \rangle_{f'}^{\rm CE[\infty\gamma]}=\frac{f'(k)}{1+|f'(k)|^2}\,.
\end{equation}
Because of the relation
\begin{equation}
\frac{|f'(s)|^2}{1+|f'(s)|^2}=\langle\alpha_{-s}^\dagger \alpha_{s}^\dagger \rangle_{f'}^{\rm CE[\infty\gamma]}\langle\alpha_{s}\alpha_{-s} \rangle_{f'}^{\rm CE[\infty\gamma]}+\langle\alpha_{-s}^\dagger \alpha_{-s} \rangle_{f'}^{\rm CE[\infty\gamma]}\langle\alpha_{s}^\dagger\alpha_{s} \rangle_{f'}^{\rm CE[\infty\gamma]}\,,
\end{equation}
we obtain that the right-hand side of \eqref{wick} and so \eqref{obj}
can indeed be computed using Wick's theorem, which
  establishes that the CE is a particular case of a GE.

\subsubsection{Inequivalence of CE with GE or GGE}

CE ensembles are \textit{not} equivalent to either GEs or
GGEs. To show this, let us consider an operator $\mathcal{O}$ that
is local in terms of the fermions $c_j$ and compute its expectation
value within the CE. Using Wick's theorem in the thermodynamic limit
it can be recast into sums and products of expectation values of
quadratic terms in the Jordan Wigner fermions $c_j$'s in real
space. These take the values 
\begin{align}
\label{CMce}
\langle c_n^\dagger c_m\rangle^{\rm
  CE[h\gamma]}_{f}&=\frac{1}{2\pi}\int_{-\pi}^\pi e^{i(n-m)k}
\left[\cos^2(\theta_{k}^{h\gamma}/2)\frac{|f(k)|^2}{1+|f(k)|^2}
+\sin^2(\theta_{k}^{h\gamma}/2)\frac{1}{1+|f(k)|^2}\right]
\D{k}\nn
&\qquad-\frac{1}{2\pi}\int_{-\pi}^\pi e^{i(n-m)k}\sin
\theta_{k}^{h\gamma} \frac{\Im f(k)}{1+|f(k)|^2}\D{k}\ ,\nn 
\langle c_n c_m\rangle^{\rm CE[h\gamma]}_{f}&=\frac{i}{4\pi}\int_{-\pi}^\pi e^{ik(n-m)}\sin\theta_{k}^{h\gamma} \frac{1-|f(k)|^2}{1+|f(k)|^2}\D{k}\nn
&-\frac{1}{2\pi}\int_{-\pi}^\pi
e^{ik(n-m)}\frac{\cos^2(\theta_{k}^{h\gamma}/2)
  f(k)+\sin^2(\theta_{k}^{h\gamma}/2) f^*(k)}{1+|f(k)|^2}\D{k}\ .
\end{align}
By computing the Fourier series, one sees that the values of $\langle
c_n c_m\rangle^{\rm CE[h\gamma]}_{f}$ for all $n,m$ impose a system of
two polynomial equations of degree $2$ on $\Re f$ and $\Im f$. This
prevents $\langle c_n^\dagger c_m\rangle^{\rm CE[h\gamma]}_{f}$ from
taking arbitrary values, whereas in the GE they are independent
quantities. Hence the CE's are a strict subset of GE's.

Expectation values of local operators in the thermodynamic limit of a
GGE can be expressed in terms of mode occupation numbers or
equivalently a root density $\rho$ \cite{EF16} and the associated hole density
$\rho_h=\tfrac{1}{2\pi}-\rho$ as
\begin{equation}\label{ee}
\begin{aligned}
\langle c_n^\dagger c_m\rangle^{\rm GGE[h\gamma]}_{\rho}&=\int_{-\pi}^\pi e^{i(n-m)k}\cos^2(\theta_{k}^{h,\gamma}/2)\rho(k)\D{k}+\int_{-\pi}^\pi e^{i(n-m)k}\sin^2(\theta_{k}^{h\gamma}/2)\rho_h(-k)\D{k}\\
\langle c_n c_m\rangle^{\rm GGE[h\gamma]}_{\rho}&=\frac{i}{2}\int_{-\pi}^\pi e^{ik(n-m)}\sin\theta_{k}^{h,\gamma} (\rho_h(-k)-\rho(k))\D{k} \,.
\end{aligned}
\end{equation}
To have $\langle c_n c_m\rangle^{\rm GGE[h\gamma]}_{\rho}=\langle c_n
c_m\rangle^{\rm CE[h\gamma]}_{f}$ for all $n,m$ requires a
purely imaginary $f(k)\equiv i\tilde{f}(k)$ and the relation 
\begin{equation}
\rho(k)=\frac{1}{2\pi}\frac{\tilde{f}(k)^2}{1+\tilde{f}(k)^2}+\frac{1}{2\pi\tan \theta^{h\gamma}_k} \frac{\tilde{f}(k)}{1+\tilde{f}(k)^2}\,.
\end{equation}
The requirement that $\langle c^\dagger_n c_m\rangle^{\rm
  GGE[h\gamma]}_{\rho}=\langle c^\dagger_n c_m\rangle^{\rm
  CE[h\gamma]}_{f}$ for all $n,m$ further imposes that
\begin{equation}
\rho(k)=\frac{1}{2\pi}\frac{\tilde{f}(k)^2}{1+\tilde{f}(k)^2}-\frac{\tan \theta^{h\gamma}_k}{2\pi} \frac{\tilde{f}(k)}{1+\tilde{f}(k)^2}\,.
\end{equation}
One sees that the two relations are compatible only if $\tilde{f}(k)$ takes
the values $0$ or $\infty$. In this case, the coherent state
$\Psi_{h\gamma}^{\rm NS}(A,f)$ is nothing but an eigenstate of the
Hamiltonian. In fact, it is a ``representative state''
\cite{EF16} of a root density that is  either zero or maximal, which
exactly corresponds to so-called ``zero-entropy states", in the sense
that their Yang-Yang entropy vanishes. Hence no GGE can be
written as a CE, apart from zero-entropy state expectation values. 

\subsubsection{GGE at the boundary of CE}
However, starting from a coherent state $\Psi_{h\gamma}^{\rm NS}(f)$
one can obtain a GGE by taking the late time limit of the
evolution of the CE induced by the Hamiltonian
$H(h,\gamma)$. Indeed, one has  
\begin{equation}\label{evoty}
e^{-i\tau H(h,\gamma)}\Psi_{h\gamma}^{\rm NS,R}(f,A)=\Psi_{h\gamma}^{\rm NS,R}(f_\tau,A_\tau)\,,
\end{equation}
with
\begin{equation}
f_\tau(k)=f(k)e^{-2i\tau\varepsilon_{h\gamma}(k)}\,,\qquad A_\tau=A e^{-i\tau \mathfrak{E}^{\rm NS,R}_{h\gamma}}.
\end{equation}
Hence the CE after time $\tau$ is obtained from  \eqref{CMce} by
replacing $f$ by $f_\tau$. In the limit $\tau\to\infty$, the fast
oscillations in $f_\tau(k)$ cause the second terms of the
expectation values in \eqref{CMce} to vanish, while leaving the other
terms invariant. This establishes that if the root density $\rho$ is
even, and if one chooses $f$ such that 
\begin{equation}\label{rho}
\rho(k)=\frac{1}{2\pi}\frac{|f(k)|^2}{1+|f(k)|^2}\,,
\end{equation}
then
\begin{equation}
\underset{\tau\to\infty}{\lim}\,\langle c_n^\dagger c_m\rangle^{\rm CE[h\gamma]}_{f_\tau}=\langle c_n^\dagger c_m\rangle^{\rm GGE[h\gamma]}_{\rho}\,,\qquad \underset{\tau\to\infty}{\lim}\,\langle c_n c_m\rangle^{\rm CE[h\gamma]}_{f_\tau}=\langle c_n c_m\rangle^{\rm GGE[h\gamma]}_{\rho}\,.
\end{equation}
This shows that any GGE with even root density can be obtained as a limit of CE. Figure \ref{fig:ce} summarizes the inclusion of the different ensembles GE, CE and GGE.

\begin{figure}[H]
\begin{center}
%\begin{tikzpicture}[scale=1.3]
%\fill[gray, opacity=0.1](0.,0.) -- (2,0) --(2.7,0.7)--(2.7,2.7)--(0.7,2.7)--(0,2);
%\draw[thick](0,0) -- (0,2) -- (2,2) -- (2,0)-- (0,0);
%\draw[thick](2,0) -- (2.7,0.7) -- (2.7,2.7) -- (2,2);
%\draw[thick](0,2) -- (0.7,2.7) -- (2.7,2.7) ;
%\draw[thick, dashed](0,0) -- (0.7,0.7) -- (2.7,0.7) ;
%\draw[thick, dashed](0.7,0.7) -- (0.7,2.7) ;
%\draw [thick, red, fill=yellow, fill opacity=0.5] plot [smooth cycle] coordinates {(0.6,0.9) (0.6,1.7) (1.5,1.9) (1.5,1.2) };
%\node at (-0.3,2) {\small GE};
%\node at (1.625,0.975) {\small GGE};
%\node at (1.05,1.4) {\small CE};
%\end{tikzpicture}
 \includegraphics[scale=0.45]{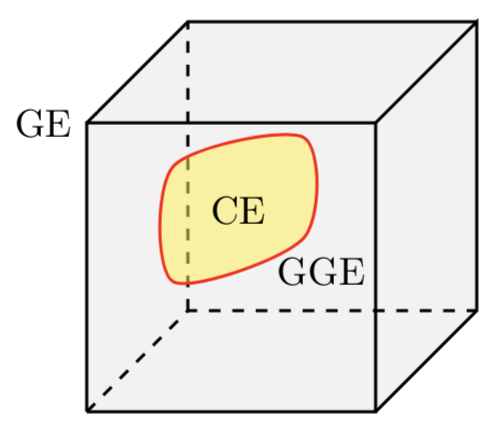}
\end{center}
\caption{Sketch of the position of the GGE (red line) and CE (yellow surface) within the GE (gray volume), for symmetric root densities.}
\label{fig:ce}
\end{figure}

\subsection{Out-of-equilibrium physics as a Coherent Ensemble 
\label{outofeq}}

It is known that equilibrium physics can be formulated as a GGE
\cite{EF16}. The purpose of this section is to show how homogeneous
non-equilibrium dynamics in the XY model can be formulated as a
CE, see also Refs\cite{Dziarmaga05,Cherng06,Mukherjee07,Dziarmaga10,Schuricht16}.

%It is essentially a rewriting and reformulation of the result of \cite{Dziarmaga05} in our language.

\subsubsection{Differential equation for the amplitude} 
We assume that at time $t=0$ the system is prepared in the ground state of $H(h_0,\gamma_0)$, and that it is time-evolved at time $t>0$ with the Hamiltonian $H(h(t),\gamma(t))$, namely
\begin{align}\label{eqdiff}
|\psi(0)\rangle&=|0\rangle_{h_0\gamma_0}^{\rm NS}\ ,\nn
i\partial_t |\psi(t)\rangle&=H(h(t),\gamma(t))|\psi(t)\rangle\ .
\end{align}
We note that by virtue of the linearity of the Schr\"odinger
equation one can equally well consider initial states that are
superpositions, for example of the ground states in the NS and R sectors. 

We would like to determine the time evolution of observables during
this process.  To that end, we replace the time evolution of the
magnetic field and anisotropy by a series of quenches in which they
are suddenly changed to $h_n=h(t_n),\gamma_n=\gamma(t_n)$ at times
$t_n=(n-1) \delta t$, for a given small time interval $\delta t>0$,
and kept constant between these quenches. The original dynamics is
obtained in the limit $\delta t\to 0$. We now observe that the initial
state can be written as a coherent state 
\begin{equation}
|\psi(0)\rangle=\Psi^{\rm NS}_{h_0\gamma_0}(1,0)\,,
\label{initialstate}
\end{equation}
and that the time-evolution with $H(h,\gamma)$ of a coherent state written in the $(h,\gamma)$ basis is simply given by
\begin{equation}
e^{-itH(h,\gamma)}\Psi^{\rm NS}_{h\gamma}(A,f)=\Psi^{\rm
  NS}_{h\gamma}(A',f')\ ,
\end{equation}
with $f'(k)=f(k)e^{-2it\varepsilon_{h\gamma}(k)}$ and
$A'=Ae^{-it\mathfrak{E}^{\rm NS}_{h\gamma}}$. As a consequence, using
repeatedly Theorem \ref{thm} to write the state as a coherent state in
the $(h_n,\gamma_n)$ basis for $t_n\leq t<t_{n+1}$, and then
expressing it in the  $(0,1)$ basis, one has at time $t_n^-$ 
\begin{equation}
|\psi(t_n^-)\rangle=\Psi^{\rm NS}_{01}(A^{\rm NS}_{(n-1)},f_{(n-1)})\ ,
\end{equation}
where the sequence of functions $f_j$ and phases $A_j$ satisfy
\begin{align}
\label{recf}
f_{(0)}(k)&=-iK_{h_0\gamma_0;01}(k)\ ,\nn
f_{(j)}(k)&=\frac{iK_{h_j\gamma_j;01}(k)(e^{-2i
    \varepsilon_{h_{j}\gamma_{j}}(k)\delta
    t}-1)+(K^2_{h_j\gamma_j;01}(k)+e^{-2i
    \varepsilon_{h_{j}\gamma_{j}}(k)\delta t})f_{(j-1)}(k)}{1+e^{-2i
    \varepsilon_{h_{j}\gamma_{j}}(k)\delta
    t}K^2_{h_j\gamma_j;01}(k)+iK_{h_j\gamma_j;01}(k)(1-e^{-2i
    \varepsilon_{h_{j}\gamma_{j}}(k)\delta t})f_{(j-1)}(k)}\ , \nn
A^{\rm NS}_{j}&=A^{\rm NS}_{j-1}e^{-i\delta t \mathfrak{E}^{\rm
    NS}_{h_j\gamma_j}}\ ,\nn
&\times\prod_{k\in{\rm NS}_+}\frac{1+K^2_{h_j\gamma_j;01}(k)e^{-2i
    \varepsilon_{h_{j}\gamma_{j}}(k)\delta t}+i
  K_{h_j\gamma_j;01}(k)f_{(j-1)}(k)(1-e^{-2i
    \varepsilon_{h_{j}\gamma_{j}}(k)\delta
    t})}{1+K^2_{h_j\gamma_j;01}(k)}\ .
\end{align}
We now take the limit $\delta t\to 0$. To that end it is useful to
introduce a function $f_t(k)$ of both $t$ and $k$ by
\begin{equation}
f_t(k)=\underset{\delta t\to 0}{\lim}\, f_{(\lfloor t/\delta t\rfloor)}(k)\,.
\end{equation}
From \eqref{recf}, we conclude that the state of the system at time $t$ following an arbitrary variation $h(t),\gamma(t)$ of the magnetic field and anisotropy can be written as a coherent state
\begin{equation}
|\psi(t)\rangle=\Psi^{\rm NS}_{01}(A^{\rm NS}_{t},f_{t})\,,
\label{timeevolved}
\end{equation}
whose amplitude $f_t(k)$ satisfies a non-linear differential equation
\begin{equation}\label{equadiff}
\partial_tf_t(k)=\frac{2K_{h(t)\gamma(t);01}(k)}{1+K_{h(t)\gamma(t);01}^2(k)}\varepsilon_{h(t)\gamma(t)}(k)(1+f^2_t(k))-2i\frac{1-K_{h(t)\gamma(t);01}^2(k)}{1+K_{h(t)\gamma(t);01}^2(k)}\varepsilon_{h(t)\gamma(t)}(k)f_t(k)\ .
\end{equation}
The initial condition is $f_0(k)=-iK_{h(0)\gamma(0);01}(k)$. This
shows that any expectation value out-of-equilibrium can be written as
a CE. An example of the function $f_t(k)$ is plotted in Figure
\ref{ctdensity5} for a sudden quench from $h_0=0.1$ to $h=0.9$.

An equivalent system of linear differential equations was obtained
previously in \cite{Dziarmaga05}. Indeed, we have 
\begin{align}
|\psi(t)\rangle&=\prod_{k\in{\rm NS}_+}\left[n_t(k)+m_t(k)\alpha^\dagger_{01;-k}
  \alpha^\dagger_{01;k}\right]|0\rangle_{01}^{\rm  NS}\ ,
\label{psioft}
\end{align}
where $n_t(p)$ an $m_t(p)$ fulfil the following system of linear ordinary
differential equations 
\begin{align}
\frac{d}{dt}\begin{pmatrix}
  n_{t}(p)\\
  m_t(p)
\end{pmatrix}=
\varepsilon_{h(t),\gamma(t)}(p)
\begin{pmatrix}
i\cos\Delta_t(p) & -\sin\Delta_t(p)\\
\sin\Delta_t(p) &-i\cos\Delta_t(p)
\end{pmatrix}
\begin{pmatrix}
n_{t}(p)\\
m_t(p)
\end{pmatrix}\ ,\quad p\in{\rm NS_+}\,,
\label{system0}
\end{align}
where we have defined 
\begin{align}
\Delta_t(k)&=\theta^{h(t)\gamma(t)}_k-\theta^{01}_k\,,
\end{align}
and with the initial conditions
\begin{align}
n_0(k)&=\frac{1}{\sqrt{1+K^2_{h_0\gamma_0;01}(k)}}\ ,\quad
m_0(k)=-\frac{iK_{h_0\gamma_0;01}(k)}{\sqrt{1+K^2_{h_0\gamma_0;01}(k)}}\ .
\end{align}
This formulation is equivalent to \fr{timeevolved} once we identify
\begin{align}
f_t(p)&=\frac{m_t(p)}{n_t(p)}\ ,\quad A^{\rm NS}_t=\prod_{p\in{\rm
    NS}_+}n_t(p)\ .
\end{align}

\begin{figure}[ht!]
\begin{center}
 \includegraphics[scale=0.425]{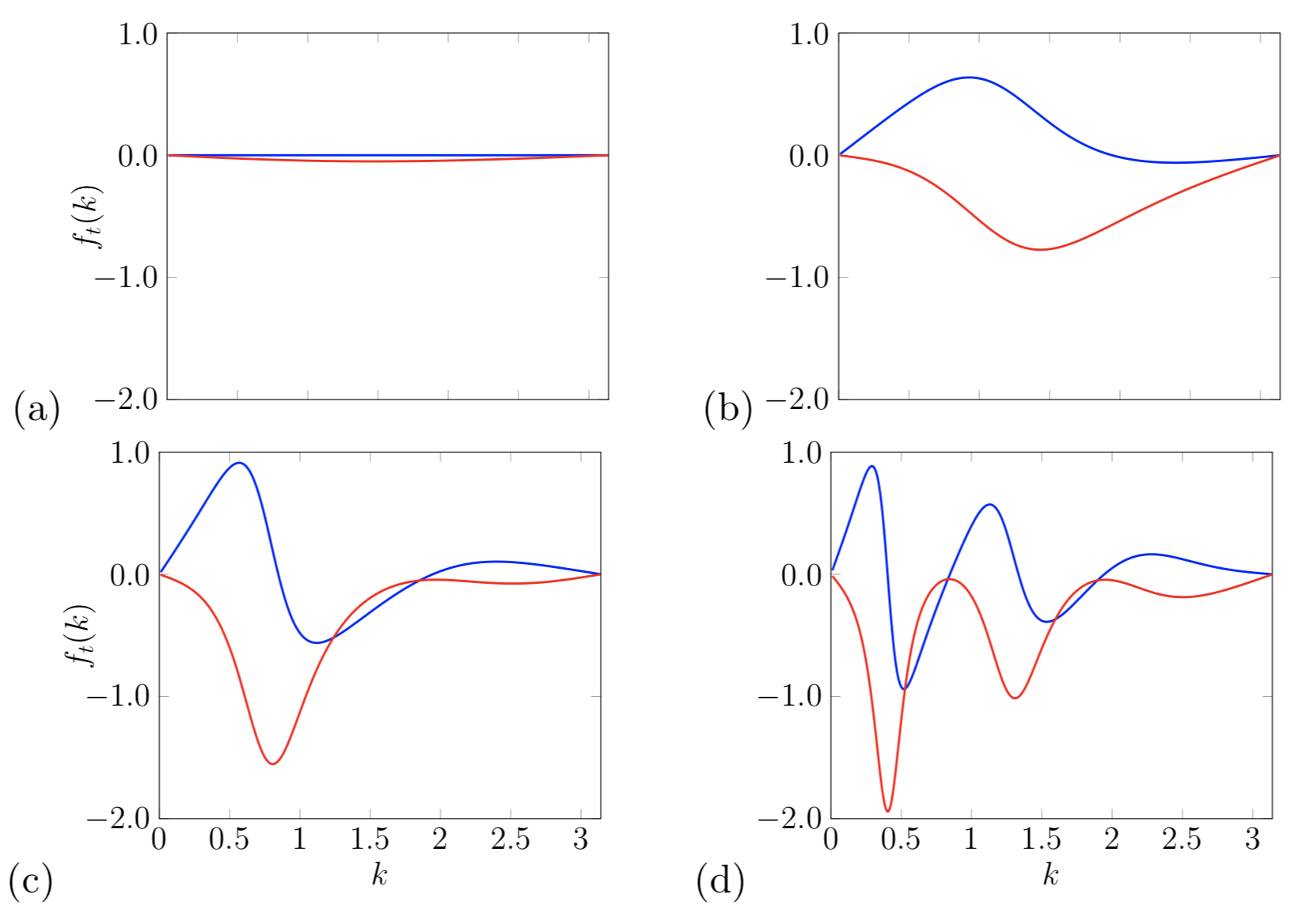}
\caption{Amplitude $f_t(k)$ for a quantum quench from $h(0)=0.1$
    to $h(t)=0.9$ for $t>0$ at times (a) $t=0$; (b) $t=0.5$; (c)
  $t=1$; (d) $t=2$. The real and imaginary parts are shown in blue and
  red respectively.}
\label {ctdensity5}
\end{center}
\end {figure}

%%%%%%%%%%%%%%%%%%%%%%%%%%%%%%%%%%%%%%%%%%%%%%%%%%%%%%%%%%%%%%%%
\subsubsection{The phase factor \label{secpha}}
%%%%%%%%%%%%%%%%%%%%%%%%%%%%%%%%%%%%%%%%%%%%%%%%%%%%%%%%%%%%%%%%
In the limit $\delta t\to 0$, the phase becomes
\begin{equation}\label{dt0phase}
\begin{aligned}
&A_t^{\rm NS}=A_0^{\rm NS}e^{-i \int_0^t \mathfrak{E}^{\rm NS}_{h(s)\gamma(s)}\D{s}}\exp\left(\sum_{k\in{\rm NS}_+}\varphi_t(k) \right)\\
&\varphi_t(k)=-2\int_0^t \varepsilon_{h(s)\gamma(s)}(k) \frac{K_{h(s)\gamma(s);01}(k)}{1+K^2_{h(s)\gamma(s);01}(k)}(iK_{h(s)\gamma(s);01}(k)+f_s(k))\D{s} \,.
\end{aligned}
\end{equation}
For a coherent state in the R sector the same formula holds where the
sum is over momenta in ${\rm R}_+$ and with $\mathfrak{E}^{\rm
  NS}_{h(s)\gamma(s)}$ replaced by $\mathfrak{E}^{\rm 
  R}_{h(s)\gamma(s)}$. For expectation values in the CE of even
operators, the phase is irrelevant since it always cancels
out. However, for odd operators the expectation value is proportional
to the phase factor
\begin{equation}
\phi_L(t)\equiv \frac{A^{\rm NS}_t (A_t^{\rm R})^*}{|A^{\rm NS}_t (A_t^{\rm R})^*|}\,,
\label{phiLoft}
\end{equation}
where we made explicit the system size dependence of $\phi_L(t)$ that
is only implicit in $A^{\rm NS,R}_t$.

Let us assume first that the solution $f_t(k)$ to the non-linear
differential equation \eqref{equadiff} is regular for all $t$ and
$k$. Then, using the Euler-MacLaurin formula, we find 
\begin{equation}
\begin{aligned}
\sum_{k\in{\rm NS}_+}\varphi_t(k)&=\frac{L}{2\pi} \int_0^{\pi}\varphi_t(k)\D{k}+\mathcal{O}(L^{-1})\\
\sum_{k\in{\rm R}_+}\varphi_t(k)&=\frac{L}{2\pi} \int_0^{\pi}\varphi_t(k)\D{k}-\frac{\varphi_t(\pi)+\varphi_t(0)}{2}+\mathcal{O}(L^{-1})\\
\mathfrak{E}^{\rm NS}_{h\gamma}&=-\frac{L}{4\pi}\int_{-\pi}^\pi \varepsilon_{h\gamma}(k)\D{k}+\mathcal{O}(L^{-1})\\
\mathfrak{E}^{\rm R}_{h\gamma}&=-\frac{L}{4\pi}\int_{-\pi}^\pi \varepsilon_{h\gamma}(k)\D{k}+2 |1-h|\1_{h>1}+\mathcal{O}(L^{-1})\,.
\end{aligned}
\end{equation}
Assuming that the trajectory $h(t),\gamma(t)$ is such that the time spent on a critical point is of measure $0$, we find $\varphi_t(\pi)=0$ and
\begin{equation}
\varphi_t(0)=-4i \int_0^t |1-h(s)|\1_{h(s)>1}\D{s}\,.
\end{equation}
Hence in this case we obtain in the thermodynamic limit
\begin{equation}
\phi_\infty(t)=1\,.
\end{equation}
However, if the function $f_t(k)$ is singular for some values
$t^*,k^*$, then $\phi_L(t)$ for $t\geq t^*$ is not guaranteed to
become $1$ in the thermodynamic limit, which can result in a 
non-trivial multiplicative phase in \eqref{ceodd}. This phase has to be computed with \eqref{phiLoft} and \eqref{dt0phase}.
%\begin{equation}\label{pha}
%\phi_\infty(t)=e^{2i \int_0^t |1-h(s)|\1_{h(s)>1}\D{s}-\frac{\varphi_t^*(\pi)+\varphi_t^*(0)}{2}}\,.
%\end{equation}

Singularities of $f_t(k)$ are best understood with the system of linear differential equations \eqref{system0}. $n_t(p)$ and $m_t(p)$ are regular functions and the nature of the
singularities of $f_t(p)$ becomes transparent: they simply correspond
to situations when at least one probability amplitude $n_{t^*}(k^*)$
vanishes. This implies that the overlap of the time evolved state with
$|0\rangle^{\rm NS}_{01}$ vanishes
\be
{}_{01}^{\rm NS}\langle 0|T\exp\left(-i\int_0^{t^*} H(h(s),\gamma(s))\D{s}\right)|0\rangle^{\rm
  NS}_{h_0\gamma_0}=0 \implies f_{t^*}(k^*)\text{ singular}.
\ee
This situation is somewhat reminiscent of non-analyticities in the
Loschmidt amplitude \cite{DPT}. This phase will be discussed again in
a concrete example in Section \ref{KZsec}. \\

To summarize this section, the CE provides the natural framework to
evaluate the expectation value of any operator $\mathcal{O}$ during
the out-of-equilibrium evolution \eqref{eqdiff}. If $\mathcal{O}$ is
even, then its expectation value is given by \eqref{ce} with $f$
satisfying the nonlinear differential equation \eqref{equadiff}. If
$\mathcal{O}$ is odd, then its expectation value is given by
\eqref{ceodd} multiplied (inside the real part) by the phase
$\phi_L(t)$ \eqref{phiLoft}. In the thermodynamic limit,
$\phi_\infty(t)$ is constant in time as long as $f_t(k)$ is a regular
function of $k$. If $f_{t^*}(k)$ has a singularity at $k^*$, then
$\phi_\infty(t)$ can be discontinuous at $t^*$, and has to be
evaluated according to \eqref{dt0phase}.

%%%%%%%%%%%%%%%%%%%%%%%%%%%%%%%%%%%%%%%%%%%%%%%%%%%%%%%%%%%%%%%%%%%%
\section{Expectation values in the Coherent Ensemble \label{sec:ce}}
%%%%%%%%%%%%%%%%%%%%%%%%%%%%%%%%%%%%%%%%%%%%%%%%%%%%%%%%%%%%%%%%%%%%
The purpose of this section is to show that essentially all
correlation functions and expectation values in the CE can be %computed exactly
expressed as Fredholm determinants and Pfaffians in the thermodynamic limit.
We fix $h,\gamma$ and to ease notations write
\begin{equation}
\langle \mathcal{O}\rangle_f\equiv\langle \mathcal{O}\rangle^{\rm CE[h\gamma]}_f\,.
\end{equation}
\subsection{Definitions}
The formulas obtained for the various expectation values
considered involve Fredholm determinants and Fredholm Pfaffians. In
this section we present the definition of these objects and some of
their properties.  
\subsubsection{Fredholm determinant}
Given a function $F(\lambda,\mu)$ on $[a,b]\times [a,b]$, the Fredholm determinant $\Det[{\rm Id}+F]$ is defined by
\begin{equation}
\Det[{\rm Id}+F]=1+\sum_{n= 1}^\infty \frac{1}{n!}\int_{a}^{b}...\int_{a}^b \det[F(z_i,z_j)]_{1\leq i,j\leq n}\D{z_1}...\D{z_n}\,.
\end{equation}
Here, ${\rm Id}$ should be merely considered as a notation. The Fredholm determinant satisfies the following relation
\begin{equation}\label{fredheq}
\Det[{\rm Id}+F]=\underset{N\to\infty}{\lim}\det\left[\delta_{i,j}+\frac{b-a}{N}F(\zeta_i,\zeta_j)\right]_{1\leq i,j\leq N}\,,
\end{equation}
with $\zeta_1<...<\zeta_N$ regularly spaced numbers covering $[a,b]$.

\subsubsection{Block Fredholm Pfaffian}
Given a $2\times 2$ matrix-valued function $\pmb{K}(x,y)=(K_{ij}(x,y))_{1\leq i,j\leq 2}$ on $[a,b]\times[a,b]$ satisfying $K_{ij}(x,y)=-K_{ji}(y,x)$, the block Fredholm Pfaffian $\Pf[\pmb{{\rm Jd}}+\pmb{K}]$ is defined by \cite{rains}
\begin{equation}\label{pfa}
\Pf[\pmb{{\rm Jd}}+\pmb{K}]=1+\sum_{n= 1}^\infty \frac{1}{n!}\int_{a}^{b}...\int_{a}^b \pf[K(z_i,z_j)]_{1\leq i,j\leq n}\D{z_1}...\D{z_{n}}\,.
\end{equation}
The matrices inside the Pfaffian on the right-hand side are thus $n\times n$ matrices of $2\times 2$ blocks. Here, $\pmb{{\rm Jd}}$ should be merely considered as a notation. The block Fredholm Pfaffian satisfies the relation
\begin{equation}\label{111}
\Pf[\pmb{{\rm Jd}}+\pmb{K}]=\underset{\substack{N\to\infty}}{\lim} \pf\left[\delta_{i,j}\pmb{J}+\frac{b-a}{N}\pmb{K}(\zeta_i,\zeta_j)\right]_{1\leq i,j\leq N}\,,
\end{equation}
with $J$ the $2\times 2$ matrix
\begin{equation}
\pmb{J}=\left(\begin{matrix}0&1\\-1&0\end{matrix} \right)\,.
\end{equation}
\subsubsection{Fredholm Pfaffian}
Given an antisymmetric function $F(\lambda,\mu)$ on $[-a,a]\times[-a,a]$, i.e. that satisfies $F(\mu,\lambda)=-F(\lambda,\mu)$, one can define the $2\times 2$ matrix-valued function $\pmb{K}_F$ on $[0,a]\times[0,a]$ by
\begin{equation}
\pmb{K}_F(x,y)=\left(\begin{matrix}F(x,y)&F(x,-y)\\F(-x,y)&F(-x,-y)\end{matrix} \right)\,.
\end{equation}
We thus define the Fredholm Pfaffian $\Pf[{\rm Jd}+F]$ of an antisymmetric function $F$ on $[-a,a]\times[-a,a]$ by
\begin{equation}\label{pfafdef}
\begin{aligned}
\Pf[{\rm Jd}+F]&\equiv\Pf[\pmb{{\rm Jd}}+\pmb{K}_F]\\
&=1+\sum_{n= 1}^\infty \frac{1}{n!}\int_{0}^{a}...\int_{0}^a \pf\left[\begin{matrix}F(z_i,z_j)&F(z_i,-z_j)\\ F(-z_i,z_j)&F(-z_i,-z_j)\end{matrix}\right]_{1\leq i,j\leq n}\D{z_1}...\D{z_{n}}\,.
\end{aligned}
\end{equation}
It satisfies the relation
\begin{equation}
\Pf[{\rm Jd}+F]=\underset{\substack{N\to\infty\\ N {\rm
      even}}}{\lim}(-1)^{N/2}
\pf\left[\delta_{i,N+1-j}\sign(j-i)+\frac{2a}{N}F(\zeta_i,\zeta_j)\right]_{1\leq
  i,j\leq N}\ .
\end{equation}
Here, $\zeta_1<...<\zeta_N$ are regularly spaced numbers covering $[-a,a]$ and assumed to be symmetrically distributed to ensure the antisymmetry of the matrix. The
factor $(-1)^{N/2}$ compared to \eqref{111} arises from the
re-ordering of rows and columns after changing the $2\times 2$ block
$N/2\times N/2$ matrix into an $N\times N$ matrix, and re-ordering the negative $\zeta$'s in ascending order. 

%At the end When expectation values in the ACE are considered, we will put back the upper index ${\rm ACE[h\gamma]}$.

%%%%%%%%%%%%%%%%%%%%%%%%%%%%%%%%%%%%%%%%%%%%%%%%%%%%%%
\subsection{Full counting statistics of the transverse magnetization}
%%%%%%%%%%%%%%%%%%%%%%%%%%%%%%%%%%%%%%%%%%%%%%%%%%%%%%
As the operator $\sigma^z$ is local in the Jordan-Wigner fermions
$c_j$, any static correlation of $\sigma^z$ is simple to calculate and
can be expressed as a multiple integral in the thermodynamic limit. The
purpose of this section is to derive a Fredholm determinant expression
for the following generating function 
\begin{equation}\label{gege}
\langle e^{i\theta \sum_{j=1}^{\ell} \sigma^z_j}\rangle_f\,,
\end{equation}
for arbitrary $\theta$ and $\ell$. Exact Pfaffian representations
of size $2\ell$ for the full counting statistics of the transverse
magnetization in a generic GE have been derived before using Wick's
theorem \cite{Cherng06,cd-07,e-13,ia-13,klich14,CoEG17,Groha18}. 

To compute \eqref{gege}, we express the coherent states involved in the CE in the
$(\infty,\gamma)$ basis and expand them to obtain 
\begin{equation}
\langle e^{i\theta \sum_{j=1}^{\ell} \sigma^z_j}\rangle_f=|A|^2\sum_{\pmb{\lambda},\pmb{\mu}\subset{\rm NS}_+}{}^{\rm NS}_{\infty\gamma}\langle \pmb{\bar{\lambda}}|e^{i\theta \sum_{j=1}^{\ell} \sigma^z_j}|\pmb{\bar{\mu}}\rangle_{\infty\gamma}^{\rm NS}\prod_{\lambda\in\pmb{\lambda}}g^*(\lambda)\prod_{\mu\in\pmb{\mu}}g(\mu)\,,
\end{equation}
where
\begin{equation}\label{ggg}
g(k)=\frac{iK_{\infty\gamma;h\gamma}(k)+f(k)}{1+iK_{\infty\gamma;h\gamma}(k)f(k)}\,.
\end{equation}
We now use Lemma \ref{ff1} to write the form factor of $e^{i\theta
  \sum_{j=1}^{\ell}\sigma^z_j}$ as a determinant 
\begin{equation}
\langle e^{i\theta \sum_{j=1}^{\ell} \sigma^z_j}\rangle_f=e^{i\theta\ell}|A|^2\sum_{\substack{\pmb{\lambda},\pmb{\mu}\subset{\rm NS}_+\\ |\pmb{\lambda}|=|\pmb{\mu}|}}\det E(\pmb{\bar{\lambda}},\pmb{\bar{\mu}})\prod_{\lambda\in\pmb{\lambda}}g^*(\lambda)\prod_{\mu\in\pmb{\mu}}g(\mu)\,,
\label{FCSeq1}
\end{equation}
where $E(\pmb{\lambda},\pmb{\mu})$ is defined in
\eqref{E}. Because of the pair structure of $\pmb{\bar{\mu}}$ each
$\mu_i$ appears in two columns of the matrix
$E(\pmb{\bar{\lambda}},\pmb{\bar{\mu}})$. Hence we can use Lemma
\ref{brujin} to carry out the sum over $\pmb{\mu}$, with ${\rm NS}_+$ being the set $K$, and \eqref{E} being the function $f$ when $\mu_k>0$ and the function $g$ when $\mu_k<0$. This gives
\begin{equation}\label{eq1}
\sum_{\substack{\pmb{\mu}\subset{\rm
      NS}_+\\ |\pmb{\lambda}|=|\pmb{\mu}|}}\det
E(\pmb{\bar{\lambda}},\pmb{\bar{\mu}})\prod_{\mu\in\pmb{\mu}}g(\mu)=(-1)^{N/2}\pf
[\tilde{E}(\pmb{\bar{\lambda}})-\tilde{E}(\pmb{\bar{\lambda}})^T]\ .
\end{equation}
In notations where $\pmb{\bar{\lambda}}=\{\lambda_1,...,\lambda_N\}$
the matrix elements of $\tilde{E}$ for $\lambda_j\neq -\lambda_k$ are
given by
\begin{equation}
\begin{aligned}
&\tilde{E}(\pmb{\bar{\lambda}})_{jk}=\left(\frac{e^{-2i\theta}-1}{L}\right)^2\sum_{\substack{\mu\in{\rm NS}_+\\\mu\neq \lambda_j,-\lambda_k}} e^{i(\lambda_j+\lambda_k)}\frac{1-e^{i\ell(\lambda_j-\mu)}}{1-e^{i(\lambda_j-\mu)}}\frac{1-e^{i\ell(\lambda_k+\mu)}}{1-e^{i(\lambda_k+\mu)}}g(\mu)\\
&\quad+(\1_{\lambda_j>0}g(\lambda_j)+\1_{\lambda_k<0}g(-\lambda_k))\left(1+\frac{\ell}{L}(e^{-2i\theta}-1) \right)\frac{e^{-2i\theta}-1}{L}e^{i(\lambda_j+\lambda_k)}\frac{1-e^{i\ell(\lambda_j+\lambda_k)}}{1-e^{i(\lambda_j+\lambda_k)}}\,,
\end{aligned}
\end{equation}
while for $\lambda_j=-\lambda_k$ we have
\begin{equation}
\begin{aligned}
\tilde{E}_{jk}(\pmb{\bar{\lambda}})&=\left(\frac{e^{-2i\theta}-1}{L}\right)^2\sum_{\substack{\mu\in{\rm NS}_+\\\mu\neq \lambda_j}}\left| \frac{1-e^{i\ell(\lambda_j-\mu)}}{1-e^{i(\lambda_j-\mu)}}\right|^2g(\mu)\\
&+\1_{\lambda_j>0}\left(1+\frac{\ell}{L}(e^{-2i\theta}-1) \right)^2g(\lambda_j)\,.
\end{aligned}
\end{equation}
The factor $(-1)^{N/2}$ arises from the re-ordering of the columns of the matrix in order to use Lemma \ref{brujin}. Factorizing $g(\lambda)$ for $\lambda>0$, in the thermodynamic limit one
obtains a Fredholm Pfaffian
\begin{equation}
(-1)^{N/2}\pf
  [\tilde{E}(\pmb{\bar{\lambda}})-\tilde{E}(\pmb{\bar{\lambda}})^T]=\Pf[{\rm
      Jd}+\mathcal{E}[\rho]]\prod_{\lambda\in\pmb{\lambda}}g(\lambda)(1+o(L^0))\ .
\label{FCSeq2}
\end{equation}
Here the kernel acts on $[-\pi,\pi]\times [-\pi,\pi]$
\begin{equation}
\begin{aligned}%e^{i(\lambda-\mu)}
\mathcal{E}[\rho](\lambda,\mu)&=\frac{(e^{-2i\theta}-1)^2}{2\pi}\frac{\sqrt{\rho(\lambda)\rho(\mu)}}{g^+(\lambda)g^+(\mu)}\\
&\times\int_0^\pi \left[\frac{1-e^{i\ell(\lambda-k)}}{1-e^{i(\lambda-k)}}\frac{1-e^{i\ell(\mu+k)}}{1-e^{i(\mu+k)}}-\frac{1-e^{i\ell(\mu-k)}}{1-e^{i(\mu-k)}}\frac{1-e^{i\ell(\lambda+k)}}{1-e^{i(\lambda+k)}}\right]g(k)\D{k}\\
&+(e^{-2i\theta}-1)\frac{\sqrt{\rho(\lambda)\rho(\mu)}}{g^+(\lambda)g^+(\mu)}\frac{1-e^{i\ell(\lambda+\mu)}}{1-e^{i(\lambda+\mu)}}(g(\lambda)-g(\mu))\,,
\end{aligned}
\end{equation}
the function $\rho(\lambda)$ is the root density associated
  with $\pmb{\lambda}$ and $g^+(\lambda)$ is defined by
\begin{equation}
g^+(\lambda)=\begin{cases}
g(\lambda)\qquad \text{if }\lambda>0\ ,\\
1\qquad \text{if }\lambda<0\ .
\end{cases}
\end{equation}
The factor $\sqrt{\rho(\lambda)\rho(\mu)}$ ensures that in the definition \eqref{pfafdef}, each integral over $[0,\pi]$ comes with a root density factor $\rho(\lambda)$. Substituting \fr{FCSeq2} and \fr{eq1} into \fr{FCSeq1} we obtain
\begin{equation}
\langle e^{i\theta \sum_{j=1}^{\ell}
  \sigma^z_j}\rangle_f=e^{i\theta\ell}|A|^2\sum_{\pmb{\lambda}\subset{\rm
    NS}_+}\Pf[{\rm
    Jd}+\mathcal{E}[\rho]]\prod_{\lambda\in\pmb{\lambda}}|g(\lambda)|^2(1+o(L^0))\ .
\end{equation}
Finally we employ Lemma \ref{qa} to arrive at our final result in
terms of a Fredholm Pfaffian
\begin{equation}
\boxed{\langle e^{i\theta \sum_{j=1}^{\ell} \sigma^z_j}\rangle_f= e^{i\theta\ell} \Pf[{\rm Jd}+\mathcal{E}[\rho_s]]}\,,
\end{equation}
where
\begin{equation}
\rho_s(k)=\frac{1}{2\pi}\frac{|g(k)|^2}{1+|g(k)|^2}\,.
\end{equation}
%%%%%%%%%%%%%%%%%%%%%%%%%%%%%%%%%%%%%
\subsection{Order-parameter one-point function }
%%%%%%%%%%%%%%%%%%%%%%%%%%%%%%%%%%%%%%
In contrast to $\sigma_\ell^z$ the longitudinal spin operator $\sigma^x_\ell$ is
non-local in the Jordan-Wigner fermions and as a consequence the
computation of its expectation value is a non-trivial problem. In this 
section we present a formula for the expectation value of the
magnetization in the CE as defined in \eqref{ceodd}, i.e.
\be
\langle \sigma^x_\ell\rangle_f\equiv
\Re \left[\Psi^{\rm R}_{h\gamma}(A^{\rm R},f)^\dagger
  \sigma^x_\ell\  \Psi^{\rm NS}_{h\gamma}(A^{\rm NS},f)\right]\ .
\label{OP}
\ee
Since $\sigma^x_\ell$ is an odd operator under fermion parity it
maps NS (R) states onto R (NS) states and only averages like \fr{OP}
are non-vanishing. We note that they arise naturally in the context of
spontaneous symmetry breaking of the spin-flip $\mathbb{Z}_2$ symmetry.
The average \fr{OP} has been derived in the Supplemental Material
of \cite{DBG} in the particular case of the Ising model, and the
generalization to the XY model is straightforward. The 
result takes the form of a Fredhom determinant
\begin{equation}\label{mag}
\boxed{\langle \sigma^x_\ell\rangle_f=\Re \Det[{\rm Id}+\mathcal{M}[\rho_s]]}\,,
\end{equation}
where we defined the following kernel acting on $[0,\pi]\times [0,\pi]$
\begin{align}
\mathcal{M}[\rho](\lambda,\mu)&=-\frac{2}{\pi}\frac{\rho(\lambda)\sin
  \lambda  }{h(\lambda)}\frac{1}{\cos \lambda-\cos
  \mu}\left[\int_0^\pi \frac{h(k)\sin k}{\cos \lambda-\cos
    k}\D{k}-\int_0^\pi \frac{h(k)\sin k}{\cos \mu-\cos
    k}\D{k}\right]\ ,\label{sigmaFD}\\
h(k)&=\frac{iK_{01;h\gamma}(k)+f(k)}{1+iK_{01;h\gamma}(k)f(k)}\ ,
\label{hhh}\qquad
\rho_s(k)=\frac{1}{2\pi}\frac{|h(k)|^2}{1+|h(k)|^2}\,.
\end{align}

In \eqref{mag} we have assumed that the function $h$ in \eqref{hhh} is
regular. We stress that in \fr{OP} the amplitudes $A^{\rm R}$ and
$A^{\rm NS}$ are given by \fr{ARNS}. In applications to
time-dependent ramps the additional phase factor discussed in
section \ref{secpha} needs to be taken into account, \emph{cf.}
section \ref{KZsec}.

\subsection{Equal-time order-parameter two-point function}

The purpose of this section is to derive the static two-point
correlation function
\begin{equation}\label{goal2}
\langle \sigma^x_{\ell+1} \sigma^x_1\rangle_f\,.
\end{equation}
We note that exact Pfaffian/determinant representations of size
$2\ell$ for the order parameter two point function in an arbitrary GE
have been derived before using Wick's theorem and various explicit
  results on large-distance asymptotics have been derived, see
  e.g. Refs \cite{xymodel1,xymodel2,xymodel3,Cherng06,CEF1,CEF2}.

To compute \eqref{goal2}, we express the coherent states in the $(0,1)$ basis and insert a complete set of eigenstates between the two $\sigma^x$ operators to obtain
\begin{equation}
\langle \sigma^x_{\ell+1} \sigma^x_1\rangle_f=|A|^2 \sum_{\pmb{\lambda},\pmb{\mu}\subset{\rm NS}_+}\sum_{\pmb{\nu}\subset {\rm R}} {}^{\rm NS}_{01}\langle \pmb{\bar{\lambda}}| \sigma^x_1|\pmb{\nu}\rangle_{01}^{\rm R} {}^{\rm R}_{01}\langle \pmb{\nu}| \sigma^x_1|\pmb{\bar{\mu}}\rangle_{01}^{\rm NS}\prod_{\lambda\in\pmb{\lambda}}h^*(\lambda)\prod_{\mu\in\pmb{\mu}}h(\mu)\prod_{\nu\in\pmb{\nu}}e^{i\ell\nu}\,,
\end{equation}
with $h(k)$ defined as in \eqref{hhh}. Using Lemma \ref{ffsigmax} to
express the form factor of $\sigma^x$ as a determinant, and Lemma
\ref{andreief} to sum over $\pmb{\nu}$, we obtain 
\begin{equation}
\langle \sigma^x_{\ell+1} \sigma^x_1\rangle_f=|A|^2 \sum_{\pmb{\lambda},\pmb{\mu}\subset{\rm NS}_+} \det C(\pmb{\bar{\lambda}},\pmb{\bar{\mu}})\prod_{\lambda\in\pmb{\lambda}}h^*(\lambda)\prod_{\mu\in\pmb{\mu}}h(\mu)\,,
\end{equation}
where
\begin{equation}\label{cc}
\begin{aligned}
C(\pmb{p},\pmb{q})_{jk}&=\frac{4}{L^2}\sum_{\nu\in{\rm R}}\frac{e^{i(\ell+1)\nu}}{(e^{ip_j}-e^{i\nu})(e^{i\nu}-e^{iq_k})}\,.
\end{aligned}
\end{equation}
To perform this sum, we now use Lemma \ref{sum}. If $p_j\neq q_k$, we
decompose the summand into partial fractions with respect to
$e^{i\nu}$ and use \eqref{sum1} to carry out the sum over $\nu\in{\rm
  R}$. If $p_j=q_k$ we use the derivative with respect to $z$ of
\eqref{sum1}. 
We obtain
\begin{equation}\label{C}
C(\pmb{p},\pmb{q})_{jk}=\begin{cases}
-\frac{2}{L}\frac{e^{i\ell p_j}-e^{i\ell q_k}}{e^{ip_j}-e^{iq_k}}&
\text{if }p_j\neq q_k\ ,\\
\left(1-\frac{2\ell}{L}\right)e^{ip_j(\ell-1)}& \text{if }p_j=q_k\ .
\end{cases}
\end{equation}
We next use Lemma \ref{brujin} to sum over $\pmb{\mu}$, which gives
\begin{equation}
%\label{eq2}
\langle \sigma^x_{\ell+1} \sigma^x_1\rangle_f=|A|^2
\sum_{\pmb{\lambda}\subset{\rm
    NS}_+}(-1)^{N/2}\pf[\tilde{C}(\pmb{\bar{\lambda}})-\tilde{C}(\pmb{\bar{\lambda}})^T]\prod_{\lambda\in\pmb{\lambda}}h^*(\lambda)\ .
\end{equation}
Here $N$ is the number of roots in $\pmb{\bar{\lambda}}$ and 
\begin{align}
\tilde{C}(\pmb{q})_{jk}&=(1-\delta_{q_j+q_k,0})\tilde{C}_1(q_j,q_k)
+\delta_{q_j+q_k,0}\ \tilde{ C}_2(q_j,q_k)\ ,\nn
\tilde{ C}_1(q_j,q_k)&=
\frac{4}{L^2}\sum_{\substack{p\in{\rm NS}_+\\ p\neq q_j,-q_k}}\frac{e^{i\ell q_j}-e^{i\ell p}}{e^{iq_j}-e^{ip}}\frac{e^{-i\ell p}-e^{i\ell q_k}}{e^{-ip}-e^{iq_k}}h(p)\nn
&-(\1_{q_j>0}h(q_j)+\1_{q_k<0}h(-q_k))\frac{2}{L}\left(1-\frac{2\ell}{L}
\right)\frac{1-e^{i\ell(q_k+q_j)}}{1-e^{i(q_k+q_j)}}\ ,\nn
\tilde{C}_2(q_j,q_k)&=\frac{4}{L^2}\sum_{\substack{p\in{\rm
      NS}_+\\ p\neq q_j}}\left|\frac{e^{i\ell q_j}-e^{i\ell
    p}}{e^{iq_j}-e^{ip}}\right|^2h(p)+\1_{q_j>0}\left(1-\frac{2\ell}{L}
\right)^2h(q_j)\ .
\end{align}
Taking the thermodynamic limit we obtain a Fredholm Pfaffian
\begin{equation}
(-1)^{N/2}\pf[\tilde{C}(\pmb{\bar{\lambda}})-\tilde{C}(\pmb{\bar{\lambda}})^T]=\Pf[{\rm Jd}+\mathcal{C}_2[\rho]]\prod_{\lambda\in\pmb{\lambda}}h(\lambda)\,,
\end{equation}
where $\rho$ is the root density corresponding to
$\pmb{\lambda}$ and where $\mathcal{C}_2[\rho]$ is the following kernel acting on $[-\pi,\pi]\times [-\pi,\pi]$
\begin{equation}
\begin{aligned}
\label{correlation}
\mathcal{C}_2[\rho](\lambda,\mu)&=-2
\frac{\sqrt{\rho(\lambda)\rho(\mu)}}{h^+(\lambda)h^+(\mu)}\Biggl[
\frac{1-e^{i(\lambda+\mu)\ell}}{1-e^{i(\lambda+\mu)}}(h(\lambda)-h(\mu))\nn
&-\int_0^\pi\frac{\D{k}}{\pi}
\Bigl(\frac{1-e^{i\ell(\lambda-k)}}{1-e^{i(\lambda-k)}}\frac{1-e^{i\ell(\mu+k)}}{1-e^{i(\mu+k)}}-\frac{1-e^{i\ell(\mu-k)}}{1-e^{i(\mu-k)}}\frac{1-e^{i\ell(\lambda+k)}}{1-e^{i(\lambda+k)}}\Bigr)h(k)\Biggr],\\
\end{aligned}
\end{equation}
with
\begin{equation}
h^+(\lambda)=\begin{cases}
h(\lambda)\qquad \text{if }\lambda>0\\
1\qquad \text{if }\lambda<0
\end{cases}\,.
\end{equation}
We then employ Lemma \ref{qa} to arrive at our final result
\begin{equation}
\boxed{\langle \sigma^x_{\ell+1} \sigma^x_1\rangle_f=\Pf[{\rm Jd}+\mathcal{C}_2[\rho_s]]}\,,
\end{equation}
where
\begin{equation}
\rho_s(k)=\frac{1}{2\pi}\frac{|h(k)|^2}{1+|h(k)|^2}\,.
\end{equation}

%%%%%%%%%%%%%%%%%%%%%%%%%%%%%%%%%%%%%%%%%%%%%%%%%%
\subsection{Equal-time order-parameter three-point function}
%%%%%%%%%%%%%%%%%%%%%%%%%%%%%%%%%%%%%%%%%%%%%%%%%%
The purpose of this section is to show that the strategy employed for
one and two-point functions can be generalized straightforwardly to
higher-point functions. We consider the particular example of the
  order-parameter three-point function
\begin{equation}
\langle \sigma^x_{\ell_2+\ell_1+1}\sigma^x_{\ell_1+1} \sigma^x_1\rangle_f\,.
\label{3ptfn}
\end{equation}
This operator is odd and non-local in terms of the Jordan-Wigner
fermions and as far as we are aware of there is no known Pfaffian or
determinant representation of \fr{3ptfn} in the thermodynamic limit.

We then follow the same steps as for the two-point function by
expressing the two coherent states in the $(0,1)$ basis and inserting
complete sets of eigenstates between each operator to obtain 
\begin{equation}
\begin{aligned}
\langle \sigma^x_{\ell_2+\ell_1+1}\sigma^x_{\ell_1+1} \sigma^x_1\rangle_f=\Re A^{\rm R *}A^{\rm NS}&\sum_{\substack{\pmb{\lambda}\subset {\rm R}_+\\ \substack{\pmb{\mu}\subset {\rm NS}_+}}}\sum_{\substack{\pmb{\nu}\subset{\rm NS}\\ \pmb{\kappa}\subset{\rm R}}} {}^{\rm R}_{01}\langle \pmb{\bar{\lambda}}| \sigma^x_1|\pmb{\nu}\rangle_{01}^{\rm NS} {}^{\rm NS}_{01}\langle \pmb{\nu}| \sigma^x_1|\pmb{\kappa}\rangle_{01}^{\rm R}{}^{\rm R}_{01}\langle \pmb{\kappa}| \sigma^x_1|\pmb{\bar{\mu}}\rangle_{01}^{\rm NS}\\
&\times \prod_{\lambda\in\pmb{\lambda}}h^*(\lambda)\prod_{\mu\in\pmb{\mu}}h(\mu)\prod_{\nu\in\pmb{\nu}}e^{i\ell_2\nu}\prod_{\kappa\in\pmb{\kappa}}e^{i\ell_1\kappa}\,.
\end{aligned}
\end{equation}
Next we perform the sum over $\pmb{\kappa}$ by employing Lemmas \ref{ffsigmax} and
\ref{andreief} and obtain an analogous expression as in the two-point function case
\begin{equation}
\begin{aligned}
\langle \sigma^x_{\ell_2+\ell_1+1}\sigma^x_{\ell_1+1} \sigma^x_1\rangle_f&=\Re A^{\rm R *}A^{\rm NS}\\
&\times\sum_{\substack{\pmb{\lambda}\subset {\rm
      R}_+\\ \substack{\pmb{\mu}\subset {\rm
        NS}_+}}}\sum_{\substack{\pmb{\nu}\subset{\rm NS}}} {}^{\rm
  R}_{01}\langle \pmb{\bar{\lambda}}|
\sigma^x_1|\pmb{\nu}\rangle_{01}^{\rm NS} \det
C(\pmb{\nu},\pmb{\bar{\mu}})
\prod_{\lambda\in\pmb{\lambda}}h^*(\lambda)\prod_{\mu\in\pmb{\mu}}h(\mu)\prod_{\nu\in\pmb{\nu}}e^{i(\ell_2-1/2)\nu}\ .
\end{aligned}
\end{equation}
Here $C(\pmb{\nu},\pmb{\bar{\mu}})$ is given by \eqref{C} with
$\ell$ replaced by $\ell_1$. Then we use Lemmas \ref{ffsigmax}
and \ref{andreief} to perform the sum over $\pmb{\nu}$ and obtain 
\begin{equation}
\begin{aligned}
\langle \sigma^x_{\ell_2+\ell_1+1}\sigma^x_{\ell_1+1} \sigma^x_1\rangle_f=\Re A^{\rm R *}A^{\rm NS}&\sum_{\substack{\pmb{\lambda}\subset {\rm R}_+\\ \substack{\pmb{\mu}\subset {\rm NS}_+}}}  \det  C'(\pmb{\bar{\lambda}},\pmb{\bar{\mu}}) \prod_{\lambda\in\pmb{\lambda}}h^*(\lambda)\prod_{\mu\in\pmb{\mu}}h(\mu)\,,
\end{aligned}
\end{equation}
where
\begin{equation}\label{cprime}
C'(\pmb{p},\pmb{q})=\frac{2}{L}\sum_{\nu\in {\rm
    NS}}\frac{e^{i(\ell_2+1)\nu}}{e^{ip_j}-e^{i\nu}}\times\begin{cases} 
-\frac{2}{L}\frac{e^{i\ell_1 \nu}-e^{i\ell_1q_k}}{e^{i \nu}-e^{iq_k}}&
\text{if }\nu\neq q_k\ ,\\
(1-\tfrac{2\ell_1}{L})e^{i\nu(\ell_1-1)}& \text{if }\nu= q_k\ .
\end{cases}
\end{equation}
Writing
\begin{equation}
\frac{e^{i\ell_1 \nu}-e^{i\ell_1q_k}}{e^{i \nu}-e^{iq_k}}=e^{i(\ell_1-1)\nu}\sum_{m=0}^{\ell_1-1}e^{im(q_k-\nu)}\,,
\end{equation}
we can use Eq \eqref{sum2} in Lemma \ref{sum} to compute $C'(\pmb{p},\pmb{q})$. We find
\begin{equation}
C'(\pmb{p},\pmb{q})_{jk}=\frac{2}{L}\frac{e^{i(\ell_1+\ell_2)p_j}-e^{i(\ell_1q_k+\ell_2p_j)}+e^{i(\ell_1+\ell_2)q_k}}{e^{ip_j}-e^{iq_k}}\,.
\end{equation}
We then use Lemma \ref{brujin} to sum over $\pmb{\mu}$ to obtain
\begin{equation}
\label{eq2}
\langle \sigma^x_{\ell_2+\ell_1+1}\sigma^x_{\ell_1+1} \sigma^x_1\rangle_f=\Re A^{\rm R *}A^{\rm NS} \sum_{\pmb{\lambda}\subset{\rm R}_+}(-1)^{N/2}\pf[\tilde{C'}(\pmb{\bar{\lambda}})-\tilde{C'}(\pmb{\bar{\lambda}})^T]\prod_{\lambda\in\pmb{\lambda}}h^*(\lambda)\,,
\end{equation}
where $N$ is the number of roots in $\pmb{\bar{\lambda}}$ and
\begin{equation}
\begin{aligned}
\tilde{C'}(\pmb{\bar{\lambda}})_{jk}=\frac{4}{L^2}\sum_{q\in{\rm NS}_+}&\frac{e^{i(\ell_1+\ell_2)p_j}-e^{i(\ell_1q+\ell_2p_j)}+e^{i(\ell_1+\ell_2)q}}{e^{ip_j}-e^{iq}}\\
&\times\frac{e^{i(\ell_1+\ell_2)p_k}-e^{i(-\ell_1q+\ell_2p_k)}+e^{-i(\ell_1+\ell_2)q}}{e^{ip_k}-e^{-iq}}h(q)\,.
\end{aligned}
\end{equation}
In the thermodynamic limit  the remaining sum can be converted
into an integral, except when $p_j=-p_q$ where an additional
contribution $\delta_{p_j,-p_k}h(p_j)$ arises from the double
pole in $q$. This results in a Fredholm Pfaffian  
\begin{equation}
(-1)^{N/2}\pf[\tilde{C'}(\pmb{\bar{\lambda}})-\tilde{C'}(\pmb{\bar{\lambda}})^T]=\Pf[{\rm
      Jd}+\mathcal{C}_3[\rho]]\prod_{\lambda\in\pmb{\lambda}}h(\lambda)
+{\cal O}(L^{-1})\ ,
\end{equation}
where $\rho$ is the root density corresponding to $\pmb{\lambda}$ and where $\mathcal{C}_3[\rho]$ is the following kernel acting on $[-\pi,\pi]\times [-\pi,\pi]$
\begin{align}
\mathcal{C}_3[\rho](\lambda,\mu)&=\frac{2}{\pi}\frac{\sqrt{\rho(\lambda)\rho(\mu)}}{h^+(\lambda)h^+(\mu)}\int_0^\pi
\Big[a(\lambda,k)\ a(\mu,-k)-a(\lambda,-k)\ a(\mu,k)\Big]h(k)\D{k}\ ,\nn
a(\lambda,k)&=
\frac{1-e^{i\ell_2(\lambda-k)}+e^{i(\ell_1+\ell_2)(\lambda-k)}}{e^{i\lambda}-e^{ik}}.
\end{align}
This expression for $\mathcal{C}_3[\rho](\lambda,\mu)$ is to be
  understood as a principal value integral with simple poles at
$k=\pm\lambda,\pm\mu$ for $\lambda\neq-\mu$, and is defined by
continuity for $\lambda=-\mu$. Finally we apply Lemma \ref{qa} to
\eqref{eq2}, which results in the Fredholm Pfaffian
\begin{equation}\label{3}
\boxed{\langle \sigma^x_{\ell_2+\ell_1+1}\sigma^x_{\ell_1+1} \sigma^x_1\rangle_f=\Re \Pf[{\rm Jd}+\mathcal{C}_3[\rho_s]]}\,,
\end{equation}
where
\begin{equation}
\rho_s(k)=\frac{1}{2\pi}\frac{|h(k)|^2}{1+|h(k)|^2}\,.
\end{equation}
In \eqref{3} we have once again assumed that the function $h$ in \eqref{hhh} is
regular. 
%If this is not the case there is an additional
%phase factor, \emph{cf.} section \ref{secpha}.}

%%%%%%%%%%%%%%%%%%%%%%%%%%%%%%%%%%%%%%%%%%%%%%%%%%%%
\subsection{Dynamical order-parameter two-point function}
%%%%%%%%%%%%%%%%%%%%%%%%%%%%%%%%%%%%%%%%%%%%%%%%%%%%
We now turn to the non-equal-time two-point function of $\sigma^x$ in
the CE, i.e.
\begin{equation}
C^{xx}(\ell,t)
\equiv\langle \sigma^x_{\ell+1}(t/2)\sigma^x_1(-t/2) \rangle_f
=\langle e^{itH(h,\gamma)/2} \sigma^x_{\ell+1} e^{-itH(h,\gamma)}
\sigma^x_1 e^{itH(h,\gamma)/2} \rangle_f\ .
\label{dyn}
\end{equation}
A particular case of the correlator \fr{dyn} is the dynamical
two-point function in the XY model in a field after a quantum
quench. This has been considered previously for $\gamma=1$ and
analytic results were obtained at low densities of excitations and
large space/time separations \cite{essler12}.

\subsubsection{Summation of the $\sigma^x$ form factors}

Without loss of generality we choose the coherent state in \eqref{dyn}
to belong to the R sector and then expand it as \eqref{coherent} in the $(0,1)$ basis. We
then insert a complete set of eigenstates between each of the
operators to obtain
\begin{align}\label{todod}
C^{xx}(\ell,t)&=A^{\rm
  R}_tA^{\rm R*}_{-t}
\sum_{\pmb{q},\pmb{k}\subset{\rm
    R}_+}\sum_{\pmb{\lambda},\pmb{\mu}\subset{\rm NS}} \left[
  \prod_{q\in\pmb{q}}h^*_{-t}(q)\prod_{k\in\pmb{k}}h_t(k)\right]\nn
&\qquad\ \times\ {}^{\rm R}_{01}\langle
\pmb{\bar{\bar{q}}}|\sigma^x_{\ell+1} |\pmb{\lambda}\rangle^{\rm
  NS}_{01}\ {}^{\rm NS}_{01}\langle \pmb{\lambda}
|e^{-iH(h,\gamma)t}|\pmb{\mu}\rangle^{\rm NS}_{01}\ {}^{\rm
  NS}_{01}\langle
\pmb{\mu}|\sigma^x_{1}|\pmb{\bar{\bar{k}}}\rangle_{01}^{\rm R}\,, 
\end{align}
where we have from Theorem \ref{thm} and Eq \eqref{evoty}
\begin{align}
\label{hdyna}
h_t(k)&=\frac{iK_{01;h\gamma}(k)+e^{it\varepsilon_{h\gamma}(k)}f(k)}{1+iK_{01;h\gamma}(k)e^{it\varepsilon_{h\gamma}(k)}f(k)}\ ,\\
\label{ar}
A_t^{\rm R}&=e^{it \mathfrak{E}^{\rm R}/2}\prod_{k\in {\rm
    R}_+}\sqrt{\frac{1+K_{01;h\gamma}(k)^2}{1+|f(k)|^2}}
\frac{1}{1-iK_{01;h\gamma}(k) h_t(k)}\ .
\end{align}
For later convenience we introduce
\begin{equation}\label{ans}
A_t^{\rm NS}=e^{it \mathfrak{E}^{\rm NS}/2}\prod_{k\in {\rm NS}_+}\sqrt{\frac{1+K_{01;h\gamma}(k)^2}{1+|f(k)|^2}} \frac{1}{1-iK_{01;h\gamma}(k) h_t(k)}\,.
\end{equation}
In the remainder of the section we will use the shorthand notations
$K(k)\equiv K_{01;h\gamma}(k)$ and $\varepsilon(k)\equiv
\varepsilon_{h\gamma}(k)$. 

To evaluate \eqref{todod}, we first express the $\sigma^x$ form
factors as determinants using Lemma \ref{ffsigmax}. Because of the pair structure of the states $\pmb{\bar{\bar{k}}}$ and $\pmb{\bar{\bar{q}}}$, each $k_i$ and $q_j$ appear twice in these determinants. Hence the sums over $\pmb{q},\pmb{k}\subset {\rm R}_+$ are of the form of Lemma \ref{brujin}. It yields
\begin{align}
\label{eqlemma}
C^{xx}(\ell,t)&=
A_t^{\rm R}A^{\rm R*}_{-t}\sum_{\pmb{\lambda},\pmb{\mu}\subset{\rm
    NS}} {}^{\rm NS}_{01}\langle \pmb{\lambda}
|e^{-iH(h,\gamma)t}|\pmb{\mu}\rangle^{\rm
  NS}_{01}\ \pf[D_{t}(\pmb{\mu})-D_{t}(\pmb{\mu})^T]\nn
&\qquad\qquad\times\ \pf[D_{-t}(\pmb{\lambda})-D_{-t}(\pmb{\lambda})^T]^*\prod_{\lambda\in\pmb{\lambda}}e^{i(\ell+1/2) \lambda}\prod_{\mu\in\pmb{\mu}}e^{-i\mu/2}\,,
\end{align}
where
\begin{equation}
D_{t}(\pmb{\mu})_{jk}=\frac{4}{L^2}\sum_{p\in {\rm R}_+}\frac{h_t(p)}{(e^{ip}-e^{i\mu_j})(e^{-ip}-e^{i\mu_k})}\,.
\end{equation}
The thermodynamic limit of this expression is
\begin{equation}\label{ddd}
\begin{aligned}
D_t(\pmb{\mu})_{jk}&=h_t(\mu_j)\delta_{\mu_j,-\mu_k}\1_{\mu_j>0}\\
&+\frac{2}{\pi L(1-e^{i(\mu_j+\mu_k)})}\left[\int_0^\pi \frac{h_t(p)}{1-e^{i(\mu_j-p)}}\D{p}-\int_0^\pi \frac{h_t(p)}{1-e^{-i(\mu_k+p)}}\D{p}\right]+\mathcal{O}(L^{-2})\,,
\end{aligned}
\end{equation}
where the second term is understood as a derivative when
$\mu_j=-\mu_k$. 

\subsubsection{Thermodynamic limit of the Pfaffians}
The thermodynamic limit of the Pfaffians appearing in \eqref{eqlemma}
is more involved than for the equal-time correlations treated in the
previous sections. Indeed, $\pmb{\lambda}$ and $\pmb{\mu}$ are not
necessarily pair states and so the ``anti-diagonal'' term
$\delta_{\mu_j,-\mu_k}\1_{\mu_j>0}$ in \eqref{ddd} is not always
present. To treat this complication we introduce two sets of momenta 
$\pi(\pmb{\mu}),\sigma(\pmb{\mu})$ as in Lemma \ref{overlap}. One sees
that the behaviour of $D_t(\pmb{\mu})-D_t(\pmb{\mu})^T$ significantly depends on whether the
$\mu$'s are paired $\mu\in\pi(\pmb{\mu})$, in which case there is a
non-zero anti-diagonal term $\delta_{\mu_j,-\mu_k}$ of order $L^0$, or
whether they are not paired $\mu\in\sigma(\pmb{\mu})$, in which case this
``anti-diagonal'' term is absent. In order to use Lemma \ref{fredh} we employ
Cayley's relation 
\begin{equation}\label{cayley}
\pf[D_{t}(\pmb{\mu})-D_{t}(\pmb{\mu})^T]^2=\det[D_{t}(\pmb{\mu})-D_{t}(\pmb{\mu})^T]\,,
\end{equation}
and write
\begin{align}
[D_{t}(\pmb{\mu})-D_{t}(\pmb{\mu})^T]_{jk}=&h_t(\mu_j)\delta_{\mu_j,-\mu_k}+\frac{1}{L}d_t(\mu_j,\mu_k)\,,\nn
d_t(\lambda,\mu)=&\frac{2}{\pi (1-e^{i(\lambda+\mu)})}\Bigg[\int_{0}^\pi \frac{h_t(p)}{1-e^{i(\lambda-p)}}\D{p}+\int_{0}^\pi \frac{h_t(p)}{1-e^{-i(\lambda+p)}}\D{p}\nn
&\hskip2.2cm-\int_{0}^\pi \frac{h_t(p)}{1-e^{i(\mu-p)}}\D{p}-\int_{0}^\pi \frac{h_t(p)}{1-e^{-i(\mu+p)}}\D{p}\Bigg]\,.
\end{align}
In the determinant \eqref{cayley} we then rearrange the lines and columns
in such a way that the paired roots $\mu_j\in\pi(\pmb{\mu})$ appear on
the ``anti-diagonal'' of the matrix $D_{t}(\pmb{\mu})-D_{t}(\pmb{\mu})^T$
and are ordered among themselves (but the unpaired roots in
$\sigma(\pmb{\mu})$ are not necessarily ordered). We then
factorize $D_{t}(\pmb{\mu})-D_{t}(\pmb{\mu})^T=LR$, where
\be
R_{ij}=\delta_{i,N+1-j}\sign(j-i)\prod_{\mu\in\pi(\pmb{\mu})}
h_t^2(\mu)\ ,\quad i,j=1,\dots,N.
\ee
%the anti-diagonal
%matrix $\delta_{i,N+1-j}\sign(j-i)$, where $N$ is the size of the
%matrix, as well as $\prod_{\mu\in\pi(\pmb{\mu})} h_t^2(\mu)$.
This
way, the determinant  $\det[D_{t}(\pmb{\mu})-D_{t}(\pmb{\mu})^T]$ is
of the form of Lemma \ref{fredh}, with $n$ the 
number of unpaired roots $\sigma(\pmb{\mu})=\{\nu_1,...,\nu_n\}$ and
with functions
\begin{align}
f(\lambda,\mu)&=\frac{d_t(\lambda,-\mu)}{h^+_t(\lambda)h^+_t(\mu)}\ ,\quad
g_j(\lambda)=\frac{d_t(\lambda,\nu_j)}{h^+_t(\lambda)}
\ ,\quad h_i(\mu)=\frac{d_t(\nu_i,-\mu)}{h^+_t(\mu)}\ ,\quad
a_{i,j}=d_t(\nu_i,\nu_j)\,,
\end{align}
where we introduced
\begin{equation}
h_t^+(\lambda)=\begin{cases}
h_t(\lambda)& \text{if }\lambda>0\ ,\\
1& \text{if }\lambda<0\ .
\end{cases}
\end{equation}
%Denoting $\phi_{t,\rho}(\lambda,\mu)$ the resolvant satisfying
%\begin{equation}\label{resolvant}
%\phi_{t,\rho}(\lambda,\mu)+\int_{-\pi}^\pi \mathcal{D}_t(\lambda,\nu)\rho(\nu)\phi_{t,\rho}(\nu,\mu)\D{\nu}=\frac{\delta(\lambda-\mu)}{\rho(\lambda)}\,,
%\end{equat ion}
We thus obtain as we approach the thermodynamic limit
\begin{align}
\det[D_t(\pmb{\mu})-D_t(\pmb{\mu})^T]&=\frac{1}{L^{|\sigma(\pmb{\mu})|}}\Det_{\lambda,\mu}[{\rm
    Id}+\mathcal{D}_{\rho,t}(\lambda,-\mu)] \det\left[
  \mathcal{F}_{\rho,t}(\nu_i,\nu_j)
  \right]_{\nu_i,\nu_j\in\sigma(\pmb{\mu})}\nn
&\qquad\times\prod_{\mu\in\pi(\pmb{\mu})}h_t^2(\mu)\ ,
\end{align}
where $\rho$ the root density corresponding to $\pmb{\mu}$,
and where we defined the following kernel acting on $[-\pi,\pi]\times
[-\pi,\pi]$ 
\begin{equation}
\begin{aligned}
\mathcal{D}_{\rho,t}(\lambda,\mu)&=\frac{\sqrt{\rho(\lambda)\rho(\mu)}}{h_t^+(\lambda)h_t^+(\mu)}d_t(\lambda,\mu)\,,
\end{aligned}
\end{equation}
and $\mathcal{F}_{\rho,t}(\lambda,\mu)$ satisfies the linear integral equation
\begin{equation}\label{F}
\mathcal{F}_{\rho,t}(\lambda,\mu)+\int_{-\pi}^\pi\frac{d_t(\lambda,-\nu)}{h_t^+(\lambda)h_t^+(\nu)}\mathcal{F}_{\rho,t}(\nu,\mu)\rho(\nu)\D{\nu}=d_t(\lambda,\mu)\ .
\end{equation}
Eqn \fr{F} is obtained from \eqref{reslem10} by using the
equation for the resolvent \eqref{reso1} as well as its equivalent
definition \eqref{reso2}. It is useful to define two further functions
by 
\begin{align}
\mathcal{D}'_{\rho,t}(\lambda,\mu)&=\frac{\sqrt{\rho(\lambda)\rho(\mu)}}{(h_t^+)^*(\lambda)(h_t^+)^*(\mu)}d_t^*(\lambda,\mu)\ ,\nn
\mathcal{F}'_{\rho,t}(\lambda,\mu)&+\int_{-\pi}^\pi\frac{d^*_t(\lambda,-\nu)}{(h_t^+)^*(\lambda)(h_t^+)^*(\nu)}\mathcal{F}'_{\rho,t}(\nu,\mu)\rho(\nu)\D{\nu}=d^*_t(\lambda,\mu)\ .
\end{align}
Now, using that $d_t(\lambda,\mu)=-d_t(\mu,\lambda)$, we find from \eqref{F} and \eqref{reso2} that $\mathcal{F}_{\rho,t}(\lambda,\mu)=-\mathcal{F}_{\rho,t}(\mu,\lambda)$. Hence $\left[ \mathcal{F}_{\rho,t}(\nu_i,\nu_j)
  \right]_{\nu_i,\nu_j\in\sigma(\pmb{\mu})}$ is an antisymmetric
matrix, and we can write its determinant as the square of its
Pfaffian 
\begin{equation}
\det\left[ \mathcal{F}_{\rho,t}(\nu_i,\nu_j) \right]_{\nu_i,\nu_j\in\sigma(\pmb{\mu})}=\pf\left[ \mathcal{F}_{\rho,t}(\nu_i,\nu_j) \right]_{\nu_i,\nu_j\in\sigma(\pmb{\mu})}^2\,.
\end{equation}
This results in a Fredholm Pfaffian as we approach the
thermodynamic limit
\begin{equation}
(-1)^{N/2}\pf[D_t(\pmb{\mu})-D_t(\pmb{\mu})^T]=\frac{1}{L^{|\sigma(\pmb{\mu})|/2}}\Pf[{\rm Jd}+\mathcal{D}_{\rho,t}] \pf\left[ \mathcal{F}_{\rho,t}(\nu_i,\nu_j) \right]_{\nu_i,\nu_j\in\sigma(\pmb{\mu})}\prod_{\mu\in\pi(\pmb{\mu})}h_t(\mu)\,.
\end{equation}

\subsubsection{Summation over the $e^{-itH(h,\gamma)}$ form factors}
Returning to \eqref{eqlemma} we now see that the form factor of
$e^{-itH(h,\gamma)}$ given in Lemma \ref{ffh} imposes that
$\sigma(\pmb{\lambda})=\sigma(\pmb{\mu})$. This permits us to write 
\begin{align}
\label{eqp}
C^{xx}(\ell,t)&=
A^{\rm R}_tA^{\rm R*}_{-t}\sum_{\substack{\pmb{\nu}\subset {\rm NS}\\ \pmb{\nu}\cap (-\pmb{\nu})=\emptyset}}\frac{\prod_{\nu\in\pmb{\nu}}e^{i\ell \nu}}{L^{|\pmb{\nu}|}}\!\!
\sum_{\substack{\pmb{\lambda},\pmb{\mu}\subset\\{\rm
      NS}_+-\{\pmb{\nu},-\pmb{\nu}\}}} {}^{\rm NS}_{01}\langle
\pmb{\bar{\lambda}}\cup \pmb{\nu}
|e^{-iH(h,\gamma)t}|\pmb{\bar{\mu}}\cup \pmb{\nu}\rangle^{\rm NS}_{01}\nn
&\times
\prod_{\lambda\in\pmb{\lambda}}h_{-t}^*(\lambda)\prod_{\mu\in\pmb{\mu}}h_t(\mu)\ 
\Pf[{\rm Jd}+\mathcal{D}_{\rho,t}]\
\Pf[{\rm Jd}+\mathcal{D}_{\rho',-t}']\nn
&\times\pf \left[ \mathcal{F}_{\rho,t}(\nu_i,\nu_j) \right]_{\nu_i,\nu_j\in\pmb{\nu}}\pf\left[ \mathcal{F}'_{\rho',-t}(\nu_i,\nu_j) \right]_{\nu_i,\nu_j\in\pmb{\nu}}\,,
\end{align}
where $\rho$ and $\rho'$ are the root densities corresponding to
$\pmb{\mu}$ and $\pmb{\lambda}$ respectively. 
At fixed $\pmb{\nu}$, given the form
factor of $e^{-itH(h,\gamma)}$  in Lemma \ref{ffh}, the summand is of
the form of Lemma \ref{qa2} with 
\be
f=iK h_{-t}^*\frac{1-e^{-2it\varepsilon}}{1+K^2e^{-2it\varepsilon}}\ , \
g=-iK h_{t}\frac{1-e^{-2it\varepsilon}}{1+K^2e^{-2it\varepsilon}}\ ,\
h=\frac{(1+e^{-2it\varepsilon}K^2)(1+e^{-2it\varepsilon}/K^2)}{(1-e^{-2it\varepsilon})^2},
\ee
and with ${\rm NS}_+$ replaced by ${\rm NS}_+-\{\pmb{\nu}\cup -\pmb{\nu}\}$. To apply Lemma \ref{qa2}, let us first investigate the denominator of Eq \eqref{deno}. We note that the form factor of $e^{-iH(h,\gamma)t}$ in
Lemma \ref{ffh} generates a factor 
\begin{equation}
e^{-it\mathfrak{E}^{\rm NS}}\prod_{k\in{\rm NS}_+}\frac{1+K^2(k)e^{-2it\varepsilon(k)}}{1+K^2(k)}\,.
\end{equation}
Moreover, from \eqref{ans} we have
\begin{equation}
\begin{aligned}
&A^{\rm NS}_{t}A^{\rm NS*}_{-t}e^{-it\mathfrak{E}^{\rm NS}}\prod_{k\in{\rm NS}_+}\frac{1+K^2(k)e^{-2it\varepsilon(k)}}{1+K^2(k)}\\
&=\prod_{k\in{\rm NS}_+}\frac{1+K^2(k)e^{-2it\varepsilon(k)}}{[1+h_th_{-t}^*K^2+(h_th_{-t}^*+K^2)e^{-2it\varepsilon}-iK(h_t-h_{-t}^*)(1-e^{-2it\varepsilon})](k)}\,,
\end{aligned}
\end{equation}
which is precisely the inverse of the denominator in
\eqref{deno}. Hence, defining the following (complex) root density 
\begin{equation}\label{rhot}
\begin{aligned}
\rho_t&=\frac{1}{2\pi}\frac{-iK(1-e^{-2it\varepsilon})h_t+(K^2+e^{-2it\varepsilon})h_th_{-t}^*}{1+h_th_{-t}^*K^2+(h_th_{-t}^*+K^2)e^{-2it\varepsilon}-iK(h_t-h_{-t}^*)(1-e^{-2it\varepsilon})}\ ,\\ 
\rho_t'&=[\rho_{-t}]^*\,,
\end{aligned}
\end{equation}
appearing in  Lemma \ref{qa2}, we obtain
\begin{align}
\label{eqp2}
C^{xx}(\ell,t)=\phi_\infty(t)\phi_\infty(-t)^*&\sum_{\substack{\pmb{\nu}\subset {\rm NS}\\ \pmb{\nu}\cap (-\pmb{\nu})=\emptyset}}\frac{\prod_{\nu\in\pmb{\nu}}s_{\ell,t}(\nu)}{L^{|\pmb{\nu}|}}\
\Pf[{\rm Jd}+\mathcal{D}_{\rho_t,t}]\Pf[{\rm
    Jd}+\mathcal{D}_{\rho_{t}',-t}']\nn
&\times\ \pf \left[ \mathcal{F}_{\rho_t,t}(\nu_i,\nu_j) \right]_{\nu_i,\nu_j\in\pmb{\nu}}\pf\left[ \mathcal{F}'_{\rho_{t}',-t}(\nu_i,\nu_j) \right]_{\nu_i,\nu_j\in\pmb{\nu}}\,,
\end{align}
where
\begin{equation}\label{slt}
s_{\ell,t}(z)=\frac{1}{2\pi}\frac{[1+K^2(z)]e^{i(\ell
  z-t\varepsilon(z))}}{[1+h_th_{-t}^*K^2+(h_th_{-t}^*+K^2)e^{-2it\varepsilon}-iK(h_t-h_{-t}^*)(1-e^{-2it\varepsilon})](z)}
\,,
\end{equation}
and
\begin{equation}
\phi_\infty(t)=\underset{L\to\infty}{\lim}\,\frac{A^{\rm R}_t}{A^{\rm NS}_t}\,.
\end{equation}
The factor $s_{\ell,t}(z)$ arises from the terms in \eqref{ffheq} corresponding to the unpaired roots $\pmb{\nu}$, and the fact that Lemma \ref{qa2} is applied with ${\rm NS}_+$ replaced by ${\rm NS}_+-\{\pmb{\nu}\cup -\pmb{\nu}\}$. The phase $\phi_\infty(t)$ is identical to the phase discussed in Section \ref{secpha}. However, since here the operators involved in the dynamical correlation are time-evolved with a Hamiltonian with a constant magnetic field and anisotropy, the phase can be expressed only in terms of quantities at $t$. It can be straightforwardly computed with Eqs \eqref{ans} and \eqref{ar}, and so is much easier to evaluate numerically than the generic phase discussed in Section \ref{secpha}.

\subsubsection{Representation as a product of Pfaffians}

In the thermodynamic limit the sums over the unpaired roots
$\nu\in\pmb{\nu}$ in \eqref{eqp2} can be converted into
$|\pmb{\nu}|=2n$-fold integrals over $[-\pi,\pi]$, because the cases
where $\nu_i=-\nu_j$ that are excluded in \eqref{eqp2} are negligible
by at least a factor of $L$. This provides us with a
  multiple-integral representation of the form
\begin{align}
\label{dynamical}
C^{xx}(\ell,t)&=
\phi_\infty(t)\phi_\infty(-t)^*\Pf[{\rm Jd}+\mathcal{D}_{\rho_t,t}]\Pf[{\rm Jd}+\mathcal{D}'_{\rho^*_{-t},-t}]\nn
&\times\sum_{n\geq 0}\frac{1}{(2n)!}\int_{-\pi}^\pi \dots \int_{-\pi}^\pi \prod_{j=1}^{2n}s_{\ell,t}(z_j)\underset{i,j}{\pf}[\mathcal{F}_{\rho_t,t}(z_i,z_j)]\underset{i,j}{\pf}[\mathcal{F}'_{\rho_{-t}^*,-t}(z_i,z_j)]\D{z_1}...\D{z_{2n}}\,,
\end{align}
%where
%\begin{equation}\label{slt}
%s_{\ell,t}(z)=\frac{1}{2\pi}\frac{[1+K^2(z)]e^{i(\ell
%  z-t\varepsilon(z))}}{[1+K^2e^{-2it\varepsilon}-iK(h_t-h^*_{-t})(1-e^{-2it\varepsilon})+h_th^*_{-t}(K^2+e^{-2it\varepsilon})](z)}
%\,.
%\end{equation}
with the convention that the term for $n=0$ in the series is equal to $1$. We now observe that
\begin{equation}
\begin{aligned}
 &\underset{i,j}{\pf}[\mathcal{F}_{\rho_t,t}(z_i,z_j)]\underset{i,j}{\pf}[\mathcal{F}'_{\rho_{-t}^*,-t}(z_i,z_j)]\prod_{j=1}^{2n}s_{\ell,t}(z_j)\\
 &\qquad=\pf\left[\begin{matrix}(s_{\ell,t}(z_i)s_{\ell,t}(z_j)\mathcal{F}_{\rho_t,t}(z_i,z_j))_{1\leq i,j\leq 2n}&0\\0&(\mathcal{F}'_{\rho_{-t}^*,-t}(z_i,z_j))_{1\leq i,j\leq2n}\end{matrix} \right]\\
 &\qquad=\underset{1\leq i,j\leq 2n}{\pf}\left[\begin{matrix}s_{\ell,t}(z_i)s_{\ell,t}(z_j)\mathcal{F}_{\rho_t,t}(z_i,z_j)&0\\0&\mathcal{F}'_{\rho_{-t}^*,-t}(z_j,z_i)\end{matrix} \right]\,.
\end{aligned}
\label{observation}
\end{equation}
The swap of the arguments of $\mathcal{F}'_{\rho_{-t}^*,-t}$ in
the last line compensates the sign factor $(-1)^n$ that results from
changing the $2n\times 2n$ block $2\times 2$ matrix into a $2\times 2$
block $2n\times 2n$ matrix. Eqn \fr{observation} allows us to recast
the series over $n$ of \eqref{dynamical} in the form of the block Fredholm Pfaffian
\eqref{pfa} with $n$ restricted to be even, namely
\begin{equation}
\begin{aligned}
\sum_{n\geq 0}\frac{1}{(2n)!}&\int_{-\pi}^\pi \dots \int_{-\pi}^\pi \prod_{j=1}^{2n}s_{\ell,t}(z_j)\underset{i,j}{\pf}[\mathcal{F}_{\rho_t,t}(z_i,z_j)]\underset{i,j}{\pf}[\mathcal{F}'_{\rho_{-t}^*,-t}(z_i,z_j)]\D{z_1}...\D{z_{2n}}\\
&=\frac{1}{2}\left(\Pf[\pmb{{\rm Jd}}+\pmb{F}_{\ell,t}]+\Pf[\pmb{{\rm Jd}}-\pmb{F}_{\ell,t}] \right)\,,
\end{aligned}
\end{equation}
where $\pmb{F}_{\ell,t}(x,y)$ is a $2\times 2$ matrix-valued function on
$[-\pi,\pi]\times[-\pi,\pi]$ 
\begin{equation}
\pmb{F}_{\ell,t}(x,y)=\left(\begin{matrix}s_{\ell,t}(x)s_{\ell,t}(y)\mathcal{F}_{\rho_t,t}(x,y)&0\\0&\mathcal{F}'_{\rho_{-t}^*,-t}(y,x) \end{matrix} \right)\,.
\end{equation}
From the fact that the $2\times 2$ kernel $\pmb{F}_{\ell,t}$ is diagonal, using that the Pfaffian is multiplied by $-1$ whenever a row and the corresponding columns are multiplied simultaneously by $-1$, we find
\begin{equation}
\Pf[\pmb{{\rm Jd}}-\pmb{F}_{\ell,t}]=\Pf[\pmb{{\rm Jd}}+\pmb{F}_{\ell,t}]\,.
\end{equation}
Hence putting everything
together we arrive at our final result
\begin{equation}\label{dynamicalc}\boxed{
\begin{aligned}
C^{xx}(\ell,t)&=\phi_\infty(t)\phi_\infty(-t)^*&\Pf[{\rm
    Jd}+\mathcal{D}_{\rho_t,t}]\Pf[{\rm
    Jd}+\mathcal{D}'_{\rho^*_{-t},-t}] \
\Pf[\pmb{{\rm Jd}}+\pmb{F}_{\ell,t}]\,.
\end{aligned}}
\end{equation}
Eq \eqref{dynamicalc} again assumes that the functions $h_t$ and
$h_{-t}$ are regular.
%%%%%%%%%%%%%%%%%%%%%%%%
\section{Applications \label{app}}
%%%%%%%%%%%%%%%%%%%%%%%%
In this section we apply the results reported above in a number of
settings. For simplicity we focus on the case of the transverse-field
Ising chain $\gamma=1$.
%%%%%%%%%%%%%%%%%%%%%%%%%%%%%%%%%%%%%%%%%%%%%%
\subsection{Order parameter and the Kibble-Zurek mechanism\label{KZsec}}
%%%%%%%%%%%%%%%%%%%%%%%%%%%%%%%%%%%%%%%%%%%%%%
As a first application we consider the
time-dependence of the order parameter in the transverse-field Ising
model for a ramp of the magnetic field through the
critical point. While this and closely related non-equilibrium
protocols have been previously studied in great detail
\cite{Dziarmaga05,Cherng06,Mukherjee07,Kolodrubetz12,Schuricht16,Hodsagi20,Pollmann2010}
in connection to the Kibble-Zurek mechanism
\cite{Kibble76,Zurek85,Dziarmaga10,Polkovnikov11,delCampo14}, we are
not aware of any results on the dynamics of the order parameter in the
thermodynamic limit. We will consider time-dependent magnetic fields
\begin{equation}\label{cases}
\begin{aligned}
{\rm (i)}\qquad h(t)&=h_0+\alpha t\ ,\qquad\gamma=1\,,\\
{\rm (ii)}\qquad h(t)&=1+\frac{(\alpha t-1+h_0)^3}{(1-h_0)^2}\ ,\qquad\gamma=1\ ,
\end{aligned}
\end{equation}
that cross the critical point linearly (case (i)) or cubically (case (ii)) with a speed parameter $\alpha$, and assume that the system is initialized in the ground state for
$0<h_0<1$ at time $t=0$. The presence of spontaneous symmetry breaking
in the thermodynamic limit can be accounted for by working with the
following initial state, \emph{cf.} Refs \cite{CEF1,CEF2}
\be
|\psi_{h_0}(0)\rangle=\frac{|0\rangle^{\rm R}_{h_01}+\alpha^\dagger_{01;0}|0\rangle^{\rm NS}_{h_01}}{\sqrt{2}}.
\ee
The time evolution of the order parameter in the thermodynamic limit is 
then obtained from \fr{mag}, \fr{phiLoft}
\be\label{resquench}
\langle\psi_{h_0}(t)|\sigma^x_\ell|\psi_{h_0}(t)\rangle=\Re\left( \phi_\infty(t)
\Det[{\rm Id}+\mathcal{M}[\rho_s]]\right)\ .
\ee
Here ${\cal M}$ is given in \fr{sigmaFD} with
\begin{align}
h(k)&=\frac{iK_{01;h(t)1}(k)+f_t(k)}{1+iK_{01;h(t)1}(k)f_t(k)}\ ,\
\end{align}
and $f_t(k)$ is the solution of the nonlinear differential equation
\fr{equadiff}. Importantly the phase factor $\phi_\infty(t)$ is not
always equal to one in this case. We find that there is a sequence of
times $t^*_n$ and associated magnetic fields $h^*_n=h(t^*_n)$ such that
\be
\phi_\infty(t)=(-1)^n\ ,\quad t_n<t<t_{n+1}\ , n=0,1,\dots.
\ee
As an example we consider $h_0=0.6$ and $\alpha=0.3$ for the linear ramp of case (i). Then
\begin{align}
&t^*_0\approx 2.883547\ ,\quad h^*_0\approx 1.465064\ ,\nn
&t^*_1\approx 4.591509\ ,\quad h^*_1\approx 1.977452.
\end{align}
We observe, in agreement with the discussion of Section \ref{secpha}, that this behaviour arises from the
vanishing of the overlaps $\langle\psi_{h_0}(t)|\psi_0(0)\rangle$
at particular times $t^*_n$ in the
thermodynamic limit, which is equivalent to the function $f_t(k)$
becoming singular at times $t^*_n$ for particular wave numbers
$k^*_n$. We note however that the Fredholm determinant appearing in
\eqref{resquench} also exhibits discontinuities at times $t^*_n$ in
such a way that the resulting magnetization
$\langle\sigma^x(t)\rangle$  is a continuous function of time, as
expected on physical grounds. 

The results of a numerical evaluation of the Fredholm determinant expression for the order parameter during the quench are
shown in Fig.~\ref{fig:magn2}. We use \eqref{fredheq} and \cite{Bornemann10} to compute the determinant and use a quadrature
rule with up to 4000 points. We stress that by construction we are
considering the magnetization per site \emph{in the thermodynamic limit}.

\begin{figure}[H]
\begin{center}
 \includegraphics[scale=0.425]{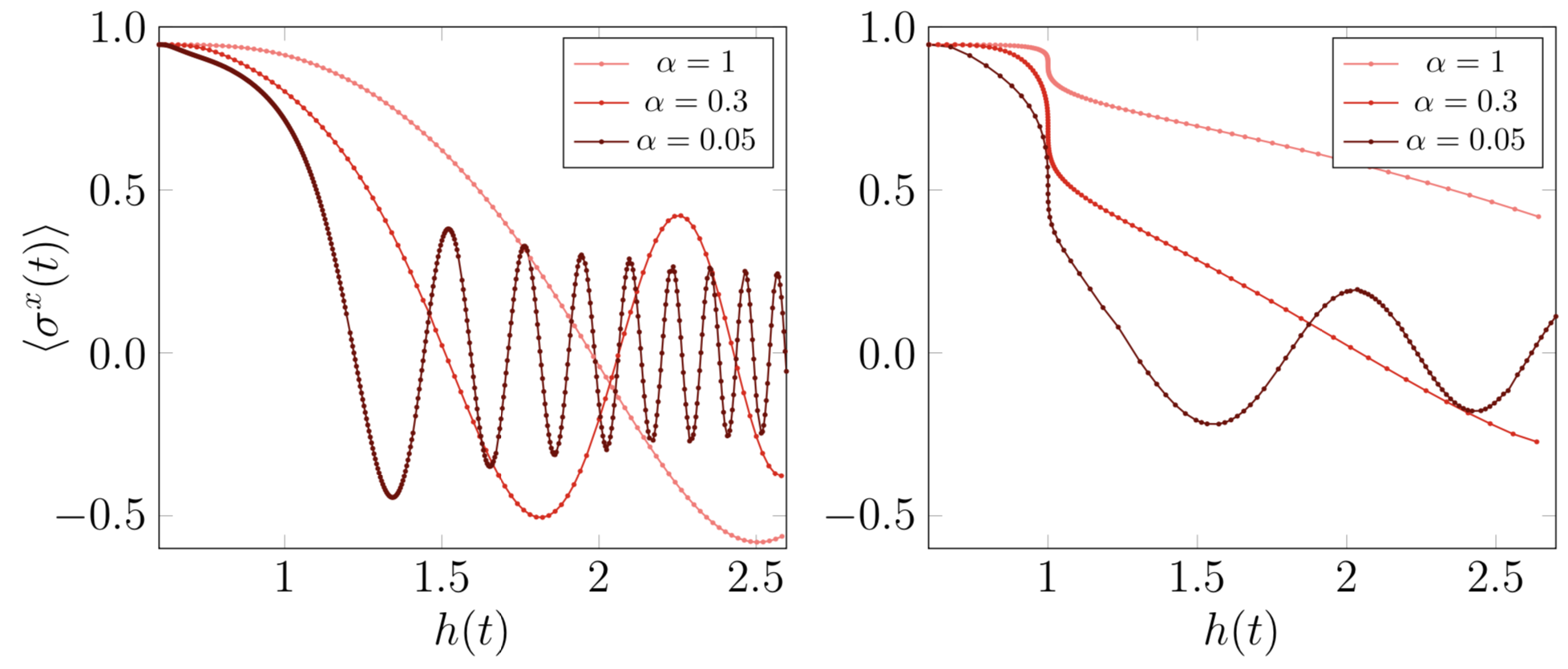}
\end{center}
\caption{Order parameter expectation value $\langle \sigma^x(t)\rangle$ as a function of $h(t)$, with a ramp crossing the critical point linearly $h(t)=0.6+\alpha t$ (left) and cubically $h(t)=1+6.25\times (t\alpha-0.4)^3$ (right).}
\label{fig:magn2}
\end{figure}

We first consider linear ramps across the quantum criticial point
starting in the ordered phase, i.e. case (i) in \eqref{cases} with $h_0$=0.6.
We see that ramping up the magnetic field initially leads to a
reduction of $\langle \sigma^x_j(t)\rangle$, the size of which depends
on the ramp rate $\alpha$. For very fast $\alpha$, 
$\langle \sigma^x_j(t)\rangle$ is expected to remain essentially
pinned to its value at $t=0$: this corresponds to a sudden
approximation and is closely related to the situation encountered in a
quantum quench. A slower ramp rate is expected to result in a faster
reduction of $\langle \sigma^x_j(t)\rangle$ at early times. Both of
these expectations are borne out by the numerical results shown in
Fig.~\ref{fig:magn2}. At later times the magnetization per site
displays an oscillatory behaviour. In the scaling regime around the
critical field $h=1$ this behaviour has been analyzed in some detail
in Ref.~\cite{Hodsagi20}. For a very slow ramp rate the magnetization
closely follows the magnetic field dependence in the ground state, as
expected by the adiabatic approximation, until $h\approx 1$, where
adiabadicity breaks down and Kibble-Zurek physics ensues.
We next consider a nonlinear ramp starting at the same initial field
$h(0)=0.6$ and whose derivative vanishes at the critical point, given by case (ii) in \eqref{cases}. On a very qualitative level the time dependence of the order parameter
is similar to the linear ramp case in that oscillations ensue after an
initial decay.
%%%%%%%%%%%%%%%%%%%%%%%%%%%%%%%%%%%%%%%%%%%%%%%%%%
\subsection{Order parameter in periodically driven systems}
%%%%%%%%%%%%%%%%%%%%%%%%%%%%%%%%%%%%%%%%%%%%%%%%%%
The results derived in the previous sections allow for a
systematic study of the thermodynamic limit of Floquet physics
where the driving magnetic field and anisotropy are periodic functions
of time, see e.g. Refs
\cite{Bukov15,mukherjee09,russomanno12,bastidas12,Lazarides14,eckardt17,oka19,Tapias20,Damski20}.
%\bl{To the best of our knowledge the order parameter has so
%far only been studied numerically on short chains, see e.g. \cite{Damski20}.}
In Fig.~\ref{fig:floquet} we show the order parameter
$\langle \sigma_\ell^x(t)\rangle$ of a system initialized in the
  ground state at $h=0$ and then driven periodically with frequency $\omega$
\begin{equation}
h(t)=\frac{1-\cos(\pi\omega t)}{2}\ ,\qquad\gamma=1\,.
\end{equation}
\begin{figure}[H]
\begin{center}
 \includegraphics[scale=0.425]{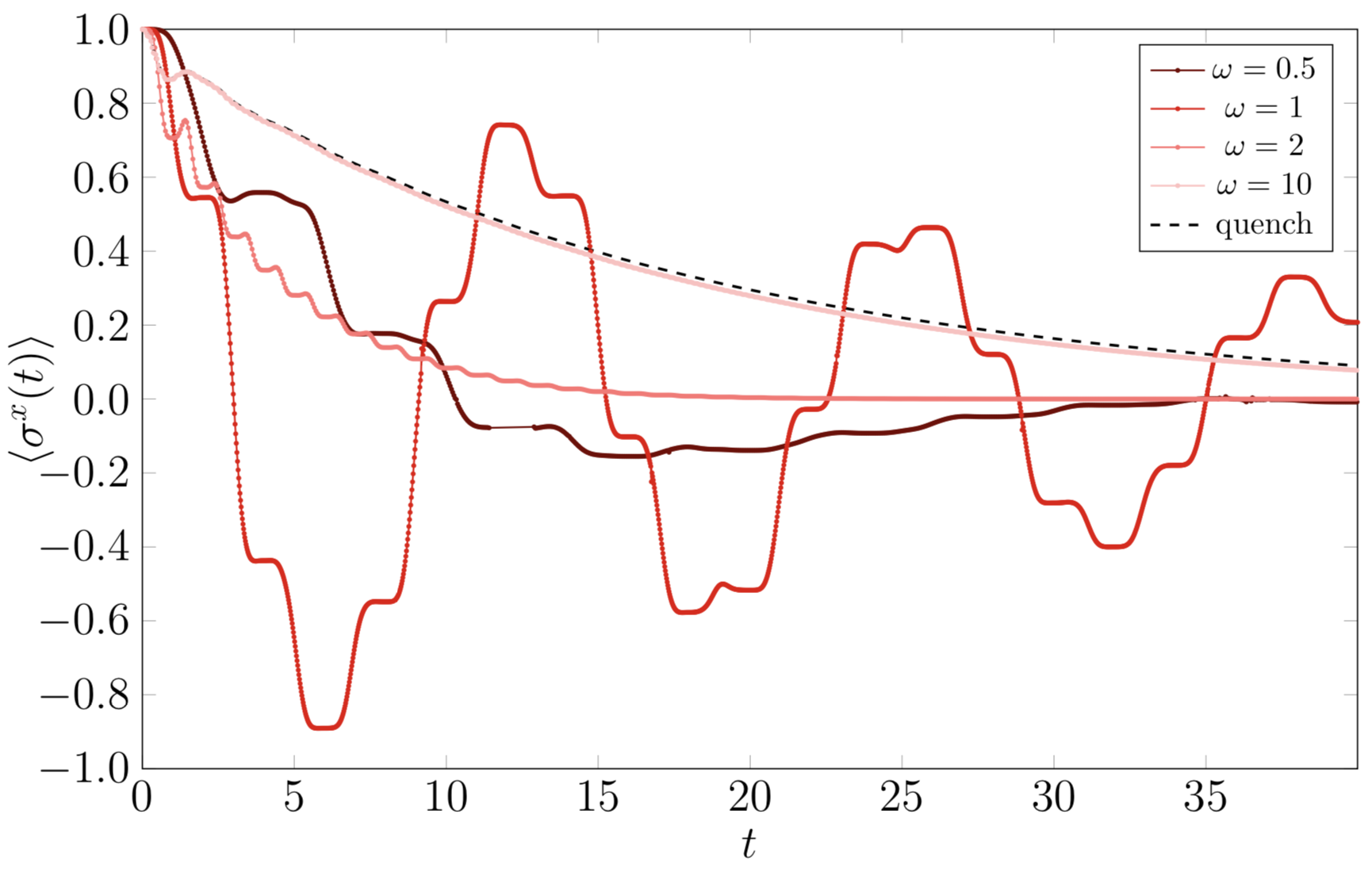}
\end{center}
\caption{Order parameter expectation value $\langle \sigma^x(t)\rangle$ as a function of $ t$, with the variation of magnetic field $h(t)=\tfrac{1}{2}(1-\cos \pi \omega t)$. In dashed is indicated the time-evolution after a sudden quench $h(t)=\frac{1}{2}$ for $t>0$.}
\label{fig:floquet}
\end{figure}
At large frequencies $\omega$ we
expect to recover the results for evolution with the time-averaged
Hamiltonian \cite{Bukov15}, which corresponds to a quantum quench
where the system in initialized in the ground state of $H(0,1)$ and then
time-evolved with $H(\frac{1}{2},1)$. We see that the time evolution
for $\omega=10$ is indeed very close to this limit. At
late times the system synchronizes and can be described by a
``periodic generalized Gibbs ensemble'' \cite{Lazarides14}. In
particular this implies that the order parameter should vanish, which
is indeed what we observe in Fig.~\ref{fig:floquet}.
In the limit of low frequencies $\omega\approx 0$ the behaviour is
initially adiabatic and the order parameter follows the ground state
value at the corresponding magnetic field $h(t)$. As
$t\rightarrow\omega^{-1}$ the magnetic field $h(t)$
approaches its critical value and adiabaticity breaks down and
Kibble-Zurek physics ensues. For frequencies $\omega>2$ the
magnetization is seen to decay towards zero with only weak
oscillations on top of the decay. For frequencies $\omega\approx 1$
there are strong oscillations that decay in time. Interestingly, for
lower frequencies the oscillatory behaviour becomes less pronounced.

\subsection{Dynamical correlations after a sudden quench}

In this section we illustrate formula \eqref{dynamicalc} for dynamical
correlations in a CE for a particular case of the general scenario
discussed in section \ref{outofeq}. We initialize the state 
of the system $|\psi(0)\rangle$ in the ground state of the
transverse field Ising model with magnetic field $h_0$,
i.e. $H(h_0,1)$, and suddenly change the magnetic field to $h$,
triggering a non-trivial time evolution of the state
$|\psi(t)\rangle$. We are then interested in the connected non-equal
time order parameter correlation function
\begin{equation}
\label{goal}
\langle \sigma^x_{\ell+1}(t_1)\sigma^x_1(t_2)\rangle_c=\langle \psi(t_2)|\sigma^x_{\ell+1}e^{i(t_1-t_2)H(h,\gamma)}\sigma^x_{1}|\psi(t_1)\rangle-\langle \sigma^x_{\ell+1}(t_1)\rangle \langle \sigma^x_1(t_2)\rangle\,.
\end{equation}
Numerical results for $\langle
  \sigma^x_{\ell+1}(t_1)\sigma^x_1(t_2)\rangle_c$ for a quench from
$h_0=0.1$ to $h=0.9$, obtained by numerically evaluating
\eqref{dynamicalc}, are shown in Fig.~\ref{ctdensity4}. One sees
that there are several regions where the $2$-point function is
negligibly small. These light-cone structures can be understood
with the quasi-particle picture initially proposed to describe the
growth of entanglement entropy after a quench
\cite{calabrese05,Foini12}. According to this picture, the effect of the 
quench is to create pairs of quasi-particles at each position in the
chain that evolve with velocities $\pm
v(k)=\pm\partial_k\varepsilon_{h\gamma}(k)$, and a non-zero connected
correlation can occur between two points in space-time only if a
quasi-particle can propagate from one to the other. Let us fix
$t_1<t_2$ and set $\delta=t_2-t_1$. According to this quasi-particle
picture, the operator $\sigma^x_{\ell+1}(t_1)$ can be considered as a
local operator with support in
$[\ell+1-v_{\max}t_1,\ell+1+v_{\max}t_1]$, with $v_{\max}=\max_k
|v(k)|$.
The condition for the connected correlator to be
non-negligible is for the supports of $\sigma^x_{\ell+1}(t_1)$ and
$\sigma^x_{1}(t_2)$ (or equivalently their backward light-cones) to
overlap. This explains why for $t_1+t_2<\tfrac{\ell}{v_{\max}}$
the connected $2$-point function is negligibly small. This corresponds
to the triangular blue regions in the bottom left corners of
Figs~\ref{ctdensity4}, which grow with $\ell$. On the other hand, we
expect on physical grounds that the effects of making a local
perturbation at time $t_1$ will become increasingly difficult to
detect if we wait long enough and connected correlations should
therefore decay with respect to the time  difference $|t_2-t_1|$ when
the latter gets large. This explains the smallness
of the connected two-point functions observed in the upper left and
bottom right corners of \ref{ctdensity4}.

\begin{figure}[H]
\begin{center}
 \includegraphics[scale=0.19]{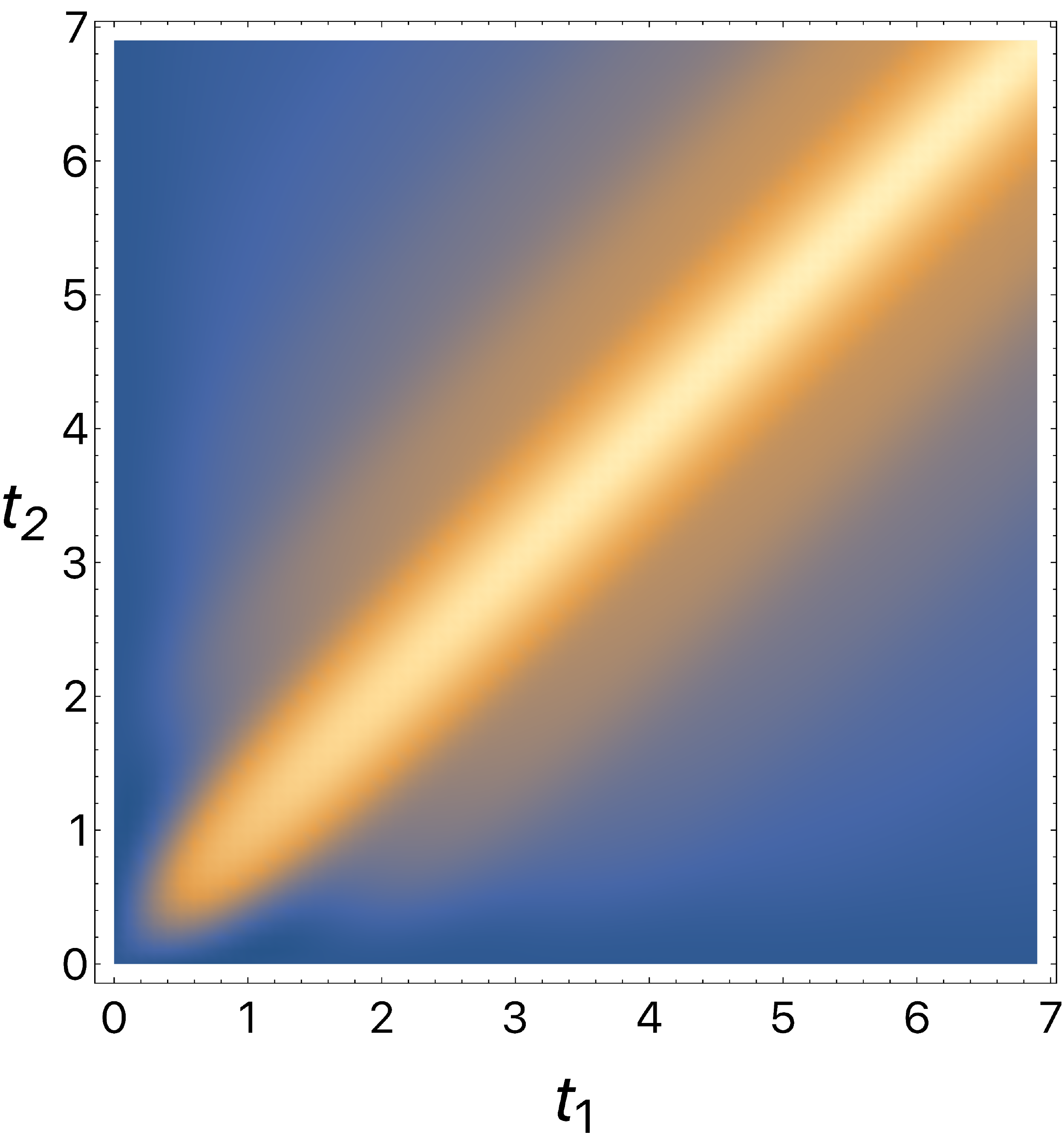}
 \includegraphics[scale=0.19]{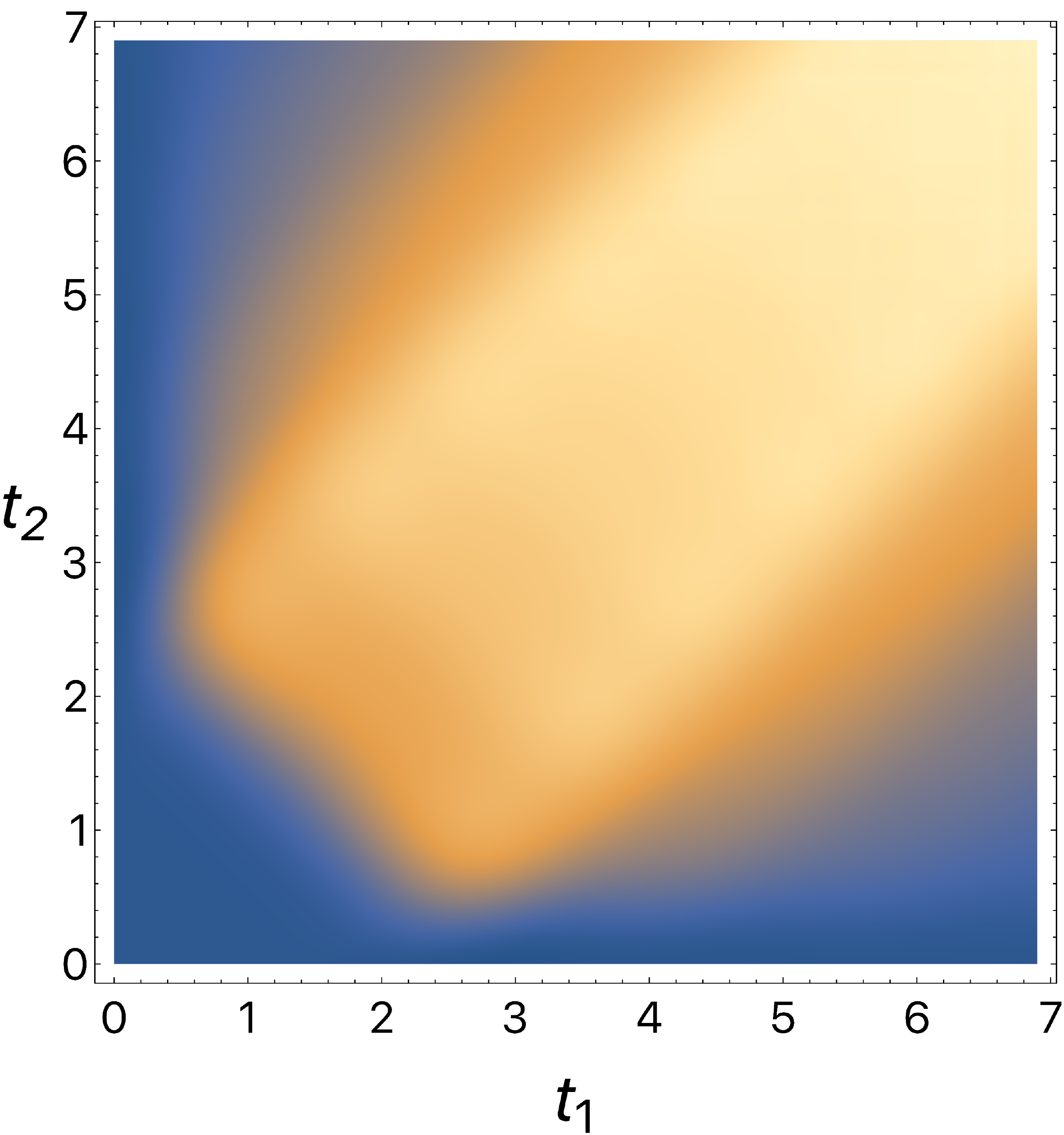}
 \includegraphics[scale=0.19]{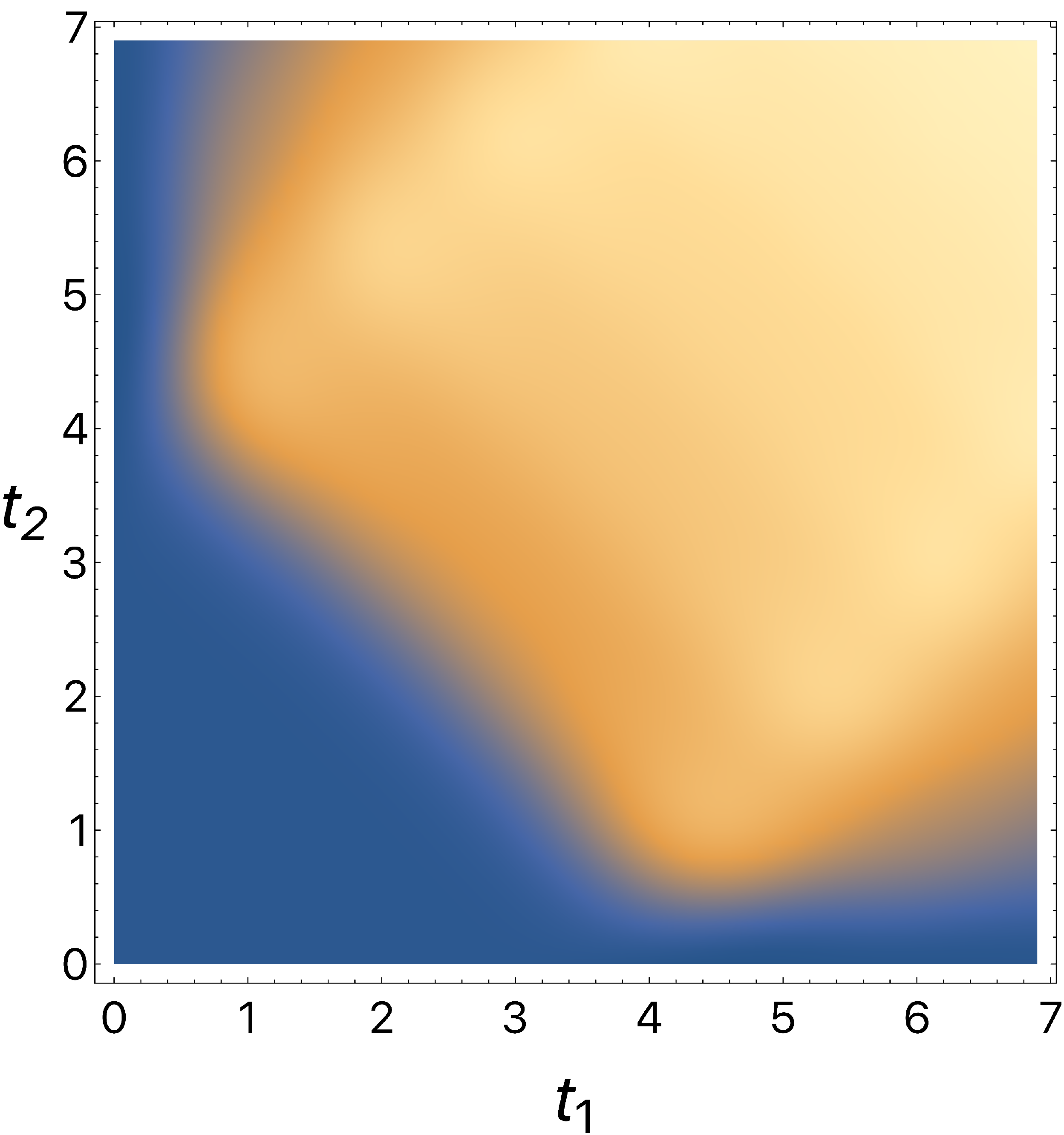}
\caption{Density plots of the connected non-equal time correlation
$\langle \sigma^x_{\ell+1}(t_2)\sigma^x_1(t_1)\rangle_c$ after a
quantum quench as a function of $t_1$ and $t_2$, for
$\ell=0,3,6$ from left to right. The three plots use different color
scales.}    
\label {ctdensity4}
\end{center}
\end {figure}

%%%%%%%%%%%%%%%%%%%%%%%%%%%%%%%%%%%%
\section{Discussion and outlook}
%%%%%%%%%%%%%%%%%%%%%%%%%%%%%%%%%%%%
In this work we have addressed the problem of computing
out-of-equilibrium observables in the XY spin chain subject to
arbitrary time variations of the magnetic field $h(t)$ and anisotropy
$\gamma(t)$. We obtained closed-form expressions for the
thermodynamic limit of the order parameter expectation value,
dynamical two-point function and static three-point function, as well
as of the full counting statistics of the transverse
magnetization. These expressions are valid for all times 
and for arbitrary distances. They hold not
only for out-of-equilibrium situations, but also in the wider context
of the Coherent Ensemble as introduced in the text. We emphasize that
to the best of our knowledge no exact explicit expressions in the
thermodynamic limit were known for the expectation value of operators
that are non-local in the underlying fermions, namely for the order
parameter one-point function, two-point dynamical correlation and
three-point static correlation. While the expectation values of
operators that are local in the underlying fermions can be
straightforwardly computed using Wick's theorem and do not require the
Fredholm determinant expressions derived in this paper, our method
provides a unified approach based on form factor summation, hence
better suited for generalization to the interacting case where no
Wick's theorem holds. Despite the free nature of the XY model, the
problem of performing the spectral sum over form factors was
previously solved only for free models with $U(1)$ symmetry such as
the impenetrable Bose gas or the XX chain.

In our derivation of these results we have followed a
different route than the ones traditionally used in the
computation of out-of-equilibrium dynamics in integrable
models. Our approach relies on remarkable properties of \textit{coherent
  states}, that are weighted superpositions (in a precise manner) of
exponentially many eigenstates of the Hamiltonian, that in a sense
behave more smoothly than pure eigenstates and are easier to
manipulate. Crucially they stay coherent when expressed in terms of
the eigenstates of the Hamiltonian at other values of $h$ and
$\gamma$, which allows one to carry out the calculations in a
preferred simple basis, such as $h=0$ and $\gamma=1$, where the form
factors are exactly Cauchy determinants. Efficient summation formulas
exploiting both the coherent state structure and the form factor
determinant structure eventually lead to our results. 
%A surprising
%conclusion we get from this calculation is that out-of-equilibrium
%physics can actually be simpler to compute than equilibrium physics:
%we succeeded in obtaining an expression for the dynamical correlations
%out-of-equilibrium in the XY chain, whereas up to now no expression is known for dynamical
%  correlations in equilibrium apart from zero temperature.

Our work opens up a number of future directions. 
%The first
%one is based on the observation that any GGE can be obtained from
%the large time evolution of a coherent state. Hence dynamical
%correlations in the XY spin chain at equilibrium should be obtainable
%as a limit of our results. This will be the object of another
%work. 
The first direction is to determine the asymptotic behaviour of
the various correlation functions considered here. A second direction
is to investigate whether some ideas of this fruitful approach
can be generalized to an interacting case. Although the coherent
state structure used in this paper is rather fragile, there could be
analogous macroscopic superpositions of eigenstates in an interacting
model that enjoy similar interesting properties.

\appendix
\section{Diagonalization of the XY model in a field \label{xy}}
\subsection{Mapping to free fermions}
In this appendix we review how to diagonalize the XY
Hamiltonian with magnetic field $h$ and anisotropy $\gamma$ 
\be
H(h,\gamma)=-\sum_{j=1}^{L}\frac{1+\gamma}{2}\sigma_j^x\sigma_{j+1}^x
+\frac{1-\gamma}{2}\sigma_j^y\sigma_{j+1}^y+h\sigma_j^z\, ,
\ee 
where $\sigma_j^\alpha$ are the Pauli matrices at site $j$ and
\be
\sigma^\alpha_{L+1}=\sigma^\alpha_1\ ,\quad \alpha=x,y,z.
\ee
The quantum XY chain is mapped to a model of
spinless fermions by means of a Jordan-Wigner transformation. Defining
$\sigma^\pm_j=\big(\sigma^x_\ell\pm i\sigma^y_j\big)/2$ we construct
spinless fermion creation and annihilation operators by
\be
c_l^\dag=\prod_{j=1}^{l-1}\sigma_j^z \sigma_l^-\ ,\quad
\{c_j,c^\dagger_l\}=\delta_{j,l}.
\ee
The inverse transformation is
\begin{align}
\sigma^z_j=1-2c^\dagger_jc_j\ ,\
\sigma^x_j=\prod_{l=1}^{j-1}(1-2c^\dagger_lc_l)(c_j+c^\dagger_j)\ ,\
\sigma^y_j=i\prod_{l=1}^{j-1}(1-2c^\dagger_lc_l)(c^\dagger_j-c_j)\ .
\end{align}
The Hamiltonian can be expressed in terms of the fermions as
\begin{align}
H(h,\gamma)=&-\sum_{j=1}^{L-1}\frac{1+\gamma}{2}\big[c^\dagger_j-c_j\big]\big[c_{j+1}+c^\dagger_{j+1}\big]
-\sum_{j=1}^{L-1}\frac{1-\gamma}{2}\big[c^\dagger_j+c_j\big]\big[c_{j+1}-c^\dagger_{j+1}\big]\nn
&-h\sum_{j=1}^L[1-2c^\dagger_jc_j]\nn
&-e^{i\pi {\hat
    N}}\frac{1+\gamma}{2}(c_L-c_L^\dagger)(c_1+c_1^\dagger)-e^{i\pi {\hat N}}\frac{1-\gamma}{2}(c_L+c_L^\dagger)(c_1^\dagger-c_1)\,,
\end{align}
where
\be
\hat{N}=\sum_{j=1}^Lc^\dagger_j c_j.
\label{fermionnumber}
\ee
As $[H,e^{i\pi {\hat N}}]=0$ we may diagonalize the two operators
simultaneously. The Hamiltonian is thus block diagonal $H=H^{\rm
  NS}\oplus H^{\rm R}$, where $H^{\rm NS,R}$ act on the subspaces of
the Fock space with an even/odd number of fermions respectively. 
%%%%%%%%%%%%%%%%%%%%%%%%%%%%%%%%%%%
\subsection{Even fermion number}
%%%%%%%%%F%%%%%%%%%%%%%%%%%%%%%%%%%%
In the sector with an even number of fermions we have
$e^{i\pi{\hat{N}}}=1$ and the Hamiltonian can be written in the form 
\begin{align}
H^{\rm
  NS}(h,\gamma)=&-\sum_{j=1}^{L}\frac{1+\gamma}{2}\big[c^\dagger_j-c_j\big]\big[c_{j+1}+c^\dagger_{j+1}\big]-\sum_{j=1}^{L}\frac{1-\gamma}{2}\big[c^\dagger_j+c_j\big]\big[c_{j+1}-c^\dagger_{j+1}\big]\nn 
&-h\sum_{j=1}^L[1-2c^\dagger_jc_j]
\label{Heven}
\end{align}
where we have imposed antiperiodic boundary conditions on the fermions
\be
c_{L+1}=-c_1.
\label{boundary conditions}
\ee
The Hamiltonian $H^{\rm NS}$ is diagonalized by going to Fourier space
\be
c(k_n)=\frac{1}{\sqrt{L}}\sum_{j=1}^Lc_j\ e^{ik_n j},
\ee
where $k_n$ are quantized according to (\ref{boundary conditions})
\be
k_{n}=\frac{2\pi (n+1/2)}{L}\ ,\quad n=-\frac{L}{2},\ldots \frac{L}{2}-1.
\label{NS}
\ee
The antiperiodic sector is commonly referred to as Neveu-Schwarz (NS)
sector. 
Introducing Bogoliubov fermions by
\begin{align}
c(k_n)&=\cos(\theta_{k_n}/2) \alpha_{h\gamma;k_n}
+ i\sin(\theta_{k_n}/2)\alpha_{h\gamma;-k_n}^\dagger\ ,\nn
c^\dagger(-k_n)&=i\sin(\theta_{k_n}/2) \alpha_{h\gamma;k_n}
+\cos(\theta_{k_n}/2)\alpha_{h\gamma;-k_n}^\dagger,
\label{Bogoliubovtrafo}
\end{align}
where the Bogoliubov angle fulfils
\begin{align}
\tan\theta_k=\left[\frac{\gamma\sin(k)}{\cos(k)-h}\right],
\label{angle<}
\end{align}
the Hamiltonian becomes diagonal
\be
H^{\rm NS}(h,\gamma)=\sum_{k\in{\rm NS}}
\varepsilon_{h\gamma}(k)\left[\alpha^\dagger_{h\gamma;k}\alpha_{h\gamma;k}-\frac{1}{2}\right].
\ee
Here the dispersion relation is
\begin{align}
\varepsilon_{h\gamma}(k)&=2\sqrt{(h-\cos k)^2+\gamma^2\sin^2 k}.
\end{align}
A basis for the Fock space in the sector with even fermion number is
then given by
\begin{align}
|k_1,\ldots,k_{2m}\rangle_{h\gamma}^{\rm NS}&=\prod_{j=1}^{2m}
\alpha^\dagger_{h\gamma;k_{j}}|0\rangle_{h\gamma}^{\rm NS}\ ,\quad
k_j\in{\rm NS},
\end{align}
where the fermion vacuum $|0\rangle_{h\gamma}^{\rm NS}$ is the state annihilated
by all $\alpha_{h\gamma;k_j}$ ($j=-\frac{L}{2},\ldots,\frac{L}{2}-1$).
%%%%%%%%%%%%%%%%%%%%%%%%%%%%%%%%%%
\subsection{Odd fermion number}
%%%%%%%%%%%%%%%%%%%%%%%%%%%%%%%%%%
In the sector with an odd number of fermions we have
$e^{i\pi{\hat{N}}}=-1$. The Hamiltonian can again be written in the form
\begin{align}
H^{\rm R}(h,\gamma)=&-\sum_{j=1}^{L}\frac{1+\gamma}{2}\big[c^\dagger_j-c_j\big]\big[c_{j+1}+c^\dagger_{j+1}\big]
-\sum_{j=1}^{L}\frac{1-\gamma}{2}\big[c^\dagger_j+c_j\big]\big[c_{j+1}-c^\dagger_{j+1}\big]\nn
&-h\sum_{j=1}^L[1-2c^\dagger_jc_j]
\label{Hodd}
\end{align}
but now we have to impose periodic boundary conditions on the fermions
\be
c_{L+1}=c_1.
\label{periodic boundary conditions}
\ee
In Fourier space we therefore now have
\be
c(p_n)=\frac{1}{\sqrt{L}}\sum_{j=1}^Lc_j\ e^{ip_n j},
\ee
where $p_n$ are quantized according to (\ref{periodic boundary conditions})
\be
p_{n}=\frac{2\pi n}{L}\ ,\quad n=-\frac{L}{2},\ldots \frac{L}{2}-1.
\label{R}
\ee
The periodic sector is known as Ramond sector. Defining Bogoliubov fermions
${\alpha}_{p_n}$ for $p_n\neq 0$ by 
\begin{align}
c(p_n)&=\cos(\theta_{p_n}/2) \alpha_{h\gamma;p_n}
+ i\sin(\theta_{p_n}/2)\alpha_{h\gamma;-p_n}^\dagger\ ,\nn
c^\dagger(-p_n)&=i\sin(\theta_{p_n}/2) \alpha_{h\gamma;p_n}
+\cos(\theta_{p_n}/2)\alpha_{h\gamma;-p_n}^\dagger,
\label{Bogoliubovtrafo2}
\end{align}
we can express the Hamiltonian as 
\begin{align}
H^{\rm R}(h,\gamma)&=\sum_{\ontop{p\in{\rm R}}{p\neq0}}
\varepsilon_{h\gamma}(p)\left[{\alpha}^\dagger_{h\gamma;p}{\alpha}_{h\gamma;p}-\frac{1}{2}\right]-2(1-h)\left[{\alpha}^\dagger_{h\gamma;0}
{\alpha}_{h\gamma;0}-\frac{1}{2}\right].
\end{align}
A basis of the subspace of the Fock space with odd fermion numbers is
then given by
\begin{align}
|p_1,\ldots,p_{2m+1}\rangle_{h\gamma}^{\rm R}&=\prod_{j=1}^{2m+1}
\alpha^\dagger_{h\gamma;p_{j}}|0\rangle_{h\gamma}^{\rm R}\ ,\quad p_j\in {\rm R},
\end{align}
where the fermion vacuum $|0\rangle_{h\gamma}^{\rm R}$ is the state annihilated by all
$\alpha_{h\gamma;p_j}$ ($j=-\frac{L}{2},\ldots,\frac{L}{2}-1$).

%%%%%%%%%%%%%%%%%%%%%%%%%%%%%%%%%%%%%%%%
\section{Useful lemmas\label{lemmas}}
%%%%%%%%%%%%%%%%%%%%%%%%%%%%%%%%%%%%%%%%
\subsection{Overlap and form factors}
%%%%%%%%%%%%%%%%%%%%%%%%%%%%%%%%%%%%%%%%
\begin{property}\label{overlap}
Let $|\pmb{k}\rangle^{\rm NS}_{h\gamma}$ and $|\pmb{q}\rangle^{\rm
  NS}_{\tilde{h}\tilde{\gamma}}$ two eigenstates of the XY Hamiltonian
at different magnetic fields $h,\tilde{h}$ and anisotropies
$\gamma,\tilde{\gamma}$. 
We define $\pi(\pmb{k})$ as the subset of strictly positive elements
$k_n\in\pmb{k}$ such that $-k_n\in\pmb{k}$ as well, and
$\sigma(\pmb{k})$ the subset of unpaired momenta, i.e. $k_j\in\pmb{k}$
  but $-k_j\notin\pmb{k}$. $\pi(\pmb{q})$ and $\sigma(\pmb{q})$ are
  defined analogously.
%is also in $\pmb{\lambda}$, and $\sigma(\pmb{\lambda})$ the subset of
%$\pi(\pmb{\lambda})$ as the subset
%of strictly positive elements of $\pmb{\lambda}$ such their opposite
%is also in $\pmb{\lambda}$, and $\sigma(\pmb{\lambda})$ the subset of
%$\pmb{\lambda}$ such their opposite is not in $\pmb{\lambda}$. 
Then the following formula for the overlap between the two states holds
\begin{equation}
{}^{\rm NS}_{\tilde{h}\tilde{\gamma}}\langle \pmb{q}|\pmb{k}\rangle^{\rm NS}_{h\gamma}=\1_{\sigma(\pmb{q})=\sigma(\pmb{k})}\frac{\prod_{p\in \sigma(\pmb{k})}\frac{1}{\cos (\theta_k^{\tilde{h}\tilde{\gamma}}-\theta_k^{h\gamma})/2}\prod_{p\in \pi(\pmb{q})\perp \pi(\pmb{k})}iK_{\tilde{h}\tilde{\gamma};h\gamma}(p)}{\prod_{p\in{\rm NS}_+}\sqrt{1+K^2_{\tilde{h}\tilde{\gamma};h\gamma}(p)}}\,,
\end{equation}
where we defined $ \pi(\pmb{q})\perp \pi(\pmb{k})=\pi(\pmb{q})\cup \pi(\pmb{k})-(\pi(\pmb{q})\cap\pi(\pmb{k}))$. 
\end{property}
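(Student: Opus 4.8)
The plan is to reduce the overlap to a product over the pair sectors $\{k,-k\}$, $k\in{\rm NS}_+$, using that a change of parameters $(h,\gamma)\to(\tilde h,\tilde\gamma)$ acts on the Bogoliubov modes by a rotation that mixes only $k$ with $-k$. First I would compose the two Bogoliubov transformations of Appendix~\ref{xy}: expressing $c(k)$ through the $(h,\gamma)$ modes and then re-expressing the latter through the $(\tilde h,\tilde\gamma)$ modes via \eqref{Bogoliubovtrafo}, and using $\theta_{-k}=-\theta_k$, one obtains for every $k$
\be
\alpha_{h\gamma;k}=\cos\tfrac{\theta^{\tilde h\tilde\gamma}_k-\theta^{h\gamma}_k}{2}\,\alpha_{\tilde h\tilde\gamma;k}+i\sin\tfrac{\theta^{\tilde h\tilde\gamma}_k-\theta^{h\gamma}_k}{2}\,\alpha^\dagger_{\tilde h\tilde\gamma;-k}\ ,
\ee
so that $K\equiv K_{\tilde h\tilde\gamma;h\gamma}$ is exactly $\tan\tfrac{\theta^{\tilde h\tilde\gamma}_k-\theta^{h\gamma}_k}{2}$ and $(1+K^2(k))^{-1/2}=\cos\tfrac{\theta^{\tilde h\tilde\gamma}_k-\theta^{h\gamma}_k}{2}$. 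Next, since $|0\rangle^{\rm NS}_{h\gamma}=\Psi^{\rm NS}_{h\gamma}(1,0)$, Theorem~\ref{thm} with $A=1$, $f=0$ immediately writes the $(h,\gamma)$ vacuum as a squeezed state over the $(\tilde h,\tilde\gamma)$ one,
\be
|0\rangle^{\rm NS}_{h\gamma}=\Big(\prod_{p\in{\rm NS}_+}\tfrac{1}{\sqrt{1+K^2(p)}}\Big)\prod_{p\in{\rm NS}_+}\big(1+iK(p)\,\alpha^\dagger_{\tilde h\tilde\gamma;-p}\alpha^\dagger_{\tilde h\tilde\gamma;p}\big)|0\rangle^{\rm NS}_{\tilde h\tilde\gamma}\ ,
\ee
which one could also derive directly by imposing $\alpha_{h\gamma;\pm p}|0\rangle^{\rm NS}_{h\gamma}=0$ pair by pair and fixing the constant from the vacuum overlap $\prod_p\cos\tfrac{\theta^{\tilde h\tilde\gamma}_p-\theta^{h\gamma}_p}{2}$.

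With these two ingredients the rest is combinatorial. I would write $|\pmb{k}\rangle^{\rm NS}_{h\gamma}=\prod_j\alpha^\dagger_{h\gamma;k_j}|0\rangle^{\rm NS}_{h\gamma}$, reorder the creation operators so that those belonging to each pair $\{p,-p\}$ with $p\in\pi(\pmb{k})$ are adjacent and the unpaired ones are collected (recording the permutation sign), do the same for the bra $\langle\pmb{q}|^{\rm NS}_{\tilde h\tilde\gamma}$, and insert the Hermitian conjugate of the rotation above together with the squeezed-vacuum formula. Since everything factorizes over $\{p,-p\}$, the overlap becomes $\prod_{p\in{\rm NS}_+}\cos\tfrac{\theta^{\tilde h\tilde\gamma}_p-\theta^{h\gamma}_p}{2}$ times a product of pair-sector matrix elements in the four-dimensional Fock space of $\{p,-p\}$, each of which is a one-line calculation: a pair empty in both $\pmb{k}$ and $\pmb{q}$ gives $1$; a pair carrying the same unpaired root in $\pmb{k}$ and in $\pmb{q}$ gives $1/\cos\tfrac{\theta^{\tilde h\tilde\gamma}_p-\theta^{h\gamma}_p}{2}$; a pair doubly occupied in both gives $1$; a pair doubly occupied in exactly one of the two (i.e. $p\in\pi(\pmb{q})\perp\pi(\pmb{k})$) gives $iK(p)$; and every remaining configuration — in particular an unpaired root $p$ of $\pmb{k}$ matched against the root $-p$ of $\pmb{q}$, or a doubly occupied pair against a singly occupied one — gives $0$, because the rotation above maps the pair $(p,-p)$ of a state with $p\in\sigma(\pmb{k})$ to the pure one-particle state $\propto\alpha^\dagger_{\tilde h\tilde\gamma;p}|0\rangle$ and hence preserves the momentum of the unpaired excitation. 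The vanishing configurations assemble into the indicator $\1_{\sigma(\pmb{q})=\sigma(\pmb{k})}$; collecting the $\cos\tfrac{\theta^{\tilde h\tilde\gamma}_p-\theta^{h\gamma}_p}{2}=1/\sqrt{1+K^2(p)}$ over all $p$ into the denominator, and using $\cos\tfrac{\theta^{\tilde h\tilde\gamma}_{-p}-\theta^{h\gamma}_{-p}}{2}=\cos\tfrac{\theta^{\tilde h\tilde\gamma}_p-\theta^{h\gamma}_p}{2}$ to rewrite the $\sigma(\pmb{k})$ product, gives precisely the claimed formula.

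The step I expect to be the main obstacle is the sign bookkeeping: the permutation sign needed to bring $|\pmb{k}\rangle^{\rm NS}_{h\gamma}$ and $\langle\pmb{q}|^{\rm NS}_{\tilde h\tilde\gamma}$ into pair-grouped form must be combined with the signs picked up when the annihilation parts of the $\alpha^\dagger_{h\gamma;k_j}$ are anticommuted through the squeezed vacuum. With the mode ordering of \eqref{eqeqeq} — and using that the surviving term has $\sigma(\pmb{q})=\sigma(\pmb{k})$, so the unpaired sectors contribute the same permutation sign on the two sides — these signs cancel, but verifying this carefully is the only non-routine part; the pair-sector evaluations themselves, and a possible branch subtlety in the sign of $\cos\tfrac{\theta^{\tilde h\tilde\gamma}_p-\theta^{h\gamma}_p}{2}$, are elementary.
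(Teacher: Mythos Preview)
Your proposal is correct and follows essentially the same approach as the paper: establish the Bogoliubov rotation between the two parameter sets, write the $(h,\gamma)$ vacuum as a squeezed state over the $(\tilde h,\tilde\gamma)$ vacuum, and then read off the overlap from the pair-by-pair factorization. One caveat: invoking Theorem~\ref{thm} to obtain the squeezed-vacuum formula is circular, since the paper's proof of Theorem~\ref{thm} relies on Lemma~\ref{overlap}; your parenthetical alternative --- imposing $\alpha_{h\gamma;\pm p}|0\rangle^{\rm NS}_{h\gamma}=0$ pair by pair --- is exactly what the paper does and avoids this.
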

\begin{proof}
As shown in Appendix \ref{xy}, we have the following relation
between Bogoliubov fermion operators at different values of
magnetic fields and anisotropies 
\begin{equation}
\alpha_{h\gamma;k}= \cos \frac{\theta_k^{\tilde{h}\tilde{\gamma}}-\theta_k^{h\gamma}}{2}\alpha_{\tilde{h}\tilde{\gamma};k}+i\sin \frac{\theta_k^{\tilde{h}\tilde{\gamma}}-\theta_k^{h\gamma}}{2}\alpha^\dagger_{\tilde{h}\tilde{\gamma};-k}\,.
\end{equation}
From this, we deduce the relation between the vacuum states
\begin{equation}\label{vaccu}
|0\rangle_{h\gamma}^{\rm NS}=\prod_{p\in {\rm NS}_+}\left[\frac{1+iK_{\tilde{h}\tilde{\gamma};h\gamma}(p)\alpha_{\tilde{h}\tilde{\gamma};-p}^{\dagger}\alpha_{\tilde{h}\tilde{\gamma};p}^{\dagger}}{\sqrt{1+K^2_{\tilde{h}\tilde{\gamma};h\gamma}(p)}}\right]|0\rangle_{\tilde{h}\tilde{\gamma}}^{\rm NS}\,.
\end{equation}
Indeed, the right-hand side is annihilated by all the $\alpha_{h\gamma,k}$
% and does not vanish under the application of any
% $\alpha_{h\gamma,k}^\dagger$. 
From these relations one deduces the overlap given in the Lemma. 
\end{proof}

\begin{property}[Form factor of $e^{i\theta\sum_{j=1}^{\ell} \sigma_j^z}$\label{ff1}]
If $\pmb{\lambda}$ and $\pmb{\mu}$ have the same number of elements
 the following
  determinant representation holds
\begin{align}
{}^{\rm NS}_{\infty\gamma}\langle \pmb{\lambda}|e^{i\theta
  \sum_{j=1}^{\ell} \sigma^z_j}|\pmb{\mu}\rangle_{\infty\gamma}^{\rm
  NS}&=e^{i\theta\ell }\det E(\pmb{\lambda},\pmb{\mu})\ ,\nn
\label{E}
E(\pmb{\lambda},\pmb{\mu})_{jk}&=\begin{cases}
\frac{e^{-2i\theta}-1}{L}e^{i(\lambda_j-\mu_k)}\frac{1-e^{i\ell(\lambda_j-\mu_k)}}{1-e^{i(\lambda_j-\mu_k)}}
& \text{if }\lambda_j\neq\mu_k\ ,\\
1+\tfrac{\ell}{L}(e^{-2i\theta}-1)& \text{if }\lambda_j=\mu_k\ .
\end{cases}
\end{align}
If they have different numbers of elements the form
factor vanishes. 
\end{property}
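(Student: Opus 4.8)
The plan is to carry out the whole computation at $h=\infty$, where the Bogoliubov angle of Appendix~\ref{xy} satisfies $\theta_k=0$, so that $\alpha_{\infty\gamma;k}=c(k)$ and $|0\rangle^{\rm NS}_{\infty\gamma}$ is the Fock vacuum of the Jordan--Wigner fermions. In these variables $\sigma^z_j=1-2c^\dagger_jc_j$, hence
\begin{equation}
e^{i\theta\sum_{j=1}^{\ell}\sigma^z_j}=e^{i\theta\ell}\,\mathcal{U}\ ,\qquad
\mathcal{U}:=\prod_{j=1}^{\ell}e^{-2i\theta\,c^\dagger_jc_j}=\prod_{j=1}^{\ell}\Bigl(1+(e^{-2i\theta}-1)c^\dagger_jc_j\Bigr)\ .
\end{equation}
Since $\mathcal{U}$ is built from local number operators it is a number-conserving Gaussian operator and commutes with the total fermion number $\hat N$; this already yields the vanishing of the form factor whenever $|\pmb\lambda|\neq|\pmb\mu|$.

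The first substantive step is to compute the one-body action of $\mathcal{U}$. Writing $\hat N_\ell=\sum_{j=1}^{\ell}c^\dagger_jc_j$ one has $[\hat N_\ell,c^\dagger_j]=c^\dagger_j$ for $j\le\ell$ and $[\hat N_\ell,c^\dagger_j]=0$ otherwise, so that $\mathcal{U}c^\dagger_j\mathcal{U}^{-1}=\epsilon_j c^\dagger_j$ with $\epsilon_j=e^{-2i\theta}$ for $1\le j\le\ell$ and $\epsilon_j=1$ for $\ell<j\le L$, while $\mathcal{U}|0\rangle=|0\rangle$. Transforming to momentum space with $c^\dagger(\mu)=L^{-1/2}\sum_{j}e^{-i\mu j}c^\dagger_j$ and resumming the finite geometric series $\sum_{j=1}^{\ell}e^{i(k-\mu)j}$ (isolating the degenerate case $k=\mu$, where the sum equals $\ell$) gives
\begin{equation}
\mathcal{U}\,c^\dagger(\mu)\,\mathcal{U}^{-1}=\sum_{k\in{\rm NS}}E(k,\mu)\,c^\dagger(k)\ ,
\end{equation}
where $E(k,\mu)$ is exactly the matrix defined in the statement: the contribution of the full ring of $L$ sites cancels for $k\neq\mu$, leaving $\tfrac{e^{-2i\theta}-1}{L}e^{i(k-\mu)}\tfrac{1-e^{i\ell(k-\mu)}}{1-e^{i(k-\mu)}}$, and reduces to $1+\tfrac{\ell}{L}(e^{-2i\theta}-1)$ when $k=\mu$.

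The last step is a Wick contraction. With $|\pmb\mu\rangle=\prod_{a}c^\dagger(\mu_a)|0\rangle$ the two facts above give
\begin{equation}
\mathcal{U}|\pmb\mu\rangle=\sum_{k_1,\dots,k_N}\Bigl[\prod_{a=1}^{N}E(k_a,\mu_a)\Bigr]\,c^\dagger(k_1)\cdots c^\dagger(k_N)|0\rangle\ ,
\end{equation}
and pairing this against ${}^{\rm NS}_{\infty\gamma}\langle\pmb\lambda|=\langle 0|c(\lambda_N)\cdots c(\lambda_1)$ via the elementary identity $\langle 0|c(\lambda_N)\cdots c(\lambda_1)c^\dagger(k_1)\cdots c^\dagger(k_N)|0\rangle=\det[\delta_{\lambda_i,k_j}]_{i,j}$ collapses the multiple sum to $\det[E(\lambda_i,\mu_j)]_{i,j}=\det E(\pmb\lambda,\pmb\mu)$. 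Multiplying by the scalar $e^{i\theta\ell}$ gives the claimed formula. I expect no genuine obstacle: the only point needing care is the sign, which works out to $+1$ precisely because of the ordering conventions adopted for the basis states $|\pmb{k}\rangle$, and everything else is just the Fourier transform of a diagonal one-body operator followed by a determinant expansion.
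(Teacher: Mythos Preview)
Your proposal is correct and amounts to the same computation as the paper's proof: both exploit that at $h=\infty$ the Bogoliubov fermions coincide with the Jordan--Wigner fermions, that $e^{i\theta\sigma_j^z}$ is diagonal in the site basis, and that the resulting Fourier/geometric sum combined with the Slater-determinant overlap yields $\det E(\pmb\lambda,\pmb\mu)$. The only cosmetic difference is that you phrase the middle step as conjugating $c^\dagger(\mu)$ by a number-conserving Gaussian operator, whereas the paper expands directly in real space and resums; the underlying algebra is identical.
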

\begin{proof}
Since the operator conserves the number of particles, $\pmb{\lambda}$
and $\pmb{\mu}$ must have the same number of particles for the form
factor not to vanish. Let us denote this number by $N$. Using
that at $h=\infty$ the Bogoliubov fermions reduce to the Jordan-Wigner
fermions we have 
\begin{align}
{}^{\rm NS}_{\infty\gamma}\langle \pmb{\lambda}|e^{i\theta \sum_{j=1}^{\ell} \sigma^z_j}|\pmb{\mu}\rangle_{\infty\gamma}^{\rm NS}&=\frac{1}{L^N}\sum_{j_1...j_N}\sum_{k_1...k_N}{}^{\rm NS}_{\infty\gamma}\langle 0|c_{j_N}...c_{j_1}e^{i\theta \sum_{j=1}^{\ell} \sigma^z_j}c_{k_1}^\dagger...c_{k_N}^\dagger |0\rangle_{\infty\gamma}^{\rm NS} \nn
&\qquad\qquad\qquad\qquad\times e^{-ik_1\mu_1-...-ik_N\mu_N}e^{ij_1\lambda_1+...+ij_N\lambda_N}\nn
&=\frac{e^{i\theta\ell}}{L^N}\!\sum_{\sigma\in\mathfrak{S}_N}(-1)^\sigma\!\sum_{j_1...j_N}e^{ij_1(\lambda_1-\mu_{\sigma(1)})+...+ij_N(\lambda_N-\mu_{\sigma(N)})}e^{-2i\theta \sum_{q=1}^N \1_{j_q\leq\ell}}\nn
&=\frac{e^{i\theta\ell}}{L^N}\sum_{\sigma\in\mathfrak{S}_N}(-1)^\sigma\prod_{q=1}^N\sum_{j=1}^{L}e^{ij(\lambda_q-\mu_{\sigma(q)})}e^{-2i\theta \1_{j\leq\ell}}\nn
&=e^{i\theta\ell}\det E(\pmb{\lambda},\pmb{\mu})\,.
\end{align}
\end{proof}

\begin{property}[Form factors of $\sigma^x$\label{ffsigmax}]
The form factors of $\sigma^x_\ell$ between energy eigenstates at
$h=0$, $\gamma=1$ have the following determinant representation
\begin{equation}
\begin{aligned}
{}^{\rm R}_{01}\langle \pmb{\lambda}\cup\{0\}|\sigma^x_\ell|\pmb{\mu}\rangle^{\rm NS}_{01}=&(-1)^{N(N+1)/2}\left(\frac{2}{L}\right)^Ne^{\tfrac{i}{2}(\sum_{\lambda\in\pmb{\lambda}}\lambda+\sum_{\mu\in\pmb{\mu}}\mu)}e^{-i\ell(\sum_{\lambda\in\pmb{\lambda}}\lambda-\sum_{\mu\in\pmb{\mu}}\mu)}\\
&\times\det\left[\frac{1}{e^{i\lambda_j}-e^{i\mu_k}} \right]_{jk}\ .
\end{aligned}
\end{equation}
Here $N$ is the number of elements in $\pmb{\lambda}$ and
$\pmb{\mu}$. If they have different numbers of elements the form
factor vanishes. 
\end{property}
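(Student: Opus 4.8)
The plan is to exploit the special structure of the point $h=0,\gamma=1$. There the Bogoliubov angle \eqref{angle<} degenerates to $\theta_k=k$, the dispersion is flat ($\varepsilon_{01}(k)=2$ for $k\neq0$, $\varepsilon_{01}(0)=-2$), and $\sigma^x_\ell$ commutes with $H(0,1)=-\sum_j\sigma^x_j\sigma^x_{j+1}$. The selection rule then comes for free: by \eqref{ensr} the eigenvalue of $H(0,1)$ on $|\pmb{\mu}\rangle^{\rm NS}_{01}$ is $2|\pmb{\mu}|-L$ and on $|\pmb{\lambda}\cup\{0\}\rangle^{\rm R}_{01}$ is $2|\pmb{\lambda}|-L$, so $[\sigma^x_\ell,H(0,1)]=0$ forces the form factor to vanish unless $|\pmb{\lambda}|=|\pmb{\mu}|=:N$. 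It is convenient to anchor the remaining calculation on the $N=0$ identity $\sigma^x_\ell|0\rangle^{\rm NS}_{01}=\alpha^\dagger_{01;0}|0\rangle^{\rm R}_{01}$: this holds because $\sigma^x_\ell$ commutes with $H(0,1)$ and sends the NS ground state (energy $-L$) to an odd-parity state of the same energy, and the latter eigenspace is one-dimensional, spanned by $\alpha^\dagger_{01;0}|0\rangle^{\rm R}_{01}$; the constant has unit modulus since $(\sigma^x_\ell)^2=1$, and with the standard phase conventions for the two vacua it equals $1$, matching the claimed formula at $N=0$.

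First I would write $\sigma^x_\ell$ purely in terms of the Jordan–Wigner fermions. Using $1-2c^\dagger_jc_j=-(c_j+c^\dagger_j)(c_j-c^\dagger_j)$ gives $\sigma^x_\ell=(-1)^{\ell-1}(c_1+c^\dagger_1)(c_1-c^\dagger_1)\cdots(c_{\ell-1}+c^\dagger_{\ell-1})(c_{\ell-1}-c^\dagger_{\ell-1})(c_\ell+c^\dagger_\ell)$, a product of $2\ell-1$ Majorana operators. Conceptually, at $h=0,\gamma=1$ the fermions $\alpha_{01;k}$ are --- up to the relation between the NS and R vacua --- the Fourier modes of the Kramers–Wannier dual Jordan–Wigner fermions, in which $\sigma^x_\ell$ is a string (``disorder'') operator; this is the exact counterpart of the fact used in Lemma \ref{ff1} that at $h=\infty$ the $\alpha_{\infty\gamma;k}$ \emph{are} the $c_j$. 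I would then Fourier transform the $c_j$ using \eqref{Bogoliubovtrafo} on the NS grid and \eqref{Bogoliubovtrafo2} on the R grid, and push all operators onto the two vacua. The genuinely delicate point is the interface between the two parity sectors: the bra $\langle 0|^{\rm R}_{01}\alpha_{01;0}\alpha_{01;\lambda_N}\cdots\alpha_{01;\lambda_1}$ lives in the R Fock space (periodic fermions) while the ket $\alpha^\dagger_{01;\mu_N}\cdots\alpha^\dagger_{01;\mu_1}|0\rangle^{\rm NS}_{01}$ lives in the NS Fock space (antiperiodic fermions), and $\sigma^x_\ell$ is exactly the operator that bridges them, with the zero mode $\alpha_{01;0}$ playing a distinguished role; this bridge must be written down carefully so that the matrix element becomes a single Gaussian (two-vacuum) average.

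Once that is set up, Wick's theorem turns the matrix element into a Pfaffian/determinant whose entries are the elementary two-point contractions. At $h=0,\gamma=1$ these contractions are geometric in the lattice index, so the sums over the $2\ell-1$ Majorana sites of $\sigma^x_\ell$ telescope; carrying out this resummation is the heart of the argument. After it one should recognise the Cauchy matrix $\big[1/(e^{i\lambda_j}-e^{i\mu_k})\big]_{jk}$, together with the prefactor $(2/L)^N$ from the momentum-space normalisation of the fermions, the half-sum phase $e^{\frac{i}{2}(\sum_{\lambda\in\pmb{\lambda}}\lambda+\sum_{\mu\in\pmb{\mu}}\mu)}$ from the $2\pi(n+1/2)/L$ versus $2\pi n/L$ mismatch of the NS and R momentum grids, the position phase $e^{-i\ell(\sum_{\lambda\in\pmb{\lambda}}\lambda-\sum_{\mu\in\pmb{\mu}}\mu)}$ obtained most transparently by writing $\sigma^x_\ell=U^{\ell-1}\sigma^x_1U^{1-\ell}$ with $U$ the one-site translation and tracking the phases it produces on the $\alpha$'s, and the overall sign $(-1)^{N(N+1)/2}$ from reordering the fermionic operators and from the zero mode. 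The base case $N=0$ must then reproduce $\langle 0|^{\rm R}_{01}\alpha_{01;0}\,\sigma^x_\ell\,|0\rangle^{\rm NS}_{01}=1$, which also pins the normalisation.

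The main obstacle is twofold, and in both parts the difficulty is bookkeeping rather than anything conceptual: fixing the NS$\leftrightarrow$R vacuum/operator dictionary and the zero-mode insertion exactly (every sign, and the precise placement of $\alpha_{01;0}$), and executing the telescoping resummation of the Pfaffian into the Cauchy determinant without losing prefactors. An alternative that trades the resummation for a recursion is to prove the formula by induction on $N$: acting with a single $(c_1+c^\dagger_1)$ raises $N$ by one, and combined with translation covariance this reduces the general case to $\ell=1$ and small $N$; but one then still has to control the boundary term produced when $U$ carries an operator across the Jordan–Wigner cut, which is precisely the source of the NS/R subtlety. As noted in \cite{DBG}, this calculation has been performed for the Ising chain in that reference's Supplemental Material, and since the present lemma is stated at the fixed point $h=0,\gamma=1$ used throughout Section \ref{sec:ce}, no additional $\gamma$-dependence enters here.
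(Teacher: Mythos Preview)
The paper does not prove this lemma at all: its entire ``proof'' is the one-line citation ``See Refs \cite{Bugrij,BL03,Gehlen,iorgov11}.'' So there is no argument of the paper's to compare your proposal against; you are attempting precisely what the paper delegates to the literature.

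Your outline is sound and in the spirit of those references (the fermion derivation in \cite{iorgov11} in particular). The selection rule via $[\sigma^x_\ell,H(0,1)]=0$ is correct and gives the $|\pmb{\lambda}|=|\pmb{\mu}|$ constraint cleanly; the $N=0$ anchor is right; the Majorana representation and the observation that at $h=0,\gamma=1$ the Bogoliubov rotation is $\theta_k=k$ (so that the $\alpha_{01;k}$ are Fourier modes of the dual Majoranas) is exactly the structural reason the form factors become Cauchy determinants here, and is the key idea underlying the whole programme of Section \ref{sec:ce}. Your identification of the position phase via translation, the half-sum phase via the NS/R momentum offset, and the $(2/L)^N$ via Fourier normalisation is also correct.

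That said, the two points you yourself flag as the main obstacles --- the exact NS$\leftrightarrow$R vacuum/operator dictionary (including the zero-mode placement and the boundary term when translating across the Jordan--Wigner cut) and the telescoping of the resulting Pfaffian into the Cauchy determinant --- are precisely where all the work sits, and your proposal does not carry them out. The phrase ``with the standard phase conventions'' for the $N=0$ constant also needs to be made concrete against the paper's conventions \eqref{eqeqeq}. None of this is a wrong idea; it is just that what remains to execute is exactly the content of the cited papers, so as it stands your proposal is a faithful roadmap to those references rather than an independent proof.
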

\begin{proof}
See Refs\cite {Bugrij,BL03,Gehlen,iorgov11}. 
\end{proof}

\begin{property}[Form factors of $e^{-itH(h,\gamma)}$\label{ffh}]
The form factors of $e^{-itH}$ between energy eigenstates at
$h=0$, $\gamma=1$ have the following representation
\begin{equation}\label{ffheq}
\begin{aligned}
{}^{\rm NS}_{01}\langle \pmb{\lambda}|e^{-itH(h,\gamma)}|\pmb{\mu}\rangle^{\rm NS}_{01}&=\1_{\sigma(\pmb{\lambda})=\sigma(\pmb{\mu})}e^{-it\mathfrak{E}^{\rm NS}}\prod_{k\in{\rm NS}_+}\frac{1+K^2(k)e^{-2it\varepsilon(k)}}{1+K^2(k)}\\
&\times\prod_{\nu\in\sigma(\pmb{\lambda})}e^{-it\varepsilon(\nu)}\frac{1+K^2(\nu)}{1+e^{-2it\varepsilon(\nu)}K^2(\nu)}\\
%&\times\prod_{\nu\in\sigma(\pmb{\lambda})}\frac{1}{e^{-it\varepsilon(\nu)}\sin^2 \Delta(\nu)+e^{it\varepsilon(\nu)}\cos^2 \Delta(\nu)}\\
&\times \prod_{\lambda\in\pi(\pmb{\lambda})}iK(\lambda)\frac{1-e^{-2it\varepsilon(\lambda)}}{1+K^2(\lambda)e^{-2it\varepsilon(\lambda)}}\prod_{\mu\in\pi(\pmb{\mu})}(-iK(\mu))\frac{1-e^{-2it\varepsilon(\mu)}}{1+K^2(\mu)e^{-2it\varepsilon(\mu)}}\\
&\times \prod_{\nu\in\pi(\pmb{\lambda})\cap\pi(\pmb{\mu})}
\frac{(1+e^{-2it\varepsilon(\nu)}K^2(\nu))(1+e^{-2it\varepsilon(\nu)}/K^2(\nu))}{(1-e^{-2it\varepsilon(\nu)})^2}\ .
\end{aligned}
\end{equation}
Here the notations are as in Lemma \ref{overlap} and we have
  used shorthand notations $K(k)=K_{01;h\gamma}(k)$, $\varepsilon(k)=\varepsilon_{h\gamma}(k)$. 
%  and $\Delta(k)=\tfrac{\theta_k^{h\gamma}-\theta_k^{01}}{2}$.
\end{property}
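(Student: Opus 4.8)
The plan is to evaluate the form factor by inserting a resolution of the identity in the eigenbasis of $H(h,\gamma)$, which turns the matrix element of $e^{-itH(h,\gamma)}$ into a weighted sum of products of overlaps of exactly the type already computed in Lemma~\ref{overlap}. Since $H^{\rm NS}(h,\gamma)=\sum_{k\in{\rm NS}}\varepsilon_{h\gamma}(k)\big(\alpha^\dagger_{h\gamma;k}\alpha_{h\gamma;k}-\tfrac12\big)$, the eigenstate $|\pmb{n}\rangle^{\rm NS}_{h\gamma}$ carries energy $\mathfrak{E}^{\rm NS}_{h\gamma}+\sum_{k\in\pmb{n}}\varepsilon_{h\gamma}(k)$, so that
\begin{equation*}
{}^{\rm NS}_{01}\langle\pmb{\lambda}|e^{-itH(h,\gamma)}|\pmb{\mu}\rangle^{\rm NS}_{01}
=e^{-it\mathfrak{E}^{\rm NS}}\sum_{\pmb{n}\subset{\rm NS}}\Big(\prod_{k\in\pmb{n}}e^{-it\varepsilon(k)}\Big)\;{}^{\rm NS}_{01}\langle\pmb{\lambda}|\pmb{n}\rangle^{\rm NS}_{h\gamma}\;{}^{\rm NS}_{h\gamma}\langle\pmb{n}|\pmb{\mu}\rangle^{\rm NS}_{01}\,.
\end{equation*}
The whole computation thus reduces to substituting the overlap formula of Lemma~\ref{overlap} and performing the sum over intermediate states $\pmb{n}$.

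Next I would plug in Lemma~\ref{overlap} for both overlaps, using that ${}^{\rm NS}_{h\gamma}\langle\pmb{n}|\pmb{\mu}\rangle^{\rm NS}_{01}$ is the complex conjugate of ${}^{\rm NS}_{01}\langle\pmb{\mu}|\pmb{n}\rangle^{\rm NS}_{h\gamma}$, that $1/\cos\tfrac{\theta^{01}_p-\theta^{h\gamma}_p}{2}=\sqrt{1+K^2(p)}$, and that $K(p)\equiv K_{01;h\gamma}(p)$ is real. The two indicators $\1_{\sigma(\pmb{\lambda})=\sigma(\pmb{n})}$ and $\1_{\sigma(\pmb{n})=\sigma(\pmb{\mu})}$ force $\sigma(\pmb{\lambda})=\sigma(\pmb{n})=\sigma(\pmb{\mu})$, which both produces the prefactor $\1_{\sigma(\pmb{\lambda})=\sigma(\pmb{\mu})}$ in \eqref{ffheq} and restricts the sum to states whose unpaired content is fixed. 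The normalization denominators $\prod_{p\in{\rm NS}_+}\sqrt{1+K^2(p)}$ of the two overlaps combine with the numerator factors $\prod_{p\in\sigma(\pmb{n})}\sqrt{1+K^2(p)}$ to leave $\prod_{p\in P}(1+K^2(p))^{-1}$, where $P={\rm NS}_+\setminus\{|\nu|:\nu\in\sigma(\pmb{\lambda})\}$, while the energy phases of the unpaired momenta give $\prod_{\nu\in\sigma(\pmb{\lambda})}e^{-it\varepsilon(\nu)}$.

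What then remains is a sum over the pair content $B\equiv\pi(\pmb{n})\subseteq P$ of the intermediate state, with summand $\prod_{p\in B}e^{-2it\varepsilon(p)}\prod_{p\in\pi(\pmb{\lambda})\perp B}iK(p)\prod_{p\in\pi(\pmb{\mu})\perp B}(-iK(p))$, which factorizes over $p\in P$. For each $p$ one has the elementary two-term sum ``$p\in B$ or $p\notin B$''; distinguishing the four cases according to whether $p$ lies in $\pi(\pmb{\lambda})$ and/or $\pi(\pmb{\mu})$, this sum evaluates to $1+K^2(p)e^{-2it\varepsilon(p)}$ (in neither), to $\pm iK(p)(1-e^{-2it\varepsilon(p)})$ (in exactly one), or to $K^2(p)+e^{-2it\varepsilon(p)}$ (in both). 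Reorganizing the total product by pulling out the universal factor $\prod_{k\in{\rm NS}_+}\tfrac{1+K^2(k)e^{-2it\varepsilon(k)}}{1+K^2(k)}$ and recording the three kinds of corrections — for unpaired $\nu\in\sigma(\pmb{\lambda})$, for pairs present on only one side $\lambda\in\pi(\pmb{\lambda})$ or $\mu\in\pi(\pmb{\mu})$, and for common pairs $\nu\in\pi(\pmb{\lambda})\cap\pi(\pmb{\mu})$ — reproduces the factors displayed in \eqref{ffheq}; in particular the somewhat unusual common-pair factor $\tfrac{(1+e^{-2it\varepsilon}K^2)(1+e^{-2it\varepsilon}/K^2)}{(1-e^{-2it\varepsilon})^2}$ is exactly what converts $(1+K^2e^{-2it\varepsilon})\cdot\big(iK\tfrac{1-e^{-2it\varepsilon}}{1+K^2e^{-2it\varepsilon}}\big)\big({-}iK\tfrac{1-e^{-2it\varepsilon}}{1+K^2e^{-2it\varepsilon}}\big)$ into $K^2+e^{-2it\varepsilon}$.

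The main obstacle is the bookkeeping of fermionic signs. The unpaired momenta of $\pmb{\lambda}$, $\pmb{\mu}$ and of the intermediate state $\pmb{n}$ are in general interleaved differently with the paired momenta in the ordered Fock states used to define the eigenbasis, so one must verify that the sign conventions built into Lemma~\ref{overlap} are mutually consistent, and that the pair-by-pair factorization of the intermediate-state sum in the last step introduces no spurious signs. The latter holds because the pair operators $\alpha^\dagger_{h\gamma;-p}\alpha^\dagger_{h\gamma;p}$ are even and commute with one another and with the single-fermion creation operators attached to the unpaired momenta, so the sum over $\pmb{n}$ genuinely decouples mode pair by mode pair; the algebraic re-assembly of the four per-$p$ factors into the displayed form of \eqref{ffheq} is then routine but should be carried out explicitly to pin down the signs and the common-pair factor.
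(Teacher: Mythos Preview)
Your proposal is correct and follows essentially the same approach as the paper: insert a resolution of the identity in the $(h,\gamma)$ eigenbasis, use Lemma~\ref{overlap} for both overlaps to obtain the constraint $\sigma(\pmb{\lambda})=\sigma(\pmb{\mu})$ and the normalization factors, then factorize the remaining sum over the pair content $\pi(\pmb{n})$ mode-by-mode with the same four-case analysis you describe. The paper's writeup pulls out the $\pm iK$ factors slightly differently before summing, but the per-$p$ results and the final reassembly into the displayed form of \eqref{ffheq} are identical; your remark that the pair operators are even and mutually commuting is exactly the reason no extra fermionic signs appear, which the paper leaves implicit.
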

\begin{proof}
Inserting two resolutions of the identity in terms of energy
  eigenstates on either side of $e^{-iH(h,\gamma)t}$ we obtain
\begin{align}
&{}^{\rm NS}_{01}\langle\pmb{\lambda}|e^{-itH(h,\gamma)}|\pmb{\mu}\rangle^{\rm NS}_{01}=\sum_{\pmb{\nu}}{}^{\rm NS}_{01}\langle\pmb{\lambda}|\pmb{\nu}\rangle^{\rm NS}_{h\gamma}\ {}^{\rm NS}_{h\gamma}\langle \pmb{\nu}|\pmb{\mu}\rangle^{\rm NS}_{01} e^{-it\mathfrak{E}^{\rm NS}}\prod_{\nu\in\pmb{\nu}}e^{-it\varepsilon(\nu)}\nn
&=1_{\sigma(\pmb{\lambda})=\sigma(\pmb{\mu})}e^{-it\mathfrak{E}^{\rm
      NS}}\prod_{k\in
    \sigma(\pmb{\lambda})}e^{-it\varepsilon(k)}(1+K^2(k))\prod_{\mu\in\pi(\pmb{\mu})}(-iK(\mu))
\prod_{\lambda\in\pi(\pmb{\lambda})}iK(\lambda)\prod_{p\in{\rm NS}_+}\frac{1}{1+K^2(p)}\nn
&\qquad\qquad\qquad\times\sum_{\substack{\pmb{\nu}\subset {\rm
      NS}_+\\ \cap
    [\sigma(\pmb{\lambda})\cup(-\sigma(\pmb{\lambda}))]=\emptyset}}\prod_{\nu\in\pmb{\nu}}e^{-2it\varepsilon(\nu)}\begin{cases}\frac{1}{K^2(\nu)}&
  \text{if }\nu\in\pi(\pmb{\lambda})\cap \pi(\pmb{\mu})\ ,\\ -1& \text{if
  }\nu\in\pi(\pmb{\lambda})\perp \pi(\pmb{\mu})\ ,\\ K^2(\nu)& \text{if
  }\nu\notin\pi(\pmb{\lambda})\cup\pi(\pmb{\mu})\ .\end{cases}
\label{lemma4_1}
\end{align}
The last line can be rewritten in the form
\begin{align}
\prod_{\substack{k\in \pi(\pmb{\lambda})\cap \pi(\pmb{\mu})}}[1+\tfrac{e^{-2it\varepsilon(k)}}{K^2(k)}]\prod_{\substack{k\in \pi(\pmb{\lambda})\perp \pi(\pmb{\mu})}}[1-e^{-2it\varepsilon(k)}]\prod_{\substack{k\notin \pi(\pmb{\lambda}), \pi(\pmb{\mu})\\ \notin [\sigma(\pmb{\lambda})\cup(-\sigma(\pmb{\lambda}))]}}[1+e^{-2it\varepsilon(k)}K^2(k)]\,.
\end{align}
Substituting this back in \fr{lemma4_1} results in the
  representation given in the Lemma.
\end{proof}

\subsection{Summation formulas}
\begin{property}[Andr\'eief identity \cite{andreief}]\label{andreief}
Given two functions $f(\lambda,\mu)$ and $g(\lambda,\mu)$, a set $K$ and two sets of numbers $\{\lambda_i\}_{i=1}^N,\{\mu_j\}_{j=1}^N$ we have the relation
\begin{equation}
\sum_{k_1<...<k_N\in K}\det_{i,j} \left[f(\lambda_i,k_j) \right]\det_{i,j} \left[g(k_i,\mu_j) \right]=\det_{i,j} \left[\sum_{k\in K} f(\lambda_i,k)g(k,\mu_j) \right]\,.
\end{equation}
\end{property}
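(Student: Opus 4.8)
The plan is to recognize this as the classical Cauchy--Binet (Gram/Heine/Andr\'eief) identity and to prove it by direct expansion of the right-hand side via the Leibniz formula. First I would write
\[
\det_{i,j}\Big[\sum_{k\in K} f(\lambda_i,k)g(k,\mu_j)\Big]
=\sum_{\sigma\in\mathfrak{S}_N}(-1)^\sigma\prod_{i=1}^N\Big(\sum_{k_i\in K} f(\lambda_i,k_i)g(k_i,\mu_{\sigma(i)})\Big),
\]
then expand the product of sums and interchange the finite summations over $(k_1,\dots,k_N)\in K^N$ and over $\sigma$, obtaining
\[
\sum_{k_1,\dots,k_N\in K}\Big(\prod_{i=1}^N f(\lambda_i,k_i)\Big)\sum_{\sigma\in\mathfrak{S}_N}(-1)^\sigma\prod_{i=1}^N g(k_i,\mu_{\sigma(i)})
=\sum_{k_1,\dots,k_N\in K}\Big(\prod_{i=1}^N f(\lambda_i,k_i)\Big)\det_{i,j}\big[g(k_i,\mu_j)\big].
\]

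Next I would observe that only tuples $(k_1,\dots,k_N)$ with pairwise distinct entries contribute: if $k_a=k_b$ for $a\neq b$, the matrix $[g(k_i,\mu_j)]$ has two equal rows and its determinant vanishes. Every tuple of distinct entries is an ordering of a unique $N$-element subset $S=\{\ell_1<\dots<\ell_N\}\subset K$. Parametrising $k_i=\ell_{\pi(i)}$ with $\pi\in\mathfrak{S}_N$ and using antisymmetry of the determinant under row permutations, $\det_{i,j}[g(\ell_{\pi(i)},\mu_j)]=(-1)^\pi\det_{i,j}[g(\ell_i,\mu_j)]$, the sum over the orderings of a fixed $S$ becomes
\[
\Big(\sum_{\pi\in\mathfrak{S}_N}(-1)^\pi\prod_{i=1}^N f(\lambda_i,\ell_{\pi(i)})\Big)\det_{i,j}\big[g(\ell_i,\mu_j)\big]
=\det_{i,j}\big[f(\lambda_i,\ell_j)\big]\,\det_{i,j}\big[g(\ell_i,\mu_j)\big].
\]
Summing over all subsets $S$ then reproduces exactly the left-hand side of the Lemma.

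There is no serious obstacle here; this is a classical identity and the only points requiring a little care are justifying the interchange of the finite sums and tracking the sign correctly when a tuple of distinct indices is re-sorted into increasing order. An even shorter route, if one is willing to quote it, is to write the right-hand side as $\det(FG)$ with $F=\big[f(\lambda_i,k)\big]_{i\in[N],\,k\in K}$ and $G=\big[g(k,\mu_j)\big]_{k\in K,\,j\in[N]}$ and invoke the Cauchy--Binet formula directly; the expansion above is essentially a self-contained proof of that formula in the present setting, and since in all applications in this paper $K$ is finite (a subset of ${\rm NS}_+$ or ${\rm R}_+$) no convergence issues arise.
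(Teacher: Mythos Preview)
Your proof is correct and is the standard Leibniz-expansion/Cauchy--Binet argument; the paper itself does not give a proof but simply refers to Ref.~\cite{DBG}, so your self-contained derivation is in fact more complete than what appears here.
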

\begin{proof}
See Ref. \cite{DBG}.
\end{proof}

\begin{property}[de Bruijn identity \cite{debrujin}]\label{brujin}
Let $f(\lambda,\mu)$ and $g(\lambda,\mu)$ be two functions of two
  variables, a set $K$ and a set of numbers
  $\{\lambda_i\}_{i=1}^{2N}$. Define a matrix
\be
\Big(A(\pmb{k})\Big)_{ij}=\begin{cases} 
f(k_q,\lambda_j)& \text{if }i=2q-1\ ,\\
g(k_{q},\lambda_j)&  \text{if }i=2q\ .
\end{cases}
\ee
Then the following indentity holds
\begin{equation}\label{brujin2}
\sum_{k_1<...<k_N\in K}\det A(\pmb{k})=\underset{ij}{{\rm pf}} \left[ \sum_{k\in K} f(k,\lambda_i)g(k,\lambda_j)-f(k,\lambda_j)g(k,\lambda_i)\right]\,.
\end{equation}
\end{property}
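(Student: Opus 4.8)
The plan is to prove de Bruijn's identity directly, by the Leibniz expansion of $\det A(\pmb k)$ together with a symmetrization over $K$, and to match the outcome against an expansion of the Pfaffian on the right-hand side. This is the standard argument for de Bruijn's integration formula, adapted to the case where the measure is the finite sum over $K$.

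First I would record two elementary facts about $\det A(\pmb k)$ viewed as a function of $(k_1,\dots,k_N)\in K^N$. Exchanging $k_q$ with $k_{q'}$ swaps the two row-blocks $\{2q-1,2q\}$ and $\{2q'-1,2q'\}$ of $A(\pmb k)$, which is an even permutation of the rows, so $\det A(\pmb k)$ is symmetric under the natural $\mathfrak{S}_N$-action on the $k$'s; and if $k_q=k_{q'}$ for $q\neq q'$ then rows $2q-1$ and $2q'-1$ of $A(\pmb k)$ are equal, so $\det A(\pmb k)=0$. These two properties give
\[
\sum_{k_1<\dots<k_N\in K}\det A(\pmb k)=\frac{1}{N!}\sum_{(k_1,\dots,k_N)\in K^N}\det A(\pmb k)\,.
\]

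Next I would expand the determinant by the Leibniz formula. Since row $2q-1$ of $A(\pmb k)$ is $(f(k_q,\lambda_j))_j$ and row $2q$ is $(g(k_q,\lambda_j))_j$, one has $\det A(\pmb k)=\sum_{\sigma\in\mathfrak{S}_{2N}}(-1)^\sigma\prod_{q=1}^N f(k_q,\lambda_{\sigma(2q-1)})\,g(k_q,\lambda_{\sigma(2q)})$. Interchanging the finite sums over $\pmb k$ and $\sigma$ and summing over each $k_q$ independently yields
\[
\sum_{k_1<\dots<k_N\in K}\det A(\pmb k)=\frac{1}{N!}\sum_{\sigma\in\mathfrak{S}_{2N}}(-1)^\sigma\prod_{q=1}^N G_{\sigma(2q-1),\sigma(2q)}\,,\qquad G_{ij}:=\sum_{k\in K}f(k,\lambda_i)\,g(k,\lambda_j)\,.
\]
On the Pfaffian side, I write the matrix in \eqref{brujin2} as $M=G-G^{T}$ (antisymmetric) and start from $\pf M=\frac{1}{2^N N!}\sum_{\sigma\in\mathfrak{S}_{2N}}(-1)^\sigma\prod_{q=1}^N M_{\sigma(2q-1),\sigma(2q)}$. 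Expanding each factor as $M_{\sigma(2q-1),\sigma(2q)}=G_{\sigma(2q-1),\sigma(2q)}-G_{\sigma(2q),\sigma(2q-1)}$ produces $2^N$ terms labelled by the subset $S\subseteq\{1,\dots,N\}$ of pairs for which the second term is chosen, with a weight $(-1)^{|S|}$. In the $S$-term, replacing $\sigma$ by $\sigma\circ\prod_{q\in S}\tau_q$, where $\tau_q$ is the transposition $(2q-1,2q)$, turns it into the $S=\emptyset$ term: the $|S|$ disjoint transpositions flip the sign of $\sigma$ by $(-1)^{|S|}$, exactly cancelling the weight, and the composition is a bijection of $\mathfrak{S}_{2N}$. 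Hence all $2^N$ terms are equal, the prefactor $2^N$ cancels, and $\pf M=\frac{1}{N!}\sum_{\sigma}(-1)^\sigma\prod_{q=1}^N G_{\sigma(2q-1),\sigma(2q)}$, which matches the expression obtained from the determinant, proving the identity. (As a check, for $N=1$ both sides equal $\sum_{k\in K}[f(k,\lambda_1)g(k,\lambda_2)-f(k,\lambda_2)g(k,\lambda_1)]$.)

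The step needing the most care is this last one: tracking the signs when re-labelling permutations, i.e. verifying that composition with the intra-pair transpositions $\tau_q$ is a sign-reversing bijection, so that the $2^N$ expansion terms genuinely coincide and cancel the combinatorial prefactor $2^N$. The symmetrization in the first step and the interchange of finite sums in the second are routine, so I would state those briefly and devote the bulk of the write-up to the Pfaffian expansion — or, if preferred, simply invoke de Bruijn's original result, the discrete measure being covered by exactly the same argument.
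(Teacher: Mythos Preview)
Your proposal is correct and follows essentially the same route as the paper's proof: Leibniz-expand $\det A(\pmb k)$, symmetrize the ordered sum over $K$ into $\frac{1}{N!}$ times the unrestricted sum, collapse each $k_q$-sum into $G_{ij}=\sum_k f(k,\lambda_i)g(k,\lambda_j)$, and then use the intra-pair transpositions $(2q-1,2q)$ to produce the factor $1/2^N$ and the antisymmetrized matrix $G-G^T$, recovering the Pfaffian formula. The only cosmetic difference is that the paper works in one direction (antisymmetrizing $b_{ij}$ pair by pair and then recognizing the Pfaffian), whereas you meet in the middle by expanding the Pfaffian separately; and you spell out the $\mathfrak S_N$-invariance and diagonal vanishing of $\det A(\pmb k)$, which the paper simply asserts.
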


\begin{proof}
Using the definition of the determinant we have 
\begin{equation}
\det A(\pmb{k})=\sum_{\sigma\in\mathfrak{S}_{2N}}(-1)^\sigma f(k_1,\mu_{\sigma (1)})g(k_1,\mu_{\sigma (2)})...f(k_N,\mu_{\sigma(2N-1)})g(k_{N},\mu_{\sigma(2N)})\,.
\end{equation}
Then
\begin{align}
\sum_{k_1<...<k_N}\det A(\pmb{k})&=\frac{1}{N!}\sum_{k_1,...,k_N}\det A(\pmb{k})
=\frac{1}{N!}\sum_{\sigma\in\mathfrak{S}_{2N}}(-1)^\sigma b_{\sigma(1)\sigma(2)}...b_{\sigma(2N-1)\sigma(2N)}\,,
\end{align}
where
\begin{equation}
b_{ij}=\sum_k f(k,\mu_i)g(k,\mu_j)\,.
\end{equation}
Changing variables to $\sigma=\sigma' \cdot (1,N+1)$ we have
\begin{align}
\sum_{\sigma\in\mathfrak{S}_{2N}}(-1)^\sigma
b_{\sigma(1)\sigma(2)}...b_{\sigma(2N-1)\sigma(2N)}&=
\sum_{\sigma\in\mathfrak{S}_{2N}}\frac{(-1)^\sigma}{2} (b_{\sigma(1)\sigma(2)}-b_{\sigma(2)\sigma(1)})...b_{\sigma(2N-1)\sigma(2N)}\nn
&=\frac{1}{2^N}\sum_{\sigma\in\mathfrak{S}_{2N}}(-1)^\sigma B_{\sigma(1)\sigma(2)}...B_{\sigma(2N-1)\sigma(2N)}\,,
\end{align}
where $B_{ij}$ is the matrix on the right-hand side in the
Lemma. This completes the proof.
\end{proof}

\subsection{Coherent averages}
\begin{property}\label{qa}
Let $F[\pmb{q}]$ be a function of $\pmb{q}$, and $f(k)$ a function. We define
\begin{equation}
\langle F\rangle\equiv\frac{1}{\prod_{k\in{\rm NS}_+}[1+|f(k)|^2]}\sum_{\pmb{q}\subset{\rm NS}_+}F[\pmb{q}]\prod_{q\in\pmb{q}}[|f(q)|^2]\,.
\end{equation}
If in the thermodynamic limit $F[\pmb{q}]$ depends on the momenta only
through the root density $\rho$, i.e. $\lim_{\rm
  th}F[\pmb{q}]=F[\rho]$, then 
\begin{align}
\langle F\rangle&=F[\rho_s]+o(L^{0})\ ,\nn
\rho_s(k)&=\frac{1}{2\pi}\frac{|f(k)|^2}{1+|f(k)|^2}\ .
\end{align}
\end{property}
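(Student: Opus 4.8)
The plan is to recognise that the normalisation turns the weighted sum defining $\langle F\rangle$ into an expectation with respect to a product Bernoulli measure --- the occupation-number representation of the coherent state --- and then to invoke a law of large numbers for the empirical root density, which by hypothesis is all that $F$ sees in the thermodynamic limit.

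First I would use the elementary identity $\prod_{k\in{\rm NS}_+}(1+|f(k)|^2)^{-1}\prod_{q\in\pmb{q}}|f(q)|^2=\prod_{k\in\pmb{q}}p_k\prod_{k\in{\rm NS}_+\setminus\pmb{q}}(1-p_k)$, with $p_k\equiv|f(k)|^2/(1+|f(k)|^2)=2\pi\rho_s(k)$, to rewrite $\langle F\rangle=\mathbb{E}[F[\pmb{q}]]$, where under $\mathbb{E}$ the occupation numbers $\{n_k\}_{k\in{\rm NS}_+}$ are independent Bernoulli variables with $\mathbb{P}(n_k=1)=p_k$ and $\pmb{q}=\{k\in{\rm NS}_+:n_k=1\}$.

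Next I would establish concentration of the root density $\rho_{\pmb{q}}$ associated with $\pmb{q}$. For any interval $I\subset[0,\pi]$ the number of occupied modes $\#\{k\in I\cap{\rm NS}_+:n_k=1\}$ equals $L\int_I\rho_{\pmb{q}}(k)\D{k}$ up to $O(1)$ boundary terms; it is a sum of $\frac{L}{2\pi}|I|+O(1)$ independent $\{0,1\}$-variables with mean $\sum_{k\in I\cap{\rm NS}_+}p_k=L\int_I\rho_s(k)\D{k}+O(1)$ and variance $O(L)$. Chebyshev's inequality (or Hoeffding's, which even gives exponentially small tails) then gives $\tfrac1L\#\{\cdots\}\to\int_I\rho_s(k)\D{k}$ in probability, with fluctuations of order $L^{-1/2}$. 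Applying this to each block of a finite partition of $[0,\pi]$, I conclude that for every $\epsilon>0$, $\mathbb{P}(\|\rho_{\pmb{q}}-\rho_s\|\ge\epsilon)\to0$ as $L\to\infty$ in the (coarse-grained) topology in which the functional $F[\cdot]$ is meant to be continuous. Combining this with the hypothesis $F[\pmb{q}]=F[\rho_{\pmb{q}}]+o(L^0)$ and the boundedness of $F$ (uniform in $L$) on admissible root densities, I split $\mathbb{E}[F]$ according to whether $\|\rho_{\pmb{q}}-\rho_s\|<\epsilon$: on this typical event $|F[\pmb{q}]-F[\rho_s]|\le\omega_F(\epsilon)+o(L^0)$ with $\omega_F$ the modulus of continuity of $F$, and on the rare complement $F$ is bounded while the probability is $o(L^0)$; sending $L\to\infty$ then $\epsilon\to0$ yields $\langle F\rangle=F[\rho_s]+o(L^0)$.

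The hard part is not the probabilistic estimate, which is elementary, but making precise the hypothesis that ``$F$ depends on the momenta only through the root density'': one must fix the topology on root densities in which $F$ is continuous and check that $F$ stays bounded (or at worst $e^{o(L)}$) even on atypical configurations, so that the exponentially rare large deviations of $\rho_{\pmb{q}}$ away from $\rho_s$ cannot spoil the estimate. In all the applications of this lemma in the paper, $F$ is a correlation function or a Fredholm determinant/Pfaffian and this is automatic; moreover, since $\rho_{\pmb{q}}$ already concentrates at scale $L^{-1/2}$, only an $O(L^{-1/2})$ neighbourhood of $\rho_s$ is relevant, so that mere differentiability of $F$ there suffices to reach the same conclusion.
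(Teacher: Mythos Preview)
Your argument is correct. The key observation---that the normalised weighted sum is exactly the expectation under a product Bernoulli measure with parameters $p_k=|f(k)|^2/(1+|f(k)|^2)$---is the heart of the matter, and from there the law of large numbers (or Chebyshev/Hoeffding on the window counts) gives concentration of the empirical root density on $\rho_s$, so continuity of $F$ finishes the job. Your discussion of the regularity hypotheses needed on $F$ is also apt.

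The paper itself does not prove this lemma but refers to \cite{DBG}; however, the proof of the more general Lemma~\ref{qa2} in the paper reveals the intended method, which is different from yours. There the authors introduce a generating function $\Gamma(\alpha_1,\ldots,\alpha_m)$ by inserting factors $(1+\alpha_k/L)$ for each $\lambda$ in the $k$-th window, perform the sum explicitly to obtain a closed product, and read off from its derivatives that all moments $\langle r_1^{i_1}\cdots r_m^{i_m}\rangle$ factorise as $\prod_a\big(\int_{W_a}\rho_s\big)^{i_a}+\mathcal{O}(L^{-1})$; polynomial approximation then extends this to arbitrary functionals of $\rho$. This moment/generating-function route is computationally explicit and stays entirely within combinatorial manipulations familiar in this literature; your probabilistic route is shorter and conceptually cleaner, making transparent that the coherent-state average is literally a product measure, and it gives for free the exponential (not just $\mathcal{O}(L^{-1})$) suppression of atypical configurations via Hoeffding. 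Both approaches require the same implicit regularity of $F$, which you correctly flag as the part that is taken for granted.
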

\begin{proof}
See Ref. \cite{DBG}.
\end{proof}

\begin{property}\label{qa2}
Given two functionals $F[\pmb{q}]$ and $G[\pmb{q}]$, as well as three functions $f(k),g(k),h(k)$, we define
\begin{equation}\label{deno}
\langle F,G\rangle\equiv\frac{\sum_{\pmb{\lambda},\pmb{\mu}\subset {\rm NS}_+}F[\pmb{\lambda}]G[\pmb{\mu}] \prod_{\lambda\in\pmb{\lambda}}f(\lambda)\prod_{\mu\in\pmb{\mu}}g(\mu)\prod_{\nu\in\pmb{\lambda}\cap \pmb{\mu}}h(\nu)}{\prod_{k\in {\rm NS}_+}[1+f(k)+g(k)+f(k)g(k)h(k)]}\,.
\end{equation}
If $F[\pmb{q}]$ and $G[\pmb{q}]$ depend only on the root density of $\pmb{q}$ in the thermodynamic limit, then
\begin{equation}
\langle F,G\rangle=F[\rho_1]G[\rho_2]+o(L^{-0})\,,
\end{equation}
with
\begin{equation}
\rho_1=\frac{1}{2\pi}\frac{f+fgh}{1+f+g+fgh}\,,\qquad \rho_2=\frac{1}{2\pi}\frac{g+fgh}{1+f+g+fgh}\,.
\end{equation}
In these equations the root density can be complex.
\end{property}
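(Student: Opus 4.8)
The plan is to adapt the proof of Lemma~\ref{qa} from \cite{DBG} to the present two-replica situation, the one genuinely new feature being the coupling $\prod_{\nu\in\pmb{\lambda}\cap\pmb{\mu}}h(\nu)$ between the two sums. First I would introduce occupation variables $n_k=\1_{k\in\pmb{\lambda}}$ and $m_k=\1_{k\in\pmb{\mu}}$ and note that the numerator of \eqref{deno} with $F=G=1$ factorises over momenta, each $k\in{\rm NS}_+$ carrying the weight $f(k)^{n_k}g(k)^{m_k}h(k)^{n_km_k}$ summed over $(n_k,m_k)\in\{0,1\}^2$; this reproduces exactly the denominator, so $\langle 1,1\rangle=1$. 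Forcing $n_k=1$ (resp. $m_k=1$) at a single momentum and factorising the rest gives the marginals $\langle\1_{k\in\pmb{\lambda}}\rangle=(f+fgh)/(1+f+g+fgh)=2\pi\rho_1(k)$ and $\langle\1_{k\in\pmb{\mu}}\rangle=2\pi\rho_2(k)$, and the product structure over momenta makes every connected correlation of the $n_k$'s and $m_k$'s at distinct momenta vanish. I would stress that none of this uses positivity of $f,g,h$: these are algebraic identities, valid for complex weights, which is precisely what allows $\rho_1,\rho_2$ to be complex in the conclusion.

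Next, exactly as in \cite{DBG}, I would use that a functional which in the thermodynamic limit depends only on the root density can be written $F[\pmb{\lambda}]=\sum_{S\subset\pmb{\lambda}}\phi(S)$, a sum over sub-multisets with a symmetric local weight $\phi$ scaling as $L^{-|S|}$, such that replacing the empirical density of $\pmb{\lambda}$ by a smooth $\rho$ turns this into the series defining $F[\rho]$; the Fredholm determinants and Pfaffians to which the Lemma is applied in the main text are manifestly of this form, $\phi$ being the determinant or Pfaffian of the relevant kernel stripped of its $\sqrt{\rho}$ prefactors. Writing $G[\pmb{\mu}]=\sum_{T\subset\pmb{\mu}}\psi(T)$ as well, I would substitute into \eqref{deno}, interchange the order of summation, and for fixed $S,T$ carry out the sums over $\pmb{\lambda}\supset S$ and $\pmb{\mu}\supset T$. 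This again factorises over momenta: a $k\in S\setminus T$ contributes $2\pi\rho_1(k)$, a $k\in T\setminus S$ contributes $2\pi\rho_2(k)$, a $k\in S\cap T$ contributes $(fgh)(k)$, and the remaining momenta rebuild the denominator. Since $\phi,\psi$ are supported on sub-multisets of size $\mathcal{O}(1)$ while $|{\rm NS}_+|=\mathcal{O}(L)$, the configurations with $S\cap T\neq\emptyset$ (and those with a repeated momentum inside $S$ or inside $T$) are suppressed by at least a factor $L^{-1}$; dropping them decouples the two replicas, and converting $\frac1L\sum_{k\in{\rm NS}_+}$ into integrals identifies the two surviving brackets with $F[\rho_1]$ and $G[\rho_2]$, giving $\langle F,G\rangle=F[\rho_1]G[\rho_2]+o(L^0)$.

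The step I expect to be the main obstacle is the rigorous control of the local-weight expansion for a general root-density functional, together with the interchange of the infinite $S,T$-summation with the thermodynamic limit: because the weights are complex there is no probability measure to lean on, and the sum of the absolute values of the weights is exponentially large in $L$, so one cannot bound the tails term by term. For the specific Fredholm objects occurring in the paper this is handled by standard Hadamard-type estimates on the kernels, which give uniform-in-$L$ control of the series and show that the cumulative $S\cap T\neq\emptyset$ and coincident-momentum corrections really sum to $o(L^0)$ rather than merely being individually small; stating the Lemma at the above level of generality requires an explicit regularity hypothesis on $F$ and $G$ (analyticity with controlled functional derivatives). An alternative, less hands-on route would be to prove the case $f,g,h\ge 0$ first, where $\langle\,\cdot\,\rangle$ is a genuine product probability measure and the result follows from concentration of the empirical densities around $\rho_1,\rho_2$, and then extend to complex weights by analytic continuation, the obstacle there being the uniform boundedness needed for a Vitali-type argument.
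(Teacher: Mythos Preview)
Your proposal is correct in outline but takes a genuinely different route from the paper's own proof. You expand $F$ and $G$ into local weights $F[\pmb{\lambda}]=\sum_{S\subset\pmb{\lambda}}\phi(S)$, $G[\pmb{\mu}]=\sum_{T\subset\pmb{\mu}}\psi(T)$, factorise the constrained sums over $\pmb{\lambda}\supset S$, $\pmb{\mu}\supset T$ mode by mode, and then argue that the $S\cap T\neq\emptyset$ terms are $1/L$-suppressed. The paper instead uses a moment-generating-function argument: it inserts factors $(1+\alpha/L)^{|\pmb{\lambda}|}(1+\beta/L)^{|\pmb{\mu}|}$ into \eqref{deno}, performs the sums over $\pmb{\lambda},\pmb{\mu}$ exactly to obtain $\Gamma(\alpha;\beta)$ as an explicit product over $k\in{\rm NS}_+$, and reads off $\langle r^i,r^j\rangle=(\int_0^\pi\rho_1)^i(\int_0^\pi\rho_2)^j+\mathcal{O}(L^{-1})$ by differentiation. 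It then refines this by partitioning $[0,\pi]$ into $m$ windows and repeating the argument with window-wise generating variables $\alpha_1,\dots,\alpha_m,\beta_1,\dots,\beta_m$, obtaining the result for any polynomial in the windowed particle numbers, and concludes by polynomial and then piecewise approximation of an arbitrary regular root-density functional.

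The practical difference is precisely at the point you flag as the main obstacle: your approach requires uniform control of an infinite local-weight expansion with complex weights in order to swap the $S,T$-summation with the thermodynamic limit, and you correctly note that this demands either Hadamard-type bounds tailored to the specific Fredholm kernels or an analyticity hypothesis on $F,G$. The paper's generating-function route sidesteps this entirely, because the identity for $\Gamma(\alpha;\beta)$ is exact in finite $L$ and purely algebraic (so complex $f,g,h$ cause no trouble), and the only approximation step is the standard one of approximating a regular functional of $\rho$ by polynomials in windowed averages. Your approach has the advantage of being closer to the structure of the Fredholm objects actually used in the text; the paper's has the advantage of being shorter and of not needing any bespoke tail estimates.
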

\begin{proof}
It is a generalisation of the proof of Lemma \ref{qa} in \cite{DBG}. Let us first treat the particular case where in the thermodynamic limit $F$ and $G$ depend only on $r$ the number of elements of $\pmb{q}$ divided by $L$. We introduce the generating function
\begin{equation}\label{alpha}
    \Gamma(\alpha;\beta)=\frac{\sum_{\pmb{\lambda},\pmb{\mu}\subset {\rm NS}_+}\prod_{\lambda\in\pmb{\lambda}}[1+\tfrac{\alpha}{L}]f(\lambda)\prod_{\mu\in\pmb{\mu}}[1+\tfrac{\beta}{L}]g(\mu)\prod_{\nu\in\pmb{\lambda}\cap \pmb{\mu}}h(\nu)}{\prod_{k\in {\rm NS}_+}[1+f(k)+g(k)+f(k)g(k)h(k)]}\,.
\end{equation}
We note that the denominator is such that $\Gamma(0;0)=1$. By differentiating with respect to $\alpha$ and $\beta$, we see that
\begin{equation}
    \langle r^i,r^j\rangle=\partial_\alpha^i \partial_\beta^j \Gamma(0;0)+\mathcal{O}(L^{-1})\,.
\end{equation}
Besides, performing the summation on $\pmb{\lambda},\pmb{\mu}$ we obtain
\begin{equation}
    \Gamma(\alpha)=\prod_{k\in{\rm NS}_+}\left[1+\frac{\alpha}{L} \frac{f+fgh}{1+f+g+fgh}+\frac{\beta}{L} \frac{g+fgh}{1+f+g+fgh}+\frac{\alpha\beta}{L^2}\frac{fgh}{1+f+g+fgh} \right](k)\,.
\end{equation}
From this we find for any $i,j$
\begin{equation}
  \langle r^i,r^j\rangle=\left(\int_0^\pi \rho_1(k)\D{k} \right)^i\left(\int_0^\pi \rho_2(k)\D{k} \right)^j+\mathcal{O}(L^{-1})\,,
\end{equation}
with $\rho_1,\rho_2$ given in the Lemma. As any regular function can be approximated by a polynomial with arbitrary precision provided its degree is high enough, this establishes the result of the Lemma when $F$ and $G$ are functions of $r$ only.

Let us now divide $[0,\pi]$ into $m$ windows $W_k=[\frac{\pi}{m}(k-1),\frac{\pi}{m}k]$ for $k=1,...,m$, and consider $F[r_1,...,r_m]$, $G[r_1,...,r_m]$ functions of $\pmb{q}$ that  in the thermodynamic limit depend only on $r_k$'s, the number of elements of $\pmb{q}$ in $W_k$ divided by $L$. By introducing $\Gamma(\alpha_1,...,\alpha_m;\beta_1,...,\beta_m)$ as in \eqref{alpha} with $\alpha$ replaced by $\alpha_k$ where $k$ is such that $\lambda,\mu\in W_k$, we get similarly
\begin{equation}
  \langle r_1^{i_1}...r_m^{i_m},r_1^{j_1}...r_m^{j_m}\rangle=\prod_{a=1}^m\left(\int_{ W_{a}} \rho_{1} \right)^{i_a}\left(\int_{ W_{a}} \rho_{2} \right)^{j_a}+\mathcal{O}(L^{-1})\,.
\end{equation}
Hence the Lemma holds whenever $F$, $G$ are functions of $r_1,...,r_m$ only. Since any regular functional of $\rho$ can be approximated with arbitrary precision by such a function provided $m$ is large enough, the Lemma holds for general $F[\rho]$ and $G[\rho]$.
\end{proof}

\subsection{Fredholm determinants}
\begin{property}[Generalized Cramer's rule]\label{cramer}
Let $A$ be an $N\times N$ matrix and $x^{i_1},...,x^{i_k}$ vectors of
size $N$. We define $B$ to be the matrix obtained from $A$ by replacing the
columns $i_1,...,i_k$ by $x^{i_1},...,x^{i_k}$. Then 
\begin{equation}\label{res}
\det B= \det Y \det A\,,
\end{equation}
with $Y$ the $k\times k$ matrix with entries
  $Y_{i_a,i_b}=y^{i_a}_{i_b}$, where the vector $y^{i_a}$ is a solution to 
\begin{equation}\label{system2}
Ay^{i_a}=x^{i_a}\,.
\end{equation}
\end{property}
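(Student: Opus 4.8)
The plan is to reduce the identity to the multiplicativity of the determinant. Throughout I will assume $A$ is invertible; this is implicit in the statement, since otherwise the solutions $y^{i_a}$ of \eqref{system2} are not uniquely determined and $Y$ is ill-defined (and in any case $\det A=0$ would trivialize the right-hand side).

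First I would introduce the $N\times N$ matrix $C$ obtained from the identity matrix by replacing, for each $a=1,\dots,k$, its $i_a$-th column by the vector $y^{i_a}$. The key observation is then that $B=AC$. Indeed, the $j$-th column of $AC$ equals $A$ applied to the $j$-th column of $C$: when $j\notin\{i_1,\dots,i_k\}$ this column of $C$ is the standard basis vector $e_j$, so $Ae_j$ is the $j$-th column of $A$, which is also the $j$-th column of $B$; when $j=i_a$ this column of $C$ is $y^{i_a}$, so by \eqref{system2} one gets $Ay^{i_a}=x^{i_a}$, which is precisely the $i_a$-th column of $B$. Hence $\det B=\det A\,\det C$.

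It then remains to identify $\det C$ with $\det Y$. Every column of $C$ indexed by the complement of $\{i_1,\dots,i_k\}$ is a standard basis vector, so applying the \emph{same} permutation to the rows and to the columns of $C$ (which leaves the determinant invariant, the two sign factors cancelling) brings $C$ into block lower-triangular form, with a $k\times k$ block $C'$ in the top-left corner and an $(N-k)\times(N-k)$ identity block in the bottom-right. Therefore $\det C=\det C'$, where $C'$ is the submatrix of $C$ on the rows and columns indexed by $\{i_1,\dots,i_k\}$. Reading off the entries, the row-$i_a$, column-$i_b$ entry of $C'$ is the $i_a$-th component of $y^{i_b}$, i.e. $y^{i_b}_{i_a}=Y_{i_b,i_a}$, so $C'=Y^{T}$ and $\det C=\det Y^{T}=\det Y$. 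Combining the two steps yields $\det B=\det Y\,\det A$.

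I expect no genuine obstacle here: the content is entirely elementary linear algebra, and the only points requiring a little care are verifying the column-by-column identity $B=AC$ and keeping track of the permutation and the transpose, both of which leave the determinant unchanged.
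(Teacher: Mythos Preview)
Your argument is correct. It differs from the paper's proof in packaging: the paper expands each replaced column as $x^{i_a}=\sum_j y^{i_a}_j C_j$ (with $C_j$ the columns of $A$) and invokes the multilinearity of the determinant directly, observing that only the terms with $\{j_1,\dots,j_k\}=\{i_1,\dots,i_k\}$ survive, which yields the sum over permutations defining $\det Y$. Your route instead builds the auxiliary matrix $C$, checks $B=AC$, and appeals once to the multiplicativity of the determinant, reducing the remaining work to computing $\det C$ by a block-triangular reduction. The two arguments are equivalent in content and difficulty; yours is slightly more conceptual (one matrix identity plus one structural observation), while the paper's is a more hands-on expansion that makes the emergence of $\det Y$ as a signed sum over $\mathfrak{S}_k$ explicit.
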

\begin{proof}
Denoting the columns of $A$ by $C_1,...,C_{N}$ Eq
\eqref{system2} reads
\begin{equation}
x^{i_a}=\sum_{j=1}^N y^{i_a}_jC_j\,.
\end{equation}
Using the multilinearity of the determinant, one has
\begin{equation}
\det B=\sum_{j_1,...,j_k=1}^N y^{i_1}_{j_1}...y^{i_k}_{j_k} \det A^{j_1,...,j_k}\,,
\end{equation}
where $A^{j_1,...,j_k}$ denotes the matrix obtained from $A$ by
replacing the columns $i_1,...,i_k$ by $C_{j_1},...,C_{j_k}$. Since its
determinant is non-zero only if $\{j_1,...,j_k\}=\{i_1,...,i_k\}$, we
obtain 
\begin{equation}
\begin{aligned}
\det B&=\sum_{\sigma\in\mathfrak{S}_k} y^{i_1}_{i_{\sigma(1)}}...y^{i_k}_{i_{\sigma(k)}} \det A^{i_{\sigma(1)},...,i_{\sigma(k)}}\\
&=\det Y\det A\,.
\end{aligned}
\end{equation}
\end{proof}

\begin{property}\label{fredh}
Let $f(\lambda,\mu)$ be a function of two variables, $J\subset \{1,...,L\}$
a set of $n$ indices, $(a_{ij})_{i,j\in J}$ $n^2$ numbers and
$(g_i(\mu))_{i\in J}$, $(h_j(\lambda))_{j\in J}$ $2n$ functions of a
single variable. Define an $L\times L$ matrix $A$ by
\begin{equation}
A_{ij}=\begin{cases}
\delta_{ij}+\frac{1}{L}f(\tfrac{i}{L},\tfrac{j}{L})& \text{\rm if }i,j\notin J\ ,\\
\tfrac{1}{L}g_j(\tfrac{i}{L})& \text{\rm if }j\in J,\, i\notin J\ ,\\
\tfrac{1}{L}h_i(\tfrac{j}{L})& \text{\rm if }i\in J,\, j\notin J\ ,\\
\tfrac{1}{L}a_{ij}& \text{if }i,j\in J \ .
\end{cases}
\end{equation}
Then in the limit $L\to\infty$ the following Fredholm determinant
  representaion holds
\begin{equation}\label{reslem10}
\det A=\frac{1}{L^n}\det\left[a_{ij}-\int_0^1\int_0^1 h_i(\lambda)g_j(\mu)\phi(\lambda,\mu)\D{\lambda}\D{\mu}\right]_{1\leq i,j\leq n} \cdot \Det[{\rm Id}+f]+o(L^{-n})\,,
%\det A=\frac{1}{L^n}\det\left[a_{ij}-\int_0^1\int_0^1 g_j(\lambda)\phi(\lambda,\mu)h_i(\mu)\D{\lambda}\D{\mu}\right]_{1\leq i,j\leq n} \cdot \det[{\rm Id}+f]+o(L^{-n})\,,
\end{equation}
where $\phi$ is the resolvent of the Fredholm equation
\begin{equation}\label{reso1}
\phi(\lambda,\mu)+\int_0^1 f(\lambda,\nu)\phi(\nu,\mu)\D{\nu}=\delta(\lambda-\mu)\,.
\end{equation}
\end{property}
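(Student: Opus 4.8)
The plan is to reduce $\det A$ to a $2\times2$ block determinant and then identify the three resulting factors with, respectively, the bulk Fredholm determinant $\Det[{\rm Id}+f]$, the resolvent $\phi$, and the finite $n\times n$ determinant in \eqref{reslem10}. Writing $\bar J=\{1,\dots,L\}\setminus J$ and ordering the indices so that $\bar J$ comes first, decompose
\begin{equation}
A=\begin{pmatrix}P&Q\\ R&S\end{pmatrix}\,,\qquad P_{ij}=\delta_{ij}+\tfrac1L f(\tfrac iL,\tfrac jL)\ \ (i,j\in\bar J)\,,
\end{equation}
with $Q_{ij}=\tfrac1L g_j(\tfrac iL)$, $R_{ij}=\tfrac1L h_i(\tfrac jL)$ and $S_{ij}=\tfrac1L a_{ij}$. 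Assuming the Fredholm equation \eqref{reso1} admits a (unique) resolvent $\phi$ — equivalently, that $-1$ is not an eigenvalue of the integral operator with kernel $f$ — the matrix $P$ is invertible for $L$ large, so the standard block-determinant identity $\det A=\det P\cdot\det(S-RP^{-1}Q)$ holds. (This may equivalently be reached by applying the generalized Cramer's rule of Lemma \ref{cramer} to pass from the full $L\times L$ matrix $\delta_{ij}+\tfrac1L f(\tfrac iL,\tfrac jL)$ to $A$ by replacing the $n$ rows and $n$ columns labelled by $J$.)

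Next I would treat the three pieces separately. First, $\det P$ is the quadrature approximation to $\Det[{\rm Id}+f]$ from \eqref{fredheq} with the $n$ fixed nodes $i/L$, $i\in J$, deleted; since $J$ is finite this deletion leaves the limit unchanged, so $\det P\to\Det[{\rm Id}+f]$. Second, expanding $P^{-1}=\sum_{m\ge0}(-\tfrac1L)^m F^m$ with $F_{kl}=f(\tfrac kL,\tfrac lL)$ and turning each of the $m-1$ internal sums into an integral yields $P^{-1}_{kl}=\delta_{kl}-\tfrac1L R(\tfrac kL,\tfrac lL)+o(\tfrac1L)$, where $R$ is the resolvent kernel, $R+fR=f$, i.e. $R=\delta-\phi$ with $\phi$ as in \eqref{reso1}. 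Inserting this into $(RP^{-1}Q)_{ij}=\sum_{k,l\in\bar J}\tfrac1L h_i(\tfrac kL)\,P^{-1}_{kl}\,\tfrac1L g_j(\tfrac lL)$ and once more replacing Riemann sums by integrals gives
\begin{equation}
\begin{aligned}
(RP^{-1}Q)_{ij}&=\frac1L\int_0^1\!\!\int_0^1 h_i(\lambda)g_j(\mu)\big(\delta(\lambda-\mu)-R(\lambda,\mu)\big)\D{\lambda}\D{\mu}+o(\tfrac1L)\\
&=\frac1L\int_0^1\!\!\int_0^1 h_i(\lambda)g_j(\mu)\phi(\lambda,\mu)\D{\lambda}\D{\mu}+o(\tfrac1L)\,.
\end{aligned}
\end{equation}
Hence $S-RP^{-1}Q=\tfrac1L\big[a_{ij}-\int_0^1\!\int_0^1 h_i(\lambda)g_j(\mu)\phi(\lambda,\mu)\D{\lambda}\D{\mu}\big]_{i,j\in J}+o(\tfrac1L)$ is an $n\times n$ matrix, so $\det(S-RP^{-1}Q)=L^{-n}\det[a_{ij}-\int\!\int h_i g_j\phi]+o(L^{-n})$. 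Multiplying the three factors gives \eqref{reslem10}.

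The algebra above (Neumann series, Riemann-sum limits, and the $\phi=\delta-R$ bookkeeping) is routine; the point that needs care in a fully rigorous write-up is the uniform error control — showing that the tail of the Neumann series together with all quadrature errors can be absorbed into a single $o(L^{-n})$, and that $P^{-1}$ remains bounded as $L\to\infty$. This is where the solvability of \eqref{reso1} (equivalently, invertibility of ${\rm Id}+f$) and the regularity of $f$, $g_i$ and $h_j$ enter. Under the implicit smoothness assumptions of the paper these are standard estimates, so in the write-up I would carry out the displayed computation and simply flag where the analytic hypotheses are used.
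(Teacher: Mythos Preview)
Your argument is correct and lands on the same three factors as the paper, but the organizing linear-algebra identity differs slightly. The paper does not use the Schur complement; instead it applies the generalized Cramer's rule (Lemma~\ref{cramer}) to write $\det A=\det A'\cdot\det X$, where $A'$ is $A$ with the \emph{columns} $j\in J$ restored to bulk form (so $A'$ still carries the special rows $h_i$), and the entries of the $n\times n$ matrix $X$ are obtained by solving the full $L\times L$ linear system $A'x^j=b^j$. In the thermodynamic limit the rows $i\notin J$ of that system become exactly the resolvent integral equation \eqref{reso1}, and the rows $i\in J$ then read off $X_{ij}=\tfrac{1}{L}\big(a_{ij}-\iint h_i g_j\phi\big)$. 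Your route isolates the $(L-n)\times(L-n)$ bulk block $P$ and inverts it via the Neumann/resolvent series, then multiplies by $R$ and $Q$; the paper's route keeps the special rows inside the auxiliary matrix and goes directly to an integral equation for $\phi$ without invoking a series, which sidesteps the convergence caveat you flag at the end. The two are of course equivalent (Schur complement and Cramer's rule being two faces of the same elimination), and your version is arguably more transparent about the origin of the overall $L^{-n}$.
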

\begin{proof}
Using Lemma \ref{cramer}, one has
\begin{equation}
\det A=\det A'\det X \ .
\end{equation}
Here $A'$ is the $L\times L$ matrix
\begin{equation}
A'_{ij}=\begin{cases}
\delta_{ij}+\frac{1}{L}f(\tfrac{i}{L},\tfrac{j}{L})& \text{if }i\notin
J\ ,\\
\tfrac{1}{L}h_i(\tfrac{j}{L})& \text{if }i\in J,\, j\notin J\ ,\\
\delta_{ij}+\frac{1}{L}f(\tfrac{i}{L},\tfrac{j}{L})& \text{if }i,j\in
J\ ,
\end{cases}
\end{equation}
and $X$ the $n\times n$ matrix whose entry $X_{ij}$ for $i,j\in J$ is
the $i$-th element of the solution $x^j$ to the linear system 
\begin{equation}
\label{system}
A'x^j=b^j\,,
\end{equation}
with the vector
\begin{equation}
b^j_i=\begin{cases}
\frac{1}{L}g_j(\tfrac{i}{L})& \text{if }i\notin J\ ,\\
\tfrac{1}{L}a_{ij}& \text{if }i\in J\ .
\end{cases}
\end{equation}
As $A'$ differs from the matrix with elements
$\delta_{ij}+\frac{1}{L}f(\tfrac{i}{L},\tfrac{j}{L})$ by only
off-diagonal terms on a finite number of rows in the thermodynamic
limit, one has the Fredholm determinant
\begin{equation}
\det A'=\Det[{\rm Id}+f]+o(L^0)\,.
\end{equation}
For $i\notin J$ the system \eqref{system} reads
\begin{equation}
x^j_i+\frac{1}{L}\sum_{k=1}^L f(\tfrac{i}{L},\tfrac{k}{L})x_k^j=\frac{1}{L}g_j(\tfrac{i}{L})\,.
\end{equation}
This equation allows one to describe the entries $x_i^j$ for $i\notin J$
by a function $x^j(\lambda)$ in the thermodynamic limit, since
$x_k^j$ for $k\in J$ appears a finite number of times with a factor
$\tfrac{1}{L}$. However, the entry $x_i^j$ for $i\in J$ can be
discontinuous in the $i$ direction in the thermodynamic limit. 
We thus obtain an integral equation for $x^j(\lambda)$ 
\begin{equation}
x^j(\lambda)+\int_0^1 f(\lambda,\nu)x^j(\nu)\D{\nu}=\frac{1}{L}g_j(\lambda)\,,
\end{equation}
whose solution is expressed as
\begin{equation}
x^j(\lambda)=\frac{1}{L}\int_0^1 \phi(\lambda,\mu)g_j(\mu)\D{\mu}\,,
%x^j(\mu)=\frac{1}{L}\int_0^1 g_j(\lambda)\phi(\lambda,\mu)\D{\lambda}\,.
\end{equation}
with the resolvent $\phi(\lambda,\mu)$ defined by
\begin{equation}
\phi(\lambda,\mu)+\int_0^1 f(\lambda,\nu)\phi(\nu,\mu)\D{\nu}=\delta(\lambda-\mu)\,,
\end{equation}
or equivalently
\begin{equation}\label{reso2}
\phi(\lambda,\mu)+\int_0^1 \phi(\lambda,\nu)f(\nu,\mu)\D{\nu}=\delta(\lambda-\mu)\,.
\end{equation}
Now, for $i,j\in J$ the system \eqref{system} reads
\begin{equation}
x^j_i+\frac{1}{L}\sum_{k=1}^L h_i(\tfrac{k}{L})x_k^j=\frac{1}{L}a_{ij}\,.
\end{equation}
In the thermodynamic limit, this yields for $i,j\in J$
\begin{equation}
x^j_i=\frac{1}{L}\left(a_{ij}-\int_0^1 \int_0^1 h_i(\lambda)g_j(\mu)\phi(\lambda,\mu)\D{\lambda}\D{\mu} \right)\,.
%x^j_i=\frac{1}{L}\left(a_{ij}-\int_0^1 \int_0^1 g_j(\lambda)\phi(\lambda,\mu)h_i(\mu)\D{\lambda}\D{\mu} \right)\,.
\end{equation}
This concludes the proof of the Lemma.

\end{proof}

\subsection{Miscellaneous}
\begin{property}\label{sum}
For $z$ complex and $0\leq \ell< L$ an integer, we have in finite size $L$
\begin{equation}\label{sum1}
\sum_{\nu\in{\rm R}}\frac{e^{i(\ell+1)\nu}}{e^{i\nu}-z}=L\frac{z^{\ell}}{1-z^L}\,,
\end{equation}
and
\begin{equation}\label{sum2}
\sum_{\nu\in{\rm NS}}\frac{e^{i(\ell+1)\nu}}{e^{i\nu}-z}=L\frac{z^{\ell}}{1+z^L}\,.
\end{equation}
\end{property}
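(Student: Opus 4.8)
The plan is to view both identities as the specialization, at two values of a sign parameter $\epsilon=\pm 1$, of a single partial-fraction identity. Writing $w=e^{i\nu}$, note that as $\nu$ ranges over ${\rm R}$ the point $w$ ranges over the $L$ distinct roots of $w^L=1$, while as $\nu$ ranges over ${\rm NS}$ it ranges over the $L$ distinct roots of $w^L=-1$; in both cases $e^{i(\ell+1)\nu}=w^{\ell+1}$ and $e^{i\nu}-z=w-z$. So it suffices to prove that for $0\le\ell<L$ and $w_1,\dots,w_L$ the (simple) zeros of $w^L-\epsilon$,
\begin{equation}
\sum_{j=1}^L\frac{w_j^{\ell+1}}{w_j-z}=\frac{L\,z^\ell}{1-\epsilon z^L}\ .
\end{equation}

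First I would write down the partial-fraction decomposition of the rational function $R(w)=w^\ell/(w^L-\epsilon)$. Since $0\le\ell<L$ the numerator has degree strictly smaller than the denominator, so $R$ has no polynomial part; and since $w^L-\epsilon$ has the $L$ distinct simple roots $w_j$, one gets
\begin{equation}
\frac{w^\ell}{w^L-\epsilon}=\sum_{j=1}^L\frac{c_j}{w-w_j}\ ,\qquad c_j=\frac{w_j^\ell}{L\,w_j^{L-1}}=\frac{w_j^{\ell+1}}{L\,w_j^{L}}=\frac{w_j^{\ell+1}}{L\,\epsilon}\ ,
\end{equation}
where I used $w_j^L=\epsilon$. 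Note that the hypothesis $\ell<L$ enters exactly here: it is what guarantees the partial-fraction expansion is purely principal-part.

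Next I would evaluate this identity of rational functions at $w=z$ (legitimate for every $z$ that is not one of the $w_j$, and since both sides are rational in $z$ it then holds wherever both sides are defined). This gives $z^\ell/(z^L-\epsilon)=(L\epsilon)^{-1}\sum_j w_j^{\ell+1}/(z-w_j)=-(L\epsilon)^{-1}\sum_j w_j^{\ell+1}/(w_j-z)$, hence $\sum_j w_j^{\ell+1}/(w_j-z)=-L\epsilon\,z^\ell/(z^L-\epsilon)=L z^\ell/(1-\epsilon z^L)$, using $\epsilon^2=1$. Taking $\epsilon=+1$ recovers \eqref{sum1}, and $\epsilon=-1$ recovers \eqref{sum2}.

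I do not anticipate a genuine obstacle: the argument is elementary. The only two points needing care are (i) invoking the hypothesis $0\le\ell<L$ at the right moment, since this is precisely what makes the partial-fraction expansion of $w^\ell/(w^L-\epsilon)$ have no polynomial term, and (ii) the bookkeeping of the two root sets ($w^L=1$ for ${\rm R}$ versus $w^L=-1$ for ${\rm NS}$) together with the correct sign $\epsilon$. A completely equivalent alternative, which I would mention but not prefer, is to expand $w^{\ell+1}/(w-z)=\sum_{m\ge 0}z^m w^{\ell-m}$ for $|z|<1$, use the orthogonality relations $\sum_{w^L=1}w^{\ell-m}=L\,\delta_{L\mid(\ell-m)}$ and $\sum_{w^L=-1}w^{\ell-m}=L(-1)^{(\ell-m)/L}\,\delta_{L\mid(\ell-m)}$, re-sum the resulting geometric series in $z^L$, and extend to all $z$ by analyticity; the partial-fraction route is shorter because it avoids the convergence and continuation steps.
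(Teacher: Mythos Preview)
Your proof is correct and takes a genuinely different route from the paper. The paper proves \eqref{sum1} by expanding $e^{i(\ell+1)\nu}/(e^{i\nu}-z)$ as a geometric series in $z$ for $|z|<1$, using the orthogonality relation $\sum_{\nu\in{\rm R}}e^{i(\ell-m)\nu}=L\,\1_{L\mid(\ell-m)}$, resumming the resulting geometric series in $z^L$, and then invoking analytic continuation to all $z$; \eqref{sum2} is subsequently obtained from \eqref{sum1} via the shift ${\rm NS}={\rm R}+\pi/L$. Your argument instead writes down the partial-fraction decomposition of $w^\ell/(w^L-\epsilon)$ and evaluates it at $w=z$, which handles both sectors at once through the sign $\epsilon$ and bypasses the convergence and continuation steps entirely. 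The hypothesis $0\le\ell<L$ is used in both proofs, but in different places: in the paper it ensures that only nonnegative $k$ contribute in $m=\ell+kL$, while in your approach it is what kills the polynomial part of the partial-fraction expansion. Your route is shorter and more uniform; the paper's route is the ``equivalent alternative'' you yourself sketch at the end.
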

\begin{proof}
Let us start with \eqref{sum1}. The left-hand side is a meromorphic function of $z$ with simple poles in $e^{i{\rm R}}$, i.e. for $z^L=1$. The full result can thus be obtained with analytic continuation from $|z|<1$. In this region, one has
\begin{equation}
\sum_{\nu\in{\rm R}}\frac{e^{i(\ell+1)\nu}}{e^{i\nu}-z}=\sum_{m\geq 0}z^m\sum_{\nu\in{\rm R}}e^{i(\ell-m)\nu}\,.
\end{equation}
If $\ell-m$ is a multiple of $L$, then the sum over $\nu$ is $L$, and otherwise the sum vanishes. Hence
\begin{equation}
\begin{aligned}
\sum_{\nu\in{\rm R}}\frac{e^{i(\ell+1)\nu}}{e^{i\nu}-z}&=\sum_{k\geq 0} z^{\ell+kL}L\\
&=L\frac{z^{\ell}}{1-z^L}\,.
\end{aligned}
\end{equation}
Now, to show \eqref{sum2} we write ${\rm NS}={\rm R}+\tfrac{\pi}{L}$ to obtain
\begin{equation}
\sum_{\nu\in{\rm NS}}\frac{e^{i(\ell+1)\nu}}{e^{i\nu}-z}=e^{i\ell \pi/L}\sum_{\nu\in{\rm R}}\frac{e^{i(\ell+1)\nu}}{e^{i\nu}-ze^{-i\pi/L}}\,.
\end{equation}
Using \eqref{sum1}, we then obtain \eqref{sum2}.
\end{proof}

\end{document}